\newif\iflong\longtrue
\newif\iflonglong\longlongtrue
\author[1]{Jaroslav Garvardt}
\author[2]{Niels Grüttemeier}
\author[1]{Christian Komusiewicz}
\author[1]{Nils Morawietz}
\affil[1]{Friedrich Schiller University Jena, Institute of Computer Science, Germany}
\affil[2]{Fraunhofer Institute of Optronics, System Technologies and Image Exploitation, Germany}
\date{\{jaroslav.garvardt, c.komusiewicz, nils.morawietz\}@uni-jena.de \\
niels.gruettemeier@iosb-ina.fraunhofer.de
}
\newcommand{\dflip}{d_\mathrm{flip}}
\newcommand{\Dflip}{D_\mathrm{flip}}
\newtheorem{definition}{Definition}
\newtheorem{proposition}{Proposition}
\newtheorem{claim}{Claim}
\newtheorem{theorem}{Theorem}
\newtheorem{lemma}{Lemma}
\newtheorem{corollary}{Corollary}
\newcommand{\Oh}{\ensuremath{\mathcal{O}}}
\newenvironment{claimproof}{{\noindent\textit{Proof.}}}{\hfill$\blacksquare$}
\newcommand{\todom}[2][]{\todo[#1,color=green!50]{#2}}
\newcommand{\todomi}[1]{\todom[inline]{#1}}
\definecolor{myred}{rgb}{1,0.25,0.25}
\newcommand{\pF}[3]{\ensuremath{#1_{#3}}}
\newcommand{\prob}[3]{\begin{quote}  \textsc{#1}\\  \textbf{Input:} #2\\  \textbf{Question:} #3\end{quote}}
\newcommand{\W}[1]{\ensuremath{\mathrm{W}[#1]}\xspace}
\newcommand\NP{\ensuremath{\mathrm{NP}}\xspace}
\newcommand\Ptime{\ensuremath{\mathrm{P}}\xspace}
\newcommand\FPT{\ensuremath{\mathrm{FPT}}\xspace}
\newcommand{\IS}{\textsc{Independent Set}\xspace}
\newcommand{\GGC}{\textsc{Max~$c$-Cut}\xspace}
\newcommand{\MC}{\textsc{Max Cut}\xspace}
\newcommand{\LGGClong}{\textsc{LS~Max~$c$-Cut}\xspace}
\newcommand{\LGGC}[1][c]{\textsc{LS Max~$#1$-Cut}\xspace}
\newcommand{\LVC}{\textsc{LS-Vertex Cover}\xspace}
\newcommand{\CL}{\textsc{Clique}\xspace}
\newcommand{\LMC}{\textsc{LS Max Cut}\xspace}
\newcommand{\colo}{\chi}
\newcommand{\tabMiniGGC}{
  \begin{tabular}{lrr|r|r||rr}
data &$|V|$&$|E|$& MOH & LS & ILP & UB  \\\hline
g11 & 800 & 1\,600 &  669 &  ---  & \bfseries  671 & \bfseries  671  \\
g12 & 800 & 1\,600 &  660 & 661 & \bfseries  663 & \bfseries  663  \\
g13 & 800 & 1\,600 &  686 & 687 & \bfseries  688 & \bfseries  688  \\
g15 & 800 & 4\,661 &  3\,984 & \bfseries 3\,985 & \bfseries  3\,985 &  4\,442  \\
g24 & 2\,000 & 19\,990 &  17\,162 & \bfseries 17\,163 &  --- &  19\,989  \\
g25 & 2\,000 & 19\,990 &  17\,163 & \bfseries 17\,164 &  --- &  19\,989  \\
g26 & 2\,000 & 19\,990 &  17\,154 & \bfseries 17\,155 &  --- &  19\,989  \\
g27 & 2\,000 & 19\,990 &  4\,020 & \bfseries 4\,021 &  --- &  9\,840  \\
g28 & 2\,000 & 19\,990 &  3\,973 & \bfseries 3\,975 &  --- &  9\,822  \\
g31 & 2\,000 & 19\,990 &  4\,003 & \bfseries 4\,005 &  --- &  9\,776  \\
g32 & 2\,000 & 4\,000 &  1\,653 & 1658 & \bfseries  1\,666 &  1\,668  \\
g33 & 2\,000 & 4\,000 &  1\,625 & 1628 & \bfseries  1\,636 &  1\,640  \\
g34 & 2\,000 & 4\,000 &  1\,607 & 1609 & \bfseries  1\,616 &  1\,617  \\
g35 & 2\,000 & 11\,778 &  10\,046 & \bfseries 10\,048 &  --- &  11\,711  \\
g37 & 2\,000 & 11\,785 &  10\,052 & \bfseries 10\,053 & \bfseries  10\,053 &  11\,691  \\
g40 & 2\,000 & 11\,766 &  2\,870 & \bfseries 2\,871 &  --- &  5\,471  \\
g41 & 2\,000 & 11\,785 &  2\,887 & \bfseries 2\,888 &  --- &  5\,452  \\
g48 & 3\,000 & 6\,000 & \bfseries  6\,000 &  ---  &  --- & \bfseries  6\,000  \\
g49 & 3\,000 & 6\,000 & \bfseries  6\,000 &  ---  &  --- & \bfseries  6\,000  \\
g50 & 3\,000 & 6\,000 & \bfseries  6\,000 &  ---  &  --- & \bfseries  6\,000  \\
g55 & 5\,000 & 12\,498 &  12\,427 & 12\,429 & \bfseries  12\,432 &  12\,498  \\
g56 & 5\,000 & 12\,498 &  4\,755 & \bfseries 4\,757 &  --- &  6\,157  \\
g57 & 5\,000 & 10\,000 &  4\,080 & 4092 & \bfseries  4\,103 &  4\,154  \\
g59 & 5\,000 & 29\,570 &  7\,274 & \bfseries 7\,276 &  --- &  14\,673  \\
g61 & 7\,000 & 17\,148 &  6\,858 & \bfseries 6\,861 &  --- &  8\,728  \\
g62 & 7\,000 & 14\,000 &  5\,686 & \bfseries 5\,710 &  5\,706 &  5\,981  \\
g63 & 7\,000 & 41\,459 &  35\,315 & \bfseries 35\,318 &  --- &  41\,420  \\
g64 & 7\,000 & 41\,459 &  10\,429 & \bfseries 10\,437 &  --- &  20\,713  \\
g65 & 8\,000 & 16\,000 &  6\,489 & 6\,512 & \bfseries  6\,535 &  6\,711  \\
g66 & 9\,000 & 18\,000 &  7\,414 & 7\,442 & \bfseries  7\,443 &  7\,843  \\
g67 & 10\,000 & 20\,000 &  8\,088 & 8\,116 & \bfseries  8\,141 &  9\,080  \\
g70 & 10\,000 & 9\,999 & \bfseries  9\,999 &  ---  &  --- & \bfseries  9\,999  \\
g72 & 10\,000 & 20\,000 &  8\,190 & 8\,224 & \bfseries  8\,244 &  9\,166  \\
g77 & 14\,000 & 28\,000 &  11\,579 & \bfseries 11\,632 &  11\,619 &  13\,101  \\
g81 & 20\,000 & 40\,000 &  16\,326 & \bfseries 16\,392 &  16\,374 &  18\,337  \\
\end{tabular}
}
  \newcommand{\tabTwo}{
\begin{tabular}{lrr||r|r||rr||rr||rr}
data &n&m & MOH & LS & ILP & UB& ILP$_{1}$ & UB$_{1}$& ILP$_{2}$ & UB$_{2}$  \\\hline
g1 & 800 & 19176 & \bfseries  11624 & --- &  --- &  16188 &  --- &  16188 &  --- &  16225 \\
g2 & 800 & 19176 & \bfseries  11620 & --- &  --- &  15915 &  --- &  15915 &  --- &  16254 \\
g3 & 800 & 19176 & \bfseries  11622 & --- &  --- &  15766 &  --- &  15766 &  --- &  16058 \\
g4 & 800 & 19176 & \bfseries  11646 & --- &  --- &  16059 &  --- &  16217 &  --- &  16059 \\
g5 & 800 & 19176 & \bfseries  11631 & --- &  --- &  16182 &  --- &  16182 &  --- &  16220 \\
g6 & 800 & 19176 & \bfseries  2178 & --- &  --- &  6387 &  --- &  6627 &  --- &  6387 \\
g7 & 800 & 19176 & \bfseries  2006 & --- &  --- &  6225 &  --- &  6339 &  --- &  6225 \\
g8 & 800 & 19176 & \bfseries  2005 & --- &  --- &  6209 &  --- &  6209 &  --- &  6215 \\
g9 & 800 & 19176 & \bfseries  2054 & --- &  --- &  6106 &  --- &  6106 &  --- &  6379 \\
g10 & 800 & 19176 & \bfseries  2000 & --- &  --- &  6315 &  --- &  6315 &  --- &  6322 \\
g11 & 800 & 1600 & \bfseries  564 & --- &  --- & \bfseries  564 &  --- & \bfseries  564 &  --- & \bfseries  564 \\
g12 & 800 & 1600 & \bfseries  556 & --- &  --- & \bfseries  556 &  --- & \bfseries  556 &  --- & \bfseries  556 \\
g13 & 800 & 1600 & \bfseries  582 & --- &  --- & \bfseries  582 &  --- & \bfseries  582 &  --- & \bfseries  582 \\
g14 & 800 & 4694 & \bfseries  3064 & --- &  --- &  3158 &  --- &  3158 &  --- &  3158 \\
g15 & 800 & 4661 & \bfseries  3050 & --- &  --- &  3139 &  --- &  3148 &  --- &  3139 \\
g16 & 800 & 4672 & \bfseries  3052 & --- &  --- &  3144 &  --- &  3144 &  --- &  3148 \\
g17 & 800 & 4667 & \bfseries  3047 & --- &  --- &  3144 &  --- &  3144 &  --- &  3153 \\
g18 & 800 & 4694 & \bfseries  992 & --- &  --- &  1137 &  --- &  1140 &  --- &  1137 \\
g19 & 800 & 4661 & \bfseries  906 & --- &  --- &  1044 &  --- &  1046 &  --- &  1044 \\
g20 & 800 & 4672 & \bfseries  941 & --- &  --- &  1069 &  --- &  1074 &  --- &  1069 \\
g21 & 800 & 4667 & \bfseries  931 & --- &  --- &  1072 &  --- &  1074 &  --- &  1072 \\
g22 & 2000 & 19990 & \bfseries  13359 & --- &  --- &  17677 &  --- &  17888 &  --- &  17677 \\
g24 & 2000 & 19990 & \bfseries  13337 & --- &  --- &  17905 &  --- &  17905 &  --- &  18070 \\
g25 & 2000 & 19990 & \bfseries  13340 & --- &  --- &  17993 &  --- &  17993 &  --- &  18049 \\
g26 & 2000 & 19990 & \bfseries  13328 & --- &  --- &  17744 &  --- &  18236 &  --- &  17744 \\
g27 & 2000 & 19990 & \bfseries  3341 & --- &  --- &  8273 &  --- &  8394 &  --- &  8273 \\
g28 & 2000 & 19990 & \bfseries  3298 & --- &  --- &  7505 &  --- &  8194 &  --- &  7505 \\
g29 & 2000 & 19990 & \bfseries  3405 & --- &  --- &  7594 &  --- &  7594 &  --- &  8382 \\
g30 & 2000 & 19990 & \bfseries  3413 & --- &  --- &  8033 &  --- &  8033 &  --- &  8354 \\
g31 & 2000 & 19990 & \bfseries  3310 & --- &  --- &  7688 &  --- &  8253 &  --- &  7688 \\
g32 & 2000 & 4000 & \bfseries  1410 & --- &  --- & \bfseries  1410 &  --- & \bfseries  1410 &  --- & \bfseries  1410 \\
g33 & 2000 & 4000 & \bfseries  1382 & --- &  --- & \bfseries  1382 &  --- & \bfseries  1382 &  --- & \bfseries  1382 \\
g34 & 2000 & 4000 & \bfseries  1384 & --- &  --- & \bfseries  1384 &  --- & \bfseries  1384 &  --- & \bfseries  1384 \\
g35 & 2000 & 11778 & \bfseries  7687 & --- &  --- &  8985 &  --- &  9242 &  --- &  8985 \\
g36 & 2000 & 11766 & \bfseries  7680 & --- &  --- &  9125 &  --- &  9199 &  --- &  9125 \\
g37 & 2000 & 11785 & \bfseries  7691 & --- &  --- &  9056 &  --- &  9056 &  --- &  9265 \\
g38 & 2000 & 11779 & \bfseries  7688 & --- &  --- &  8923 &  --- &  8923 &  --- &  9130 \\
g39 & 2000 & 11778 & \bfseries  2408 & --- &  --- &  3214 &  --- &  3254 &  --- &  3214 \\
g40 & 2000 & 11766 & \bfseries  2400 & --- &  --- &  3157 &  --- &  3157 &  --- &  3218 \\
g41 & 2000 & 11785 & \bfseries  2405 & --- &  --- &  3196 &  --- &  3196 &  --- &  3199 \\
g42 & 2000 & 11779 & \bfseries  2481 & --- &  --- &  3228 &  --- &  3228 &  --- &  3229 \\
g43 & 1000 & 9990 & \bfseries  6660 & --- &  --- &  8055 &  --- &  8264 &  --- &  8055 \\
g44 & 1000 & 9990 & \bfseries  6650 & --- &  --- &  8228 &  --- &  8228 &  --- &  8287 \\
g45 & 1000 & 9990 & \bfseries  6654 & --- &  --- &  8146 &  --- &  8182 &  --- &  8146 \\
g46 & 1000 & 9990 & \bfseries  6649 & --- &  --- &  8148 &  --- &  8168 &  --- &  8148 \\
g47 & 1000 & 9990 & \bfseries  6657 & --- &  --- &  8075 &  --- &  8146 &  --- &  8075 \\
g48 & 3000 & 6000 & \bfseries  6000 & --- &  --- & \bfseries  6000 &  --- & \bfseries  6000 &  --- & \bfseries  6000 \\
g49 & 3000 & 6000 & \bfseries  6000 & --- &  --- & \bfseries  6000 &  --- & \bfseries  6000 &  --- & \bfseries  6000 \\
g50 & 3000 & 6000 & \bfseries  5880 & --- &  --- & \bfseries  5880 &  --- & \bfseries  5880 &  --- & \bfseries  5880 \\
g51 & 1000 & 5909 & \bfseries  3848 & --- &  --- &  3992 &  --- &  4160 &  --- &  3992 \\
g52 & 1000 & 5916 & \bfseries  3851 & --- &  --- &  4095 &  --- &  4111 &  --- &  4095 \\
g53 & 1000 & 5914 & \bfseries  3850 & --- &  --- &  3971 &  --- &  3971 &  --- &  4090 \\
g54 & 1000 & 5916 & \bfseries  3852 & --- &  --- &  4073 &  --- &  4073 &  --- &  4082 \\
g55 & 5000 & 12498 & \bfseries  10299 & --- &  --- &  11692 &  --- &  11692 &  --- &  11755 \\
g56 & 5000 & 12498 & \bfseries  4016 & --- &  --- &  5378 &  --- &  5378 &  --- &  5418 \\
g57 & 5000 & 10000 & \bfseries  3494 & --- &  --- & \bfseries  3494 &  --- & \bfseries  3494 &  --- & \bfseries  3494 \\
g58 & 5000 & 29570 &  19289 & \bfseries 19290 & \bfseries  19290 &  24833 &  --- &  24833 & \bfseries  19290 &  25197 \\
g59 & 5000 & 29570 & \bfseries  6086 & --- &  --- &  10372 &  --- &  10372 &  --- &  10577 \\
g60 & 7000 & 17148 & \bfseries  14190 & --- &  --- &  16528 &  --- &  16547 &  --- &  16528 \\
g61 & 7000 & 17148 & \bfseries  5797 & --- &  --- &  8144 &  --- &  8144 &  --- &  8199 \\
g62 & 7000 & 14000 &  4868 & 4870 & \bfseries  4872 & \bfseries  4872 & \bfseries  4872 & \bfseries  4872 & \bfseries  4872 & \bfseries  4872 \\
g63 & 7000 & 41459 &  27033 & \bfseries 27037 &  --- &  35046 &  --- &  35046 &  --- &  35703 \\
g64 & 7000 & 41459 &  8747 & \bfseries 8748 &  --- &  15137 &  --- &  15137 &  --- &  15929 \\
g65 & 8000 & 16000 &  5560 & --- & \bfseries  5562 &  5568 &  --- &  5568 & \bfseries  5562 &  5568 \\
g66 & 9000 & 18000 &  6360 & --- & \bfseries  6364 &  6368 & \bfseries  6364 &  6368 & \bfseries  6364 &  6369 \\
g67 & 10000 & 20000 &  6942 & --- & \bfseries  6948 &  6952 &  --- &  6957 & \bfseries  6948 &  6952 \\
g70 & 10000 & 9999 &  9544 & 9551 & \bfseries  9575 &  9714 &  --- &  9714 & \bfseries  9575 &  9723 \\
g72 & 10000 & 20000 &  6998 & --- & \bfseries  7004 &  7013 & \bfseries  7004 &  7013 &  7002 &  7014 \\
g77 & 14000 & 28000 & \bfseries  9928 & --- &  --- &  9948 &  --- &  9948 &  --- &  9951 \\
g81 & 20000 & 40000 &  14036 & --- & \bfseries  14044 &  14078 &  --- &  14078 & \bfseries  14044 &  14080 \\
\end{tabular}
}
  \newcommand{\tabThree}{
\begin{tabular}{lrr||r|r||rr||rr||rr}
data &n&m& MOH & LS & ILP & UB& ILP$_{1}$ & UB$_{1}$& ILP$_{2}$ & UB$_{2}$  \\\hline
g1 & 800 & 19176 & \bfseries  15165 & --- &  --- &  19148 &  --- &  19148 &  --- &  19159 \\
g2 & 800 & 19176 & \bfseries  15172 & --- &  --- &  19160 &  --- &  19160 &  --- &  19160 \\
g3 & 800 & 19176 & \bfseries  15173 & --- &  --- &  19134 &  --- &  19134 &  --- &  19158 \\
g4 & 800 & 19176 & \bfseries  15184 & --- &  --- &  19117 &  --- &  19135 &  --- &  19117 \\
g5 & 800 & 19176 & \bfseries  15193 & --- &  --- &  19146 &  --- &  19146 &  --- &  19164 \\
g6 & 800 & 19176 & \bfseries  2632 & --- &  --- &  9441 &  --- &  9453 &  --- &  9441 \\
g7 & 800 & 19176 & \bfseries  2409 & --- &  --- &  9242 &  --- &  9253 &  --- &  9242 \\
g8 & 800 & 19176 & \bfseries  2428 & --- &  --- &  9228 &  --- &  9228 &  --- &  9230 \\
g9 & 800 & 19176 & \bfseries  2478 & --- &  --- &  9261 &  --- &  9275 &  --- &  9261 \\
g10 & 800 & 19176 & \bfseries  2407 & --- &  --- &  9233 &  --- &  9233 &  --- &  9267 \\
g11 & 800 & 1600 &  669 & --- & \bfseries  671 & \bfseries  671 & \bfseries  671 & \bfseries  671 & \bfseries  671 &  674 \\
g12 & 800 & 1600 &  660 & 661 & \bfseries  663 & \bfseries  663 & \bfseries  663 & \bfseries  663 & \bfseries  663 & \bfseries  663 \\
g13 & 800 & 1600 &  686 & 687 & \bfseries  688 & \bfseries  688 & \bfseries  688 & \bfseries  688 & \bfseries  688 & \bfseries  688 \\
g14 & 800 & 4694 & \bfseries  4012 & --- &  --- &  4497 &  --- &  4497 &  --- &  4510 \\
g15 & 800 & 4661 &  3984 & \bfseries 3985 & \bfseries  3985 &  4442 &  --- &  4442 & \bfseries  3985 &  4474 \\
g16 & 800 & 4672 & \bfseries  3990 & --- &  --- &  4458 &  --- &  4465 &  --- &  4458 \\
g17 & 800 & 4667 & \bfseries  3983 & --- &  --- &  4413 &  --- &  4413 &  --- &  4457 \\
g18 & 800 & 4694 & \bfseries  1207 & --- &  --- &  1962 &  --- &  1962 &  --- &  1991 \\
g19 & 800 & 4661 & \bfseries  1081 & --- &  --- &  1833 &  --- &  1859 &  --- &  1833 \\
g20 & 800 & 4672 & \bfseries  1122 & --- &  --- &  1888 &  --- &  1888 &  --- &  1889 \\
g21 & 800 & 4667 & \bfseries  1109 & --- &  --- &  1827 &  --- &  1827 &  --- &  1875 \\
g22 & 2000 & 19990 & \bfseries  17167 & --- &  --- &  19989 &  --- &  19989 &  --- &  19989 \\
g24 & 2000 & 19990 &  17162 & \bfseries 17163 &  --- &  19989 &  --- &  19989 &  --- &  19989 \\
g25 & 2000 & 19990 &  17163 & \bfseries 17164 &  --- &  19989 &  --- &  19989 &  --- &  19989 \\
g26 & 2000 & 19990 &  17154 & \bfseries 17155 &  --- &  19989 &  --- &  19990 &  --- &  19989 \\
g27 & 2000 & 19990 &  4020 & \bfseries 4021 &  --- &  9840 &  --- &  9841 &  --- &  9840 \\
g28 & 2000 & 19990 &  3973 & \bfseries 3975 &  --- &  9822 &  --- &  9827 &  --- &  9822 \\
g29 & 2000 & 19990 & \bfseries  4106 & --- &  --- &  9947 &  --- &  9948 &  --- &  9947 \\
g30 & 2000 & 19990 & \bfseries  4117 & --- &  --- &  9929 &  --- &  9929 &  --- &  9933 \\
g31 & 2000 & 19990 &  4003 & \bfseries 4005 &  --- &  9776 &  --- &  9861 &  --- &  9776 \\
g32 & 2000 & 4000 &  1653 & 1658 & \bfseries  1666 &  1668 & \bfseries  1666 &  1668 &  1664 &  1670 \\
g33 & 2000 & 4000 &  1625 & 1628 & \bfseries  1636 &  1640 & \bfseries  1636 &  1640 & \bfseries  1636 &  1640 \\
g34 & 2000 & 4000 &  1607 & 1609 & \bfseries  1616 &  1617 & \bfseries  1616 &  1617 &  1615 &  1618 \\
g35 & 2000 & 11778 &  10046 & \bfseries 10048 &  --- &  11711 &  --- &  11711 &  --- &  11714 \\
g36 & 2000 & 11766 & \bfseries  10039 & --- &  --- &  11702 &  --- &  11702 &  --- &  11703 \\
g37 & 2000 & 11785 &  10052 & \bfseries 10053 & \bfseries  10053 &  11691 &  --- &  11753 & \bfseries  10053 &  11691 \\
g38 & 2000 & 11779 & \bfseries  10040 & --- &  --- &  11703 &  --- &  11745 &  --- &  11703 \\
g39 & 2000 & 11778 & \bfseries  2903 & --- &  --- &  5457 &  --- &  5457 &  --- &  5551 \\
g40 & 2000 & 11766 &  2870 & \bfseries 2871 &  --- &  5471 &  --- &  5471 &  --- &  5471 \\
g41 & 2000 & 11785 &  2887 & \bfseries 2888 &  --- &  5452 &  --- &  5472 &  --- &  5452 \\
g42 & 2000 & 11779 & \bfseries  2980 & --- &  --- &  5551 &  --- &  5567 &  --- &  5551 \\
g43 & 1000 & 9990 & \bfseries  8573 & --- &  --- &  9985 &  --- &  9985 &  --- &  9988 \\
g44 & 1000 & 9990 & \bfseries  8571 & --- &  --- &  9957 &  --- &  9957 &  --- &  9980 \\
g45 & 1000 & 9990 & \bfseries  8566 & --- &  --- &  9983 &  --- &  9983 &  --- &  9986 \\
g46 & 1000 & 9990 & \bfseries  8568 & --- &  --- &  9983 &  --- &  9985 &  --- &  9983 \\
g47 & 1000 & 9990 & \bfseries  8572 & --- &  --- &  9966 &  --- &  9966 &  --- &  9983 \\
g48 & 3000 & 6000 & \bfseries  6000 & --- &  --- & \bfseries  6000 &  --- & \bfseries  6000 &  --- & \bfseries  6000 \\
g49 & 3000 & 6000 & \bfseries  6000 & --- &  --- & \bfseries  6000 &  --- & \bfseries  6000 &  --- & \bfseries  6000 \\
g50 & 3000 & 6000 & \bfseries  6000 & --- &  --- & \bfseries  6000 &  --- & \bfseries  6000 &  --- & \bfseries  6000 \\
g51 & 1000 & 5909 & \bfseries  5037 & --- &  --- &  5708 &  --- &  5712 &  --- &  5708 \\
g52 & 1000 & 5916 & \bfseries  5040 & --- &  --- &  5703 &  --- &  5703 &  --- &  5726 \\
g53 & 1000 & 5914 & \bfseries  5039 & --- &  --- &  5694 &  --- &  5694 &  --- &  5746 \\
g54 & 1000 & 5916 & \bfseries  5036 & --- &  --- &  5667 &  --- &  5667 &  --- &  5682 \\
g55 & 5000 & 12498 &  12427 & 12429 & \bfseries  12432 &  12498 &  --- &  12498 & \bfseries  12432 &  12498 \\
g56 & 5000 & 12498 &  4755 & \bfseries 4757 &  --- &  6157 &  --- &  6157 &  --- &  6176 \\
g57 & 5000 & 10000 &  4080 & 4092 & \bfseries  4103 &  4154 &  --- &  4154 & \bfseries  4103 &  4176 \\
g58 & 5000 & 29570 & \bfseries  25195 & --- &  --- &  29556 &  --- &  29556 &  --- &  29560 \\
g59 & 5000 & 29570 &  7274 & \bfseries 7276 &  --- &  14673 &  --- &  14673 &  --- &  14678 \\
g60 & 7000 & 17148 & \bfseries  17075 & --- &  --- &  17148 &  --- &  17148 &  --- &  17148 \\
g61 & 7000 & 17148 &  6858 & \bfseries 6861 &  --- &  8728 &  --- &  8728 &  --- &  8735 \\
g62 & 7000 & 14000 &  5686 & \bfseries 5710 &  5706 &  5981 &  --- &  6033 &  5706 &  5981 \\
g63 & 7000 & 41459 &  35315 & \bfseries 35318 &  --- &  41420 &  --- &  41420 &  --- &  41435 \\
g64 & 7000 & 41459 &  10429 & \bfseries 10437 &  --- &  20713 &  --- &  20747 &  --- &  20713 \\
g65 & 8000 & 16000 &  6489 & 6512 & \bfseries  6535 &  6711 &  --- &  6711 & \bfseries  6535 &  6970 \\
g66 & 9000 & 18000 &  7414 & 7442 & \bfseries  7443 &  7843 &  --- &  7843 & \bfseries  7443 &  8246 \\
g67 & 10000 & 20000 &  8088 & 8116 & \bfseries  8141 &  9080 &  --- &  9089 & \bfseries  8141 &  9080 \\
g70 & 10000 & 9999 & \bfseries  9999 & --- &  --- & \bfseries  9999 &  --- & \bfseries  9999 &  --- & \bfseries  9999 \\
g72 & 10000 & 20000 &  8190 & 8224 & \bfseries  8244 &  9166 &  --- &  9243 & \bfseries  8244 &  9166 \\
g77 & 14000 & 28000 &  11579 & \bfseries 11632 &  11619 &  13101 &  --- &  13101 &  11619 &  13104 \\
g81 & 20000 & 40000 &  16326 & \bfseries 16392 &  16374 &  18337 &  --- &  18515 &  16374 &  18337 \\
\end{tabular}
}
  \newcommand{\tabFour}{
\begin{tabular}{lrr||r|r||rr||rr||rr}
data &n&m& MOH & LS & ILP & UB& ILP$_{1}$ & UB$_{1}$& ILP$_{2}$ & UB$_{2}$  \\\hline
g1 & 800 & 19176 & \bfseries  16803 & --- &  --- &  19176 &  --- &  19176 &  --- &  19176 \\
g2 & 800 & 19176 & \bfseries  16809 & --- &  --- &  19176 &  --- &  19176 &  --- &  19176 \\
g3 & 800 & 19176 & \bfseries  16806 & --- &  --- &  19176 &  --- &  19176 &  --- &  19176 \\
g4 & 800 & 19176 & \bfseries  16814 & --- &  --- &  19176 &  --- &  19176 &  --- &  19176 \\
g5 & 800 & 19176 & \bfseries  16816 & --- &  --- &  19176 &  --- &  19176 &  --- &  19176 \\
g6 & 800 & 19176 & \bfseries  2751 & --- &  --- &  9544 &  --- &  9544 &  --- &  9583 \\
g7 & 800 & 19176 & \bfseries  2515 & --- &  --- &  9343 &  --- &  9430 &  --- &  9343 \\
g8 & 800 & 19176 & \bfseries  2525 & --- &  --- &  9397 &  --- &  9423 &  --- &  9397 \\
g9 & 800 & 19176 & \bfseries  2585 & --- &  --- &  9410 &  --- &  9410 &  --- &  9477 \\
g10 & 800 & 19176 & \bfseries  2510 & --- &  --- &  9380 &  --- &  9429 &  --- &  9380 \\
g11 & 800 & 1600 & \bfseries  677 & --- &  --- & \bfseries  677 &  --- & \bfseries  677 &  --- & \bfseries  677 \\
g12 & 800 & 1600 &  664 & --- & \bfseries  665 & \bfseries  665 & \bfseries  665 & \bfseries  665 & \bfseries  665 & \bfseries  665 \\
g13 & 800 & 1600 & \bfseries  690 & --- &  --- & \bfseries  690 &  --- & \bfseries  690 &  --- & \bfseries  690 \\
g14 & 800 & 4694 & \bfseries  4440 & --- &  --- &  4670 &  --- &  4671 &  --- &  4670 \\
g15 & 800 & 4661 & \bfseries  4406 & --- &  --- &  4622 &  --- &  4622 &  --- &  4644 \\
g16 & 800 & 4672 & \bfseries  4415 & --- &  --- &  4630 &  --- &  4635 &  --- &  4630 \\
g17 & 800 & 4667 & \bfseries  4411 & --- &  --- &  4625 &  --- &  4636 &  --- &  4625 \\
g18 & 800 & 4694 &  1261 & \bfseries 1262 & \bfseries  1262 &  2122 &  --- &  2140 & \bfseries  1262 &  2122 \\
g19 & 800 & 4661 & \bfseries  1121 & --- &  --- &  2045 &  --- &  2050 &  --- &  2045 \\
g20 & 800 & 4672 & \bfseries  1168 & --- &  --- &  2049 &  --- &  2049 &  --- &  2074 \\
g21 & 800 & 4667 & \bfseries  1155 & --- &  --- &  2023 &  --- &  2052 &  --- &  2023 \\
g22 & 2000 & 19990 & \bfseries  18776 & --- &  --- &  19990 &  --- &  19990 &  --- &  19990 \\
g24 & 2000 & 19990 &  18769 & \bfseries 18772 &  --- &  19990 &  --- &  19990 &  --- &  19990 \\
g25 & 2000 & 19990 &  18775 & \bfseries 18776 &  --- &  19990 &  --- &  19990 &  --- &  19990 \\
g26 & 2000 & 19990 &  18767 & \bfseries 18770 &  --- &  19990 &  --- &  19990 &  --- &  19990 \\
g27 & 2000 & 19990 &  4201 & \bfseries 4202 &  --- &  9928 &  --- &  9951 &  --- &  9928 \\
g28 & 2000 & 19990 &  4150 & \bfseries 4157 &  --- &  9888 &  --- &  9888 &  --- &  9919 \\
g29 & 2000 & 19990 &  4293 & \bfseries 4294 &  --- &  10010 &  --- &  10022 &  --- &  10010 \\
g30 & 2000 & 19990 &  4305 & \bfseries 4308 &  --- &  10019 &  --- &  10019 &  --- &  10024 \\
g31 & 2000 & 19990 &  4171 & \bfseries 4176 &  --- &  9910 &  --- &  9914 &  --- &  9910 \\
g32 & 2000 & 4000 &  1669 & 1671 & \bfseries  1679 & \bfseries  1679 & \bfseries  1679 & \bfseries  1679 & \bfseries  1679 & \bfseries  1679 \\
g33 & 2000 & 4000 &  1638 & 1640 & \bfseries  1644 & \bfseries  1644 & \bfseries  1644 & \bfseries  1644 & \bfseries  1644 & \bfseries  1644 \\
g34 & 2000 & 4000 &  1616 & 1617 & \bfseries  1623 & \bfseries  1623 & \bfseries  1623 & \bfseries  1623 & \bfseries  1623 &  1625 \\
g35 & 2000 & 11778 & \bfseries  11111 & --- &  --- &  11775 &  --- &  11776 &  --- &  11775 \\
g36 & 2000 & 11766 &  11108 & --- & \bfseries  11109 &  11763 &  --- &  11763 & \bfseries  11109 &  11764 \\
g37 & 2000 & 11785 &  11117 & \bfseries 11118 &  --- &  11785 &  --- &  11785 &  --- &  11785 \\
g38 & 2000 & 11779 &  11108 & \bfseries 11109 &  --- &  11778 &  --- &  11778 &  --- &  11778 \\
g39 & 2000 & 11778 &  3006 & \bfseries 3007 &  --- &  5736 &  --- &  5794 &  --- &  5736 \\
g40 & 2000 & 11766 &  2976 & \bfseries 2978 &  --- &  5665 &  --- &  5669 &  --- &  5665 \\
g41 & 2000 & 11785 &  2983 & \bfseries 2986 &  2984 &  5751 &  --- &  5751 &  2984 &  5758 \\
g42 & 2000 & 11779 &  3092 & \bfseries 3095 &  --- &  5787 &  --- &  5800 &  --- &  5787 \\
g43 & 1000 & 9990 &  9376 & \bfseries 9377 & \bfseries  9377 &  9990 &  --- &  9990 & \bfseries  9377 &  9990 \\
g44 & 1000 & 9990 & \bfseries  9379 & --- &  --- &  9990 &  --- &  9990 &  --- &  9990 \\
g45 & 1000 & 9990 &  9376 & \bfseries 9377 & \bfseries  9377 &  9990 &  --- &  9990 & \bfseries  9377 &  9990 \\
g46 & 1000 & 9990 & \bfseries  9378 & --- &  --- &  9990 &  --- &  9990 &  --- &  9990 \\
g47 & 1000 & 9990 & \bfseries  9381 & --- &  --- &  9990 &  --- &  9990 &  --- &  9990 \\
g48 & 3000 & 6000 & \bfseries  6000 & --- &  --- & \bfseries  6000 &  --- & \bfseries  6000 &  --- & \bfseries  6000 \\
g49 & 3000 & 6000 & \bfseries  6000 & --- &  --- & \bfseries  6000 &  --- & \bfseries  6000 &  --- & \bfseries  6000 \\
g50 & 3000 & 6000 & \bfseries  6000 & --- &  --- & \bfseries  6000 &  --- & \bfseries  6000 &  --- & \bfseries  6000 \\
g51 & 1000 & 5909 &  5571 & \bfseries 5572 & \bfseries  5572 &  5871 &  --- &  5881 & \bfseries  5572 &  5871 \\
g52 & 1000 & 5916 & \bfseries  5584 & --- &  --- &  5891 &  --- &  5891 &  --- &  5891 \\
g53 & 1000 & 5914 & \bfseries  5574 & --- &  --- &  5887 &  --- &  5887 &  --- &  5888 \\
g54 & 1000 & 5916 & \bfseries  5579 & --- &  --- &  5889 &  --- &  5889 &  --- &  5889 \\
g55 & 5000 & 12498 & \bfseries  12498 & --- &  --- & \bfseries  12498 &  --- & \bfseries  12498 &  --- & \bfseries  12498 \\
g56 & 5000 & 12498 &  4931 & \bfseries 4935 &  --- &  6213 &  --- &  6213 &  --- &  6213 \\
g57 & 5000 & 10000 &  4112 & 4132 & \bfseries  4145 &  4220 &  4141 &  4305 & \bfseries  4145 &  4220 \\
g58 & 5000 & 29570 & \bfseries  27885 & --- &  --- &  29569 &  --- &  29569 &  --- &  29570 \\
g59 & 5000 & 29570 &  7539 & \bfseries 7546 &  --- &  14731 &  --- &  14731 &  --- &  14731 \\
g60 & 7000 & 17148 & \bfseries  17148 & --- &  --- & \bfseries  17148 &  --- & \bfseries  17148 &  --- & \bfseries  17148 \\
g61 & 7000 & 17148 &  7110 & \bfseries 7114 &  --- &  8748 &  --- &  8751 &  --- &  8748 \\
g62 & 7000 & 14000 &  5743 & 5758 & \bfseries  5788 &  6534 &  5774 &  6534 & \bfseries  5788 &  6541 \\
g63 & 7000 & 41459 &  39083 & \bfseries 39089 &  --- &  41459 &  --- &  41459 &  --- &  41459 \\
g64 & 7000 & 41459 &  10814 & \bfseries 10819 &  --- &  20775 &  --- &  20775 &  --- &  20792 \\
g65 & 8000 & 16000 &  6534 & 6561 & \bfseries  6579 &  7256 &  6573 &  7256 & \bfseries  6579 &  7349 \\
g66 & 9000 & 18000 &  7474 & 7495 & \bfseries  7522 &  8497 &  7505 &  8497 & \bfseries  7522 &  8500 \\
g67 & 10000 & 20000 &  8155 & 8185 & \bfseries  8220 &  9299 &  --- &  9299 & \bfseries  8220 &  9303 \\
g70 & 10000 & 9999 & \bfseries  9999 & --- &  --- & \bfseries  9999 &  --- & \bfseries  9999 &  --- & \bfseries  9999 \\
g72 & 10000 & 20000 &  8264 & 8296 & \bfseries  8337 &  9357 &  --- &  9357 & \bfseries  8337 &  9376 \\
g77 & 14000 & 28000 &  11674 & \bfseries 11712 &  11691 &  13296 &  --- &  13296 &  11691 &  13455 \\
g81 & 20000 & 40000 &  16470 & \bfseries 16525 &  16485 &  19088 &  --- &  19088 &  16485 &  19580 \\
\end{tabular}
}
\title{Parameterized Local Search for Max~$c$-Cut}
\begin{document}
\newcommand{\improvementType}{
\begin{tabular}{r|cccccc}
&improvable&I$_1$&I$_2$&I$_3$&LS&ILP\\\hline 
unit $ c = 2$ & 31&2  &1  &0   & 3 & 2\\
unit $ c = 3$ & 30&8  &0  &0   & 8 & 3\\
unit $ c = 4$ & 28&5  &3  &1   & 9 & 4\\
\hline 
signed $ c = 2$ & 29 & 1   & 1   & 0   & 2 & 6\\
signed $ c = 3$ & 36 & 19   & 2   & 1   & 22 & 14\\
signed $ c = 4$ & 34 & 20   & 5   & 0   & 25 & 14\\
\hline
sum 
 & 188 & 55 & 12 & 2 & 69 & 43\end{tabular}
}

\newcommand{\improvementC}{
\begin{tabular}{r|cccccc}
&improvable&I$_1$&I$_2$&I$_3$&LS&ILP\\\hline 
\hline unit $ c = 2$ & 31 & 2   & 1   & 0   & 3 & 2\\
signed $ c = 2$ & 29 & 1   & 1   & 0   & 2 & 6\\
\hline unit $ c = 3$ & 30 & 8   & 0   & 0   & 8 & 3\\
signed $ c = 3$ & 36 & 19   & 2   & 1   & 22 & 14\\
\hline unit $ c = 4$ & 28 & 5   & 3   & 1   & 9 & 4\\
signed $ c = 4$ & 34 & 20   & 5   & 0   & 25 & 14\\
\hline
sum 
 & 188 & 55 & 12 & 2 & 69 & 43\end{tabular}
}

\newcommand{\firstType}{
\begin{tabular}{r|rrrrrrrrr}
& 2& 3& 4& 5& 6& 7& 8& 9& 10\\\hline 
unit $ c = 2$ & 1& 0& 0& 2& 0& 0& 0& 0& 0\\
unit $ c = 3$ & 4& 3& 1& 0& 0& 0& 0& 0& 0\\
unit $ c = 4$ & 5& 0& 1& 2& 1& 0& 0& 0& 0\\
\hline
signed $ c = 2$ & 0& 0& 1& 0& 0& 1& 0& 0& 0\\
signed $ c = 3$ & 7& 10& 1& 3& 0& 0& 0& 0& 1\\
signed $ c = 4$ & 3& 13& 5& 2& 2& 0& 0& 0& 0\\
\hline
sum 
 & 20 & 26 & 9 & 9 & 3 & 1 & 0 & 0 & 1\end{tabular}
}

\newcommand{\firstC}{
\begin{tabular}{r|rrrrrrrrr}
& 2& 3& 4& 5& 6& 7& 8& 9& 10\\
\hline unit $ c = 2$ & 1& 0& 0& 2& 0& 0& 0& 0& 0\\
signed $ c = 2$ & 0& 0& 1& 0& 0& 1& 0& 0& 0\\
\hline unit $ c = 3$ & 4& 3& 1& 0& 0& 0& 0& 0& 0\\
signed $ c = 3$ & 7& 10& 1& 3& 0& 0& 0& 0& 1\\
\hline unit $ c = 4$ & 5& 0& 1& 2& 1& 0& 0& 0& 0\\
signed $ c = 4$ & 3& 13& 5& 2& 2& 0& 0& 0& 0\\
\hline
sum 
 & 20 & 26 & 9 & 9 & 3 & 1 & 0 & 0 & 1\end{tabular}
}

\newcommand{\firstILPBadType}{
\begin{tabular}{r|rrrrrrrrr}
& 2& 3& 4& 5& 6\\\hline 
unit $ c = 2$ & 1& 0& 0& 0& 0\\
unit $ c = 3$ & 3& 2& 0& 0& 0\\
unit $ c = 4$ & 3& 0& 1& 1& 1\\
\hline
signed $ c = 2$ & 0& 0& 1& 0& 0\\
signed $ c = 3$ & 2& 4& 1& 2& 0\\
signed $ c = 4$ & 3& 3& 4& 1& 1\\
\hline
sum 
 & 12 & 9 & 7 & 4 & 2\end{tabular}
}

\newcommand{\firstILPBadC}{
\begin{tabular}{r|rrrrrrrrr}
& 2& 3& 4& 5& 6\\
\hline unit $ c = 2$ & 1& 0& 0& 0& 0\\
signed $ c = 2$ & 0& 0& 1& 0& 0\\
\hline unit $ c = 3$ & 3& 2& 0& 0& 0\\
signed $ c = 3$ & 2& 4& 1& 2& 0\\
\hline unit $ c = 4$ & 3& 0& 1& 1& 1\\
signed $ c = 4$ & 3& 3& 4& 1& 1\\
\hline
sum 
 & 12 & 9 & 7 & 4 & 2\end{tabular}
}

\newcommand{\largestType}{
\begin{tabular}{r|rrrrrrrrrrrr}
& 2& 3& 4& 5& 6& 7& 8& 9& 10& 11& 12& 13\\\hline 
unit $ c = 2$ & 0& 0& 0& 2& 0& 0& 0& 0& 0& 0& 0& 1\\
unit $ c = 3$ & 3& 1& 3& 0& 0& 1& 0& 0& 0& 0& 0& 0\\
unit $ c = 4$ & 2& 0& 3& 3& 1& 0& 0& 0& 0& 0& 0& 0\\
\hline
signed $ c = 2$ & 0& 0& 1& 0& 0& 1& 0& 0& 0& 0& 0& 0\\
signed $ c = 3$ & 0& 4& 0& 2& 3& 1& 1& 8& 3& 0& 0& 0\\
signed $ c = 4$ & 1& 2& 4& 3& 6& 5& 4& 0& 0& 0& 0& 0\\
\hline
sum 
 & 6 & 7 & 11 & 10 & 10 & 8 & 5 & 8 & 3 & 0 & 0 & 1\end{tabular}
}

\newcommand{\largestC}{
\begin{tabular}{r|rrrrrrrrrrrr}
& 2& 3& 4& 5& 6& 7& 8& 9& 10& 11& 12& 13\\
\hline unit $ c = 2$ & 0& 0& 0& 2& 0& 0& 0& 0& 0& 0& 0& 1\\
signed $ c = 2$ & 0& 0& 1& 0& 0& 1& 0& 0& 0& 0& 0& 0\\
\hline unit $ c = 3$ & 3& 1& 3& 0& 0& 1& 0& 0& 0& 0& 0& 0\\
signed $ c = 3$ & 0& 4& 0& 2& 3& 1& 1& 8& 3& 0& 0& 0\\
\hline unit $ c = 4$ & 2& 0& 3& 3& 1& 0& 0& 0& 0& 0& 0& 0\\
signed $ c = 4$ & 1& 2& 4& 3& 6& 5& 4& 0& 0& 0& 0& 0\\
\hline
sum 
 & 6 & 7 & 11 & 10 & 10 & 8 & 5 & 8 & 3 & 0 & 0 & 1\end{tabular}
}

\maketitle

\begin{abstract}
In the NP-hard \textsc{Max~$c$-Cut} problem, one is given an undirected edge-weighted graph~$G$ and aims to color the vertices of~$G$ with~$c$ colors such that the total weight of edges with distinctly colored endpoints is maximal. 
The case with~$c=2$ is the famous \textsc{Max Cut} problem.  
To deal with the NP-hardness of this problem, we study parameterized local search algorithms. 
More precisely, we study \textsc{LS~Max~$c$-Cut}  where we are also given a vertex coloring and an integer~$k$ and the task is to find a better coloring that changes the color of at most~$k$ vertices, if such a coloring exists; otherwise, the given coloring is $k$-optimal. 
We show that, for all~$c\ge 2$, \textsc{LS~Max~$c$-Cut} presumably cannot be solved in~$f(k)\cdot n^{\Oh(1)}$~time even on bipartite graphs. 
We then present an algorithm for \textsc{LS~Max~$c$-Cut} with running time~$\Oh((3e\Delta)^k\cdot c\cdot k^3\cdot\Delta\cdot n)$, where~$\Delta$ is the maximum degree of the input graph.
Finally, we evaluate the practical performance of this algorithm in a hill-climbing approach as a post-processing for a state-of-the-art heuristic for~\textsc{Max~$c$-Cut}. 
We show that using parameterized local search, the results of this state-of-the-art heuristic can be further improved on a set of standard benchmark instances.  
\end{abstract}

\section{Introduction}
Graph coloring and its generalizations are among the most famous NP-hard optimization problems with numerous practical applications~\cite{JT11}. 
In one prominent problem variant, we want to color the vertices of an edge-weighted graph with~$c$ colors so that the sum of the  weights of all edges that have endpoints with different colors is maximized.   
This problem is known as
\textsc{Max $c$-Cut}~\cite{FJ97,KKLP97} or \textsc{Maximum Colorable Subgraph}~\cite{PY91}.
Applications of \textsc{Max $c$-Cut} include data clustering~\cite{CMA21,FKP22}, computation of rankings~\cite{CMA21}, design of experimental studies~\cite{ADR21}, sampling of public opinions in social networks~\cite{HLC17}, channel assignment in wireless networks~\cite{SGDC08}, module detection in genetic interaction data~\cite{LTCH11}, and scheduling of TV commercials~\cite{GKK09}. 
In addition, \textsc{Max $c$-Cut} is closely related to the energy minimization problem in Hopfield neural networks~\cite{SOA99,KT06,Wang06}.
An equivalent formulation of the problem is to ask for a coloring that minimizes the weight sum of the edges whose endpoints receive the same color; this formulation is known as \textsc{Generalized Graph Coloring}~\cite{VL03}.
The main difference is that for instances of~\textsc{Generalized Graph Coloring}, one usually assumes that all edge weights are non-negative, whereas for \textsc{Max $c$-Cut}, one usually also allows negative weights.

From an algorithmic viewpoint, even restricted cases of \textsc{Max~$c$-Cut}
are  hard: 
The special case~$c=2$ is the~\textsc{Max Cut}
problem which is NP-hard even for positive unit weights~\cite{K72,GJ79}, even on graphs with maximum degree 3~\cite{BK99}. 
Moreover, for all~$c\ge 3$ the \textsc{Graph Coloring} problem where we
ask for a coloring of the vertices with $c$~colors such that the endpoints of every edge receive different colors is NP-hard~\cite{K72}. 
As a consequence, \textsc{Max~$c$-Cut} is NP-hard for all~$c\ge 3$, again even when all edges have positive unit weight. 
While \textsc{Max $c$-Cut} admits polynomial-time constant factor approximation algorithms~\cite{FJ97}, there are no polynomial-time approximation schemes unless~$\Ptime = \NP$~\cite{PY91}, even on graphs with bounded maximum degree~\cite{BK99}.
Due to
these hardness results, \textsc{Max~$c$-Cut} is mostly solved using
heuristic approaches~\cite{FPRR02,LTCH11,MH17,ZLA13}.

A popular heuristic approach for  \textsc{Max~$c$-Cut}
is hill-climbing local search~\cite{FPRR02,LTCH11} with the 1-flip neighborhood. 
Here, an
initial solution (usually computed by a greedy algorithm) is replaced by a better one in
the 1-flip neighborhood as long as such a better solution exists. 
Herein, the 1-flip
neighborhood of a coloring is the set of all colorings that can be obtained by
changing the color of one vertex. 
A coloring that has no improving 1-flip is called
1-optimal and the problem of computing 1-optimal solutions has also received interest from
a theoretical standpoint: Finding 1-optimal solutions for \textsc{Max Cut} is PLS-complete
on edge-weighted graphs~\cite{SY91} and thus presumably not efficiently solvable in
the worst case. 
This PLS-completeness result for the 1-flip neighborhood was later
extended to \textsc{Generalized Graph Coloring}, and thus to \textsc{Max $c$-Cut}, for all~$c$~\cite{VL03}. 
For graphs where the absolute values of all edge weights are constant, however, a simple hill climbing algorithm terminates after~$\Oh(m)$~improvements, where~$m$ is the number of edges in the input graph. 
Here, a different question arises: Can we replace the 1-flip neighborhood with a larger efficiently searchable neighborhood, to avoid being stuck in a bad local optimum?  
A natural candidate is the $k$-flip neighborhood where we are allowed to change the color of at most~$k$ vertices. 
As noted by Kleinberg and Tardos~\cite{KT06}, a standard algorithm for searching the $k$-flip neighborhood takes $\Theta(n^k\cdot m)$~time where~$n$ is the number of vertices. 
This led Kleinberg and Tardos to conclude that the $k$-flip neighborhood is impractical. 
In this work, we ask whether we can do better than the brute-force $\Theta(n^k\cdot m)$-time algorithm or, in other words, whether the dismissal of $k$-flip neighborhood may have been premature.

The ideal framework to answer this question is parameterized local search, where the
main goal would be to design an algorithm that in~$f(k)\cdot n^{\Oh(1)}$~time either
finds a better solution in the~$k$-flip neighborhood or correctly answers that the current
solution is~$k$-optimal. 
Such a running time is preferable to~$\Oh(n^k\cdot m)$ since the degree of the polynomial running time part does not depend on~$k$ and thus the running time scales better with~$n$. 
The framework also provides a toolkit for negative results that allows to
conclude that an algorithm with such a running time is unlikely by showing
\W1-hardness. 
In fact, most parameterized local search problems turn out to be \W1-hard
with respect to the search radius~$k$~\cite{BIJK19,DGKW14,FFL+12,GHNS13,GHK14,GKO+12,KLMS23,Marx08,Szei11}.  
In contrast to these many, mostly negative, theoretical results, there are so far only few encouraging experimental studies~\cite{GGJ+19,GKM21,HN13,KK17}. 
The maybe most extensive positive results so far were obtained for \textsc{LS Vertex Cover} where the input is an undirected graph~$G$ with a vertex cover~$S$ and the question is whether the~$k$-swap neighborhood\footnote{The~$k$-swap neighborhood of a vertex cover~$S$ of~$G$ is the set of all vertex covers of~$G$ that have a symmetric difference of at most~$k$ with~$S$.} of~$S$ contains a smaller vertex cover. 
The key to obtain practical parameterized local search algorithms is to consider parameterization by~$k$ \emph{and} the maximum degree~$\Delta$ of the input graph. 
As shown by Katzmann~and~Komusiewicz~\cite{KK17}, \textsc{LS Vertex Cover} can be solved in~$(2\Delta)^k\cdot n^{\Oh(1)}$~time.
An experimental evaluation of this algorithm showed that it can be tuned to solve the problem for~$k\approx 20$ on large sparse graphs, and that~$k$-optimal solutions for~$k\ge 9$ turned out to be optimal for almost all graphs considered in the experiments.

\subparagraph{Our Results.} We study \LGGClong, where we want to decide whether a given coloring has a better one in its~$k$-flip neighborhood. 
We first show that \LGGClong{} is presumably not solvable in $f(k)\cdot n^{o(k)}$~time and transfer this lower-bound also to local search versions of related partition problems like~\emph{Min Bisection} and~\emph{Max Sat}.
We then present an algorithm to solve~\LGGClong in time~$\Oh((3e\Delta)^k\cdot c \cdot  k^3\cdot\Delta\cdot n)$, where~$\Delta$ is the maximum degree of the input graph. 
To put this running time bound into context, two aspects should be discussed:
First, the NP-hardness of the special case of \GGC{} with~$\Delta=3$ implies that a running time of~$f(\Delta)\cdot n^{\Oh(1)}$ is impossible unless~$\Ptime = \NP$. 
Second, only the parameter~$k$ occurs in the exponent; we say that the running time grows mildly with respect to~$\Delta$ and strongly with respect to~$k$. 
This is desirable as~$k$ is a user-determined parameter whereas~$\Delta$ depends on the input; a broader discussion of this type of running times is given by Komusiewicz and Morawietz~\cite{KM22}.

The algorithm is based on two main observations: 
First, we show that minimal improving flips are connected in the input graph. 
This allows to enumerate candidate flips in~$\Oh((e\Delta)^k \cdot k\cdot n)$~time. 
Second, we show that, given a set of~$k$ vertices to flip, we can determine an optimal way to flip their colors in $\Oh(3^k\cdot c\cdot k^2+k\cdot \Delta)$~time. 
We then discuss several ways to speed up the algorithm, for example by computing upper bounds for the improvement of partial flips. 
We finally evaluate our algorithm experimentally when it is applied as post-processing for a state-of-the-art \textsc{Max $c$-Cut}~heuristic~\cite{MH17}. 
In this application, we take the solutions computed by the heuristic and improve them by hill-climbing with the~$k$-flip neighborhood for increasing values of~$k$. 
We show that, for a standard benchmark data set, a large fraction of the previously best solutions can be improved by our algorithm, leading to new record solutions for these instances. 
The post-processing is particularly successful for the instances of the data set with~$c>2$ and both positive and negative edge weights.

\section{Preliminaries}

\paragraph{Notation.}
For integers~$i$ and~$j$ with~$i \leq j$, we define~$[i,j] := \{k \in \mathds{N}\mid i \leq k \leq j\}$.
For a set~$A$, we denote with~${{A}\choose {2}}:= \{\{a,b\}\mid a \in A, b\in A\}$ the collection of all size-two subsets of~$A$.
For two sets~$A$ and~$B$, we denote with~$A \oplus B := (A \setminus B) \cup (B \setminus A)$ the \emph{symmetric difference} of~$A$ and~$B$.
An~\text{$r$-partition of a set~$C$} is an~$r$-tuple~$(B_1,\dots,B_r)$ of subsets of~$C$, such that each element of~$C$ is contained in exactly one set of~$(B_1,\dots,B_r)$.
For~$r=2$, we may call a~$2$-partition~$(A,B)$ simply a~\emph{partition}.
Let~$f:A\to B$ and~$g:A\to B$ be functions and let~$C\subseteq A$, then we say that~$f$ and~$g$~\emph{agree} on~$S$, if for each element~$s\in S$, $f(s) = g(s)$.

An (undirected) graph~$G=(V,E)$ consists of a vertex set~$V$ and an edge set~$E \subseteq {{V}\choose {2}}$.
For vertex sets~$S\subseteq V$ and~$T\subseteq V$ we denote with~$E_G(S,T) := \{\{s,t\}\in E \mid s\in S, t\in T\}$ the edges between~$S$ and~$T$ and with~$E_G(S) := E_G(S,S)$ the edges between vertices of~$S$.
Moreover, we define~$G[S] := (S,E_G(S))$ as the~\emph{subgraph of~$G$ induced by~$S$}. 
A vertex set~$S$ is \emph{connnected} if~$G[S]$ is a connected graph.
For a vertex~$v\in V$, we denote with~$N_G(v):= \{w\in V\mid \{v,w\}\in E\}$ the \emph{open neighborhood} of~$v$ in~$G$ and with~$N_G[v]:= N_G(v) \cup \{v\}$ the \emph{closed neighborhood} of~$v$ in~$G$. 
Analogously, for a vertex set~$S\subseteq V$, we define~$N_G[S] := \bigcup_{v\in S} N_G[v]$ and~$N_G(S) := \bigcup_{v\in S} N_G(v)\setminus S$.
If~$G$ is clear from context, we may omit the subscript.
We say that~\emph{vertices~$v$ and~$w$ have distance at least~$i$} if the length of the shortest path between~$v$ and~$w$ is at least~$i$.

\iflonglong
\fi
\paragraph{Problem Formulation.}

Let~$X$ and~$Y$ be sets and let~$\colo,\colo': X\to Y$.
The~\emph{flip} between~$\colo$ and~$\colo'$ is defined as~$\Dflip(\colo,\colo'):=\{x\in X\mid \colo(x) \neq \colo'(x)\}$ and the~\emph{flip distance} between~$\colo$ and~$\colo'$ is defined as~$\dflip(\colo,\colo'):=|\Dflip(\colo,\colo')|$.
For an integer~$c$ and a graph~$G=(V,E)$, a function~$\colo:V\to[1,c]$ is a~\emph{$c$-coloring} of~$G$.
Let~$\colo$ be a $c$-coloring of~$G$, we define the set~$E(\colo)$ of~\emph{properly colored edges} as~$E(\colo):=\{\{u,v\}\in E\mid \colo(u)\neq \colo(v)\}$.  
For an edge-weight function~$\omega:E\to \mathds{Q}$ and an edge set~$E'\subseteq E$, we let~$\omega(E')$ denote the total weight of all edges in~$E'$.
Let~$\colo$ and~$\colo'$ be~$c$-colorings of~$G$.
We say that~$\colo$ and~$\colo'$ are~\emph{$k$-neighbors} if~$\dflip(\colo,\colo') \leq k$. 
If~$\omega(E(\colo)) > \omega(E(\colo'))$, we say that~$\colo$ is~\emph{improving} over~$\colo'$.
Finally, a~$c$-coloring~$\colo$ is~\emph{$k$-(flip-)optimal} if~$\colo$ has no improving~$k$-neighbor~$\colo'$.
%
The problem of finding an improving neighbor of a given coloring can now be formalized as follows.  
\prob{\LGGC}{A graph~$G=(V,E)$,~$c\in\mathds{N} $, a weight function~$\omega: E \to \mathds{Q}$, a~$c$-coloring~$\colo$, and~$k\in \mathds{N}$.}{Is there a~$c$-coloring~$\colo'$ such that~$\dflip(\colo,\colo')\leq k$ and~$\omega(E(\colo')) > \omega(E(\colo))$?}

\iflonglong

\fi

\iflong
The special case of~\LGGC where~$c=2$ can alternatively be defined as follows by using partitions instead of colorings.

\prob{\LMC}{A graph~$G=(V,E)$, a weight function~$\omega: E \to \mathds{Q}$, a partition~$(A,B)$ of~$V$, and~$k\in \mathds{N}$.}{Is there a set~$S\subseteq V$ of size at most~$k$ such that~$\omega(E(A,B)) < \omega(E(A\oplus S,B\oplus S))$?}
\else 
The special case of~\LGGC where~$c=2$ is denoted as~\LMC.

\fi
While these problems are defined as decision problems, our algorithms solve the search problem that returns an improving $k$-flip if it exists.

Let~$\colo$ and~$\colo'$ be~$c$-colorings of a graph~$G$.
We say that~$\colo'$ is an~\emph{inclusion-minimal improving $k$-flip for~$\colo$}, if~$\colo'$ is an improving~$k$-neighbor of~$\colo$ and if there is no improving~$k$-neighbor~$\widetilde{\colo}$ of~$\colo$ with~$\Dflip(\colo,\widetilde{\colo}) \subsetneq \Dflip(\colo,\colo')$.
\iflong Let~$(A,B)$ be a partition of~$G$.
In the context of~\LMC, we call a vertex set~$S$~\emph{inclusion-minimal improving $k$-flip for~$(A,B)$}, if~$|S|\leq k$, $\omega(E(A\oplus S, B\oplus S)) > \omega(E(A,B))$, and if there is no vertex set~$S' \subsetneq S$ such that~$\omega(E(A\oplus S', B\oplus S')) > \omega(E(A,B))$.
\fi

\iflong\else 
For details on parameterized complexity we refer to the standard monographs~\cite{C+15,DF13}. \fi

\iflong

\fi

In this work, we also consider the \emph{permissive version} of the above local search problems.
In such a permissive version~\cite{GKO+12}, we get the same input as in the normal local search problem, but the task is now to (i)~find \emph{any} better solution or (ii)~correctly output that there is no better solution in the $k$-neighborhood.

\section{W[1]-hardness and a tight ETH lower bound for~\LGGC and related problems}\label{sec:ggc hardness}
We first show our intractability result for~\LMC. 
More precisely, we show that \LMC is~\W1-hard when parameterized by~$k$ even on bipartite graphs with unit weights. 
This implies that even on instances where an optimal partition can be found in linear time, \LMC~presumably cannot be solved within~$f(k) \cdot n^{\Oh(1)}$ time for any computable function~$f$.
Afterwards, extend the intractability results even to the permissive version of~\LGGC on general graphs.
Finally, we can then also derive new intractability results for local search versions for the related partition problems~\textsc{Min Bisection}, \textsc{Max Bisection}, and~\textsc{Max Sat}.

To prove the intractability results for the strict version, we introduce the term of \emph{blocked vertices} in instances with unit weights. 
Intuitively, a vertex~$v$ is blocked for a color class~$i$ if we can conclude that~$v$ does not move to~$i$ in any optimal~$k$-neighbor of the current solution just by considering the graph neighborhood of~$v$. 
This concept is formalized as follows.

\begin{definition}\label{def:blocker}
Let~$G=(V,E)$ be a graph, let~$\colo$ be a~$c$-coloring of~$G$, and let~$k$ be an integer.
Moreover, let~$v$ be a vertex of~$V$ and let~$i\in[1,c]\setminus \{\colo(v)\}$ be a color.
The vertex~$v$ is~\emph{$(i,k)$-blocked in~$G$ with respect to~$\colo$} if~$v$ has at least~$2k+1$ more neighbors of color~$i$ than of color~$\colo(v)$ with respect to~$\colo$, that is, if~$|\{w\in N(v) \mid \colo(w) = i\}| \geq |\{w\in N(v) \mid \colo(w) = \colo(v)\}| + 2k - 1$.
\end{definition}
Note that a partition~$P := (B_1,B_2)$ can be interpreted as the~$2$-coloring~$\colo_{P}$ defined for each vertex~$v\in V$ by~$\colo_{P}(v) := i$, where~$i$ is the the unique index of~$\{1,2\}$ such that~$v\in B_i$. 
Hence, we may also say that a vertex~$v$ is~\emph{$(B_i,k)$-blocked in~$G$ with respect to~$(B_1,B_2)$}, if~$v$ is $(i,k)$-blocked in~$G$ with respect to~$\colo_P$.

\subsection{Hardness for the strict version of~\LGGC}

\begin{lemma}\label{lem:solutionwithoutblocked}
Let~$G=(V,E)$ be a graph, let~$\colo$ be a~$c$-coloring of~$G$, let~$k$ be an integer. 
Moreover, let~$v$ be a vertex in~$V$ which is~$(i,k)$-blocked in~$G$ with respect to~$\colo$.
Then, there is no inclusion-minimal improving~$k$-neighbor~$\colo'$ of~$\colo$ with~$\colo'(v) = i$.
\end{lemma}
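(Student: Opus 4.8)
The plan is to argue by contradiction. Suppose some inclusion-minimal improving $k$-neighbor~$\colo'$ of~$\colo$ has~$\colo'(v)=i$. Since~$i\neq\colo(v)$, the vertex~$v$ lies in~$\Dflip(\colo,\colo')$. I would then consider the coloring~$\colo''$ that agrees with~$\colo'$ on every vertex except~$v$, where it reverts to the original color~$\colo(v)$. By construction~$\Dflip(\colo,\colo'')=\Dflip(\colo,\colo')\setminus\{v\}$, which is a proper subset of~$\Dflip(\colo,\colo')$ and has size at most~$k-1$. Hence if I can show that~$\colo''$ is still improving over~$\colo$, then~$\colo''$ witnesses that~$\colo'$ was not inclusion-minimal, the desired contradiction.

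To compare~$\colo''$ with~$\colo'$, note that the two colorings differ only at~$v$, so the only edges that can change their ``properly colored'' status are those incident to~$v$. Working with unit weights (the regime of \cref{def:blocker}), I would compute the gain~$\omega(E(\colo''))-\omega(E(\colo'))$ by summing the per-edge change over~$w\in N(v)$: a neighbor currently colored~$i$ under~$\colo'$ becomes properly colored after reverting~$v$ (change~$+1$), a neighbor currently colored~$\colo(v)$ stops being properly colored (change~$-1$), and a neighbor with any other color is unaffected. Writing~$a':=|\{w\in N(v)\mid \colo'(w)=i\}|$ and~$b':=|\{w\in N(v)\mid \colo'(w)=\colo(v)\}|$, this yields~$\omega(E(\colo''))-\omega(E(\colo'))=a'-b'$. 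It therefore suffices to show~$a'>b'$.

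The final step relates the neighbor counts under~$\colo'$ back to those under~$\colo$. Since~$v\in\Dflip(\colo,\colo')$ and~$\dflip(\colo,\colo')\le k$, at most~$k-1$ neighbors of~$v$ are recolored in passing from~$\colo$ to~$\colo'$. Consequently~$a'\ge a-(k-1)$ and~$b'\le b+(k-1)$, where~$a$ and~$b$ denote the numbers of neighbors of~$v$ colored~$i$ and~$\colo(v)$ under~$\colo$, respectively. Combining these with the blocking inequality~$a\ge b+2k-1$ gives~$a'-b'\ge (a-b)-2(k-1)\ge (2k-1)-2(k-1)=1>0$, so~$a'>b'$ as needed. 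Then~$\omega(E(\colo''))>\omega(E(\colo'))>\omega(E(\colo))$, making~$\colo''$ an improving~$k$-neighbor with a strictly smaller flip set and contradicting the inclusion-minimality of~$\colo'$. I expect the only delicate point to be this counting bound: one must track that exactly one of the at most~$k$ flipped vertices is~$v$ itself, leaving at most~$k-1$ flips among its neighbors, which is precisely what makes the threshold~$2k-1$ tight.
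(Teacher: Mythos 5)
Your proposal is correct and follows essentially the same argument as the paper: both revert the color of~$v$ back to~$\colo(v)$, observe that only edges incident to~$v$ change status, and use the counting bound that at most~$k-1$ neighbors of~$v$ are flipped to conclude from the blocking inequality~$a\ge b+2k-1$ that the reverted coloring is strictly better, contradicting inclusion-minimality. The only cosmetic difference is that you make the inequalities~$a'\ge a-(k-1)$ and~$b'\le b+(k-1)$ fully explicit, whereas the paper states them in words.
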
 
\iflong
\begin{proof}
Let~$\colo'$ be a~$c$-coloring of~$G$ with~$\dflip(\colo,\colo')\leq k$ and~$\colo'(v) = i$.
Hence,~$v\in \Dflip(\colo,\colo')$ and thus~$\Dflip(\colo,\colo')$ contains at most~$k-1$ neighbors of~$v$.
Consequently, at most~$k-1$ more neighbors of~$v$ receive color~$\colo(v)$ under~$\colo'$ than under~$\colo$.
Similarly, at most~$k-1$ more neighbors of~$v$ receive color~$i$ under~$\colo$ than under~$\colo'$.
Since~$v$ is~$(i,k)$-blocked in~$G$ with respect to~$\colo$, this then implies that~$v$ has more neighbors of color~$i$ than of color~$\colo(v)$ under~$\colo'$.
Let~$\colo^*$ be the~$c$-coloring of~$G$ that agrees with~$\colo'$ on all vertices of~$V\setminus\{v\}$ and where~$\colo^*(v):=\colo(v)$.
Note that~$E(\colo')\setminus E(\colo^*) = \{\{w,v\}\in E\mid \colo'(w) = \colo(v)\}$ and~$E(\colo^*)\setminus E(\colo') = \{\{w,v\}\in E\mid \colo'(w) = i\}$.
This implies that~$\colo^*$ is a better~$c$-coloring for~$G$ than~$\colo'$, since~$$|E(\colo^*)|-|E(\colo')| =  |E(\colo^*)\setminus E(\colo')| - |E(\colo')\setminus E(\colo^*)| > 0.$$
Hence, $\colo'$ is not an inclusion-minimal improving~$k$-neighbor of~$\colo$, since~$\Dflip(\colo,\colo^*) = \Dflip(\colo,\colo')\setminus\{v\} \subsetneq \Dflip(\colo,\colo')$.
\end{proof}
\fi
The idea of blocking a vertex by its neighbors finds application in the construction for the \W1-hardness from the next theorem.

\begin{theorem}\label{thm:maxcutWhard}
\LMC is \W1-hard \iflong{}when parameterized by\else for\fi{}~$k$ on bipartite~$2$-degenerate graphs with unit weights.
\end{theorem}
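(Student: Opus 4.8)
The plan is to prove Theorem~\ref{thm:maxcutWhard} by a parameterized reduction from \MCIS, which is \W1-hard with respect to the number~$k$ of color classes. Given an instance~$H=(U,F)$ with classes~$U_1,\dots,U_k$ (we may delete edges inside a class, so~$H$ is~$k$-partite), I would build a bipartite, $2$-degenerate, unit-weight graph~$G$ together with a partition~$(A,B)$ and a search radius~$k'=\Oh(k)$ such that~$(A,B)$ has an improving~$k'$-neighbor if and only if~$H$ has an independent set containing exactly one vertex of each class. The central quantity to control is that flipping a vertex set~$S$ changes~$|E(\colo)|$ by~$\sum_{x\in S} d(x) - 2\sum_{\{x,y\}\subseteq S,\, \{x,y\}\in E} c(x,y)$, where~$d(x)$ is the gain of flipping~$x$ in isolation and~$c(x,y)\in\{+1,-1\}$ records whether~$\{x,y\}$ is currently uncut or cut. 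The reduction must arrange that the maximum of this expression over all size-$\le k'$ flips is positive \emph{exactly} for the flips encoding a valid multicolored independent set.

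The encoding uses one \emph{selection vertex}~$s_u$ per~$u\in U$, where moving~$s_u$ out of its start color represents placing~$u$ into the solution; per-class \emph{chooser} gadgets together with the budget~$k'$ will force that, in any inclusion-minimal improving flip, exactly one~$s_u$ per class is moved. The main use of Definition~\ref{def:blocker} and Lemma~\ref{lem:solutionwithoutblocked} is to \emph{pin} all auxiliary vertices: by attaching enough degree-one pendants in the appropriate color, every non-selection vertex is made~$(i,k')$-blocked for every relevant target color~$i$, so by Lemma~\ref{lem:solutionwithoutblocked} no inclusion-minimal improving~$k'$-flip ever moves it. Pendants simultaneously let me tune each~$d(s_u)$ (via the number and side of the pendants) to the exact value the calibration needs, and they are the tool that rules out the "trivial" partial flips that would otherwise make the instance a spurious yes-instance.

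The crux is the edge gadget, and here the bipartite and $2$-degeneracy constraints bite hardest. Because~$H$ may contain odd cycles, I cannot make conflicting selection vertices adjacent directly, and because~$G$ must be~$2$-degenerate I cannot realize conflicts by any dense structure; so each~$H$-edge~$\{u,v\}$ must be represented by a \emph{private} constant-size gadget of degree-$\le 2$ vertices attached only to the hubs~$s_u,s_v$. Placing the selection vertices on one side and all gadget/pendant vertices on the other keeps~$G$ bipartite regardless of~$H$, and deleting the degree-$\le 2$ gadget and pendant vertices first (leaving isolated selection hubs) witnesses~$2$-degeneracy. The price of routing every conflict through such a shared gadget vertex is that the conflict only manifests through the interaction term~$-2c(\cdot,\cdot)$ once that vertex itself enters~$S$; the calibration of the~$d$-values, of the gadget's internal gains, and of~$k'$ must therefore be set so that including an edge gadget is beneficial precisely when both of its selection vertices are flipped, and so that the resulting collective bonus makes a full valid selection improving while every selection that picks two vertices of one class, an adjacent pair, or the wrong number of vertices stays non-improving.

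For correctness I would argue both directions: from a multicolored independent set, flipping the corresponding~$k$ selection vertices (plus the constant number of gadget vertices each selection forces) yields an improving~$k'$-flip by direct evaluation of the change formula; conversely, given any improving~$k'$-flip, Lemma~\ref{lem:solutionwithoutblocked} confines the moved vertices to selection vertices and their private gadgets, the chooser gadgets force one selection per class, and the interaction/penalty terms force the chosen vertices to be pairwise non-adjacent in~$H$, giving the desired independent set. The main obstacle I anticipate is exactly this last calibration under \emph{unit} weights: with no edge weights available, the separation between a valid selection (positive gain) and a near-valid one (non-positive gain) must be produced purely through multiplicities of pendants and a carefully dimensioned global offset, and getting a sharp integer threshold — while simultaneously respecting bipartiteness, $2$-degeneracy, and a budget~$k'=\Oh(k)$ — is the delicate engineering that the bulk of the proof will have to verify.
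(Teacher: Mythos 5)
There is a genuine gap, and it sits exactly where you flag the ``delicate engineering'': the adjacency penalty in your \MCIS-based construction cannot be enforced. Because bipartiteness and $2$-degeneracy force you to route every conflict between two selection vertices~$s_u,s_v$ through a private gadget vertex, the penalty for choosing an adjacent pair only materializes \emph{if that gadget vertex is itself flipped}. But the quantifier in \LMC works against you: an improving $k'$-flip is free to simply omit any vertex whose inclusion would decrease the objective, so a flip that selects two adjacent selection vertices and skips their conflict gadget pays no penalty at all. (Your own phrasing exposes the tension: you need including the edge gadget to be ``beneficial precisely when both of its selection vertices are flipped,'' which \emph{rewards} adjacency rather than punishing it.) In a local-search reduction you can force the solver to \emph{include} gain-carrying vertices --- by creating a global deficit that must be paid off --- but you cannot force it to include loss-carrying ones. \Cref{lem:solutionwithoutblocked} does not help here either: it only lets you argue that certain vertices are \emph{absent} from inclusion-minimal improving flips, never that a penalizing vertex must be \emph{present}. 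Beyond this structural obstacle, the proposal never actually exhibits the chooser and edge gadgets or the calibration, so even the salvageable parts are a plan rather than a proof.

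The paper sidesteps precisely this issue by reducing from \CL{} rather than \MCIS. There, the vertex~$v^*$ must flip to collect the $|E|-2\binom{k}{2}+1$ edges to~$V_z$, which creates a deficit of $|E|$ improperly colored edges to~$V_x$; recovering at least $\binom{k}{2}$ of them forces the flip to include~$x_e$, $w_e$, and \emph{both endpoints} of at least $\binom{k}{2}$ edges~$e$, and the budget $k'=2\binom{k}{2}+k+1$ then forces those edges to span only~$k$ vertices of~$V$, i.e., to form a clique. Edge gadgets carry mandatory gains instead of skippable penalties, which is why that arithmetic closes while yours does not. If you want to keep an independent-set flavour, note that the paper's reduction for the permissive version (\Cref{permissiveMaxCut}) does reduce from \IS, but it pays for this by copying the adjacencies of~$G$ directly between selection vertices --- exactly the step that destroys bipartiteness and bounded degeneracy, and exactly the step your constraints forbid.
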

\begin{proof}
We reduce from~\CL, where we are given an undirected graph~$G$ and~$k\in\mathds{N}$ and ask whether~$G$ contains a clique of size~$k$. 
\CL~is~\W1-hard \iflong when parameterized by \else for \fi the size~$k$ of the sought clique\iflong~\cite{DF13,C+15}\else~\cite{DF13}\fi.


Let~$I:=(G=(V,E),k)$ be an instance of~\CL.
\iflong{}We construct an equivalent instance~$I':=(G'=(V',E'),\omega',A',B',k')$ of~\LMC with $\omega': E' \to \{1\}$ as follows.
\else{}In the following, we construct an equivalent instance~$I':=(G'=(V',E'),\omega',A',B',k')$ of~\LMC with $\omega': E' \to \{1\}$.
Here,~$(A',B')$ is a partition of~$G'$ and describes the initial 2-coloring of the instance. 
\fi
We start with an empty graph~$G'$ and add each vertex of~$V$ to~$G'$.
Next, for each edge~$e\in E$, we add two vertices~$u_e$ and~$w_e$ to~$G'$ and make both~$u_e$ and~$w_e$ adjacent to each endpoint of~$e$ in~$G'$.
Afterwards, we add a vertex~$v^*$ to~$G'$ and for each edge~$e\in E$, we add vertices~$x_e$ and~$y_e$ and edges~$\{w_e, x_e\},\{w_e, y_e\}$, and~$\{x_e, v^*\}$ to~$G'$.
Finally, we add a set~$V_z$ of~$|E|-2\cdot \binom{k}{2} + 1$ vertices to~$G'$ and make each vertex of~$V_z$ adjacent to~$v^*$.

In the following, for each~$\alpha\in \{u,w,x,y\}$, let~$V_\alpha$ denote the set of all~$\alpha$-vertices in~$G'$, that is, $V_\alpha := \{\alpha_e\mid e\in E\}$. 
We set
\begin{align*}
B'&:= V_w\cup \{v^*\}\cup V_z\text{, } A' := V' \setminus B'\text{, and}\\
k' &:= 2\cdot \binom{k}{2} + k + 1.
\end{align*}

To ensure that some vertices are blocked in the final graph~$G'$, we add the following further vertices to~$A'$ and~$B'$:
For each vertex~$v'\in V_u \cup V_y$, we add a set of~$2k'+2$ vertices to~$B'$ that are only adjacent to~$v'$ and for each vertex~$v'\in V_z$, we add a set of~$2k'+2$ vertices to~$A'$ that are only adjacent to~$v'$.
Let~$V_\Gamma$ be the set of those additional~vertices. Figure~\ref{fig max cut hardness} shows a sketch of the vertex sets and their connections in~$G'$. 
Note that~$G'$ is bipartite and~$2$-degenerate.\todomi{mention degeneracy ordering}
 
\begin{figure}[t]
\begin{center}
\begin{tikzpicture}[yscale=.6, xscale = 1.5]
\tikzstyle{knoten}=[circle,fill=white,draw=black,minimum size=7pt,inner sep=0pt]
\tikzstyle{blocked}=[rectangle,fill=white,draw=black,minimum size=7pt,inner sep=0pt]
\tikzstyle{bez}=[inner sep=0pt]

		\node (a)[label=left:{$A'$}] at (-2.2,-.2) {};
\draw[rounded corners, fill=yellow!30] (-2.2, 0.6) rectangle (2.7, -1.3) {};
		\node[knoten] (v)[label=below:{$V$}] at (-2,0) {};
		\node[blocked] (vu)[label=below:{$V_u$}] at (-1,0) {};
		\node[blocked] (vy)[label=below:{$V_y$}] at (0,0) {};
		\node[knoten] (vx)[label=below:{$V_x$}] at (1,0) {};

\begin{scope}[yshift=-1cm]
		\node (b)[label=left:{$B'$}] at (-2.2,4.2) {};
\draw[rounded corners, fill=blue!30] (-2.2, 5.2) rectangle (2.7, 3.3) {};

		\node[knoten] (vw)[label={$V_w$}] at (-.5,4) {};	
		\node[knoten] (vc)[label={$v^*$}] at (1.5,4) {};
		\node[blocked] (vz)[label={$V_z$}] at (2.5,4) {};    
\end{scope}

		\draw[-, ultra thick] (v) to (vu);
		\draw[-, ultra thick] (v) to (vw);
		\draw[-, ultra thick] (vw) to (vy);
		\draw[-, ultra thick] (vw) to (vx);
		\draw[-, ultra thick] (vx) to (vc);
		\draw[-, ultra thick] (vc) to (vz);           
		\end{tikzpicture}
\end{center}
\caption{The connections between the different vertex sets in~$G'$.
Two vertex sets~$X$ and~$Y$ are adjacent in the figure if~$E(X,Y) \neq \emptyset$.
Each vertex~$v$ in a vertex set with a rectangular node is~$k'$-blocked from the opposite part of the partition.
The vertex set~$V_\Gamma$ is not shown.}
\label{fig max cut hardness}
\end{figure}
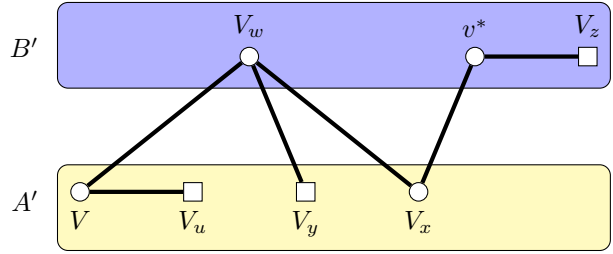

Note that each vertex in~$V_u\cup V_y$ is contained in~$A'$, has at most two neighbors in~$A'$, and at least~$2k'+2$ neighbors in~$B'$. 
Morever, each vertex in~$V_z$ is contained in~$B'$, has one neighbor in~$B'$, and~$2k'+2$ neighbors in~$A'$.
Hence, each vertex in~$V_u \cup V_y$ is~$(B',k')$-blocked and each vertex in~$V_z$ is~$(A',k')$-blocked.
Consequently, due to~\Cref{lem:solutionwithoutblocked}, no inclusion-minimal improving~$k'$-flip for~$(A',B')$ contains any vertex of~$V_u\cup V_y\cup V_z$.
As a consequence, no inclusion-minimal improving~$k'$-flip for~$(A',B')$ contains any vertex of~$V_\Gamma$. 
In other words, only vertices in~$V$, $V_w$, $V_x$, and the vertex~$v^*$ can flip their colors.
 
Before we show the correctness, we provide some intuition. 
By the above, intuitively, a clique~$S$ in the graph~$G$ then corresponds to a flip of vertex~$v^*$, the vertices of~$S$, and the vertices~$w_e$ and~$x_e$ for each edge~$e$ of the clique. 
 The key mechanism is that each inclusion-minimal improving flip has to contain~$v^*$, so that edges between~$v^*$ and~$V_z$ become properly colored.
 To compensate for the edges between~$V_x$ and~$v^*$ that are not properly colored after flipping~$v^*$, for some edges~$e$ of~$G$, the corresponding vertices of~$V_x$ and~$V_w$  and both endpoints of~$e$ have to flip their color.
 The size of~$V_z$ ensures that this has to be done for at least~$\binom{k}{2}$ such edges of~$G$.
Since we only allow a flip of size~$k'$, this then ensures that the edges of~$G$ whose corresponding vertices flip their color belong to a clique of size~$k$ in~$G$.
 %

Next, we show that~$I$ is a yes-instance of~\CL if and only if~$I'$ is a yes-instance of~\LMC.

$(\Rightarrow)$ 
Let~$S\subseteq V$ be a clique of size~$k$ in~$G$.
Hence, $\binom{S}{2}\subseteq E$.
We set~$S' := S \cup \{w_e, x_e \mid e\in \binom{S}{2}\} \cup \{v^*\}$.
Note that~$S'$ has size~$k + 2\cdot \binom{k}{2} + 1 = k'$.
Let~$C := E(A', B')$ and let~$C' := E(A' \oplus S', B'\oplus S')$.
It remains to show that~$C'$ contains more edges than~$C$.
To this end, note that~$C$ and~$C'$ differ only on edges that have at least one endpoint in~$S'$.

First, we discuss the edges incident with at least one vertex of~$S = S'\cap V$.
For each vertex~$v\in S$ and each vertex~$v'\in N_G(v)\setminus S$, the edge~$\{v,w_{\{v,v'\}}\}$ is contained in~$C$ but not in~$C'$ and the edge~$\{v,u_{\{v,v'\}}\}$ is contained in~$C'$ but not in~$C$. 
For each other neighbor~$v'\in N_G(v)\cap S$, the edge~$\{v,u_{\{v,v'\}}\}$ is contained in~$C'$ but not in~$C$ and the edge~$\{v,w_{\{v,v'\}}\}$ is contained in both~$C$ and~$C'$.
Next, we discuss the remaining edges incident with some vertex of~$\{w_e, x_e \mid e\in \binom{S}{2}\}$.
For each edge~$e\in \binom{S}{2}\subseteq E$, the edges~$\{w_e,x_e\}$ and~$\{x_e,v^*\}$ are contained in both~$C$ and~$C'$ and the edge~$\{w_e,y_e\}$ is contained in~$C$ but not in~$C'$.
Finally, we discuss the remaining edges incident with~$v^*$.
For each edge~$e\in E\setminus \binom{S}{2}$, the edge~$\{v^*, x_e\}$ is contained in~$C$ but not in~$C'$ and for each vertex~$z\in V_z$, the edge~$\{v^*, z\}$ is contained in~$C'$ but not in~$C$.
Hence,
\begin{align*}
C\setminus C' =~ &\{\{v, w_{\{v,v'\}}\}\mid v\in S, v'\in N_G(v)\setminus S\}\\
& \cup \{\{w_e,y_e\}\mid e\in\binom{S}{2}\}\\
&\cup \{\{v^*,x_e\}\mid e\in E\setminus \binom{S}{2}\}.
\end{align*}
Furthermore, we have
\begin{align*}
C'\setminus C = \{\{v,u_{\{v,v'\}}\}\mid v\in S, v'\in N_G(v)\} \cup \{\{v^*, z\}\mid z\in V_z\}.
\end{align*}
Since~$|V_z| = |E|- 2\cdot \binom{k}{2}+1$, we get~
\begin{align*}
|C'\setminus C| - |C\setminus C'|
=\ &|\{\{v,u_{\{v,v'\}}\}\mid v\in S, v'\in N_G(v)\}| \\
&- |\{\{v, w_{\{v,v'\}}\}\mid v\in S, v'\in N_G(v)\setminus S\}|\\
&+ |\{\{v^*, z\}\mid z\in V_z\}| \\
&- |\{\{w_e,y_e\}\mid e\in\binom{S}{2}\}| - |\{\{v^*,x_e\}\mid e\in E\setminus \binom{S}{2}\}|\\
=\ & 2\cdot\binom{k}{2} + |E|- 2\cdot \binom{k}{2}+1 - |E| = 1.
\end{align*}
Consequently, $C'$ contains exactly one edge more than~$C$.
Hence,~$I'$ is a yes-instance of~\LMC.


$(\Leftarrow)$
Let~$S'\subseteq V'$ be an inclusion-minimal improving~$k'$-flip for~$(A',B')$.
Due to~\Cref{lem:solutionwithoutblocked}, we can assume that~$S'\subseteq V \cup V_w \cup V_x \cup \{v^*\}$ since all other vertices of~$V'\setminus V_\Gamma$ are blocked from the opposite part of the partition and for each vertex~$x\in V_\Gamma$, the unique neighbor of~$x$ in~$G'$ is thus not contained in~$S'$.
By construction of~$G'$, each vertex~$v\in V$ is adjacent to~$|N_G(v)|$ vertices of~$A'$ and adjacent to~$|N_G(v)|$ vertices of~$B'$.
Since~$S'$ is inclusion-minimal and contains no vertex of~$\{u_e\mid e\in E\}$, for each vertex~$v\in S'\cap V$, there is at least one edge~$e\in E$ incident with~$v$ in~$G$ such that~$S'$ contains the vertex~$w_e$, as otherwise,  removing~$v$ from~$S'$ still results in an even better partition than~$(A'\oplus S',B'\oplus S')$, that is, 
\begin{align*}
|E(A'\oplus (S'\setminus \{v\}), B'\oplus (S'\setminus \{v\}))|  \geq |E(A'\oplus S', B'\oplus S')| > |E(A',B')|. 
\end{align*}

Moreover, recall that~$B'$ contains all vertices of~$V_w$ and each vertex~$w_e\in V_w$ is adjacent to the four vertices~$\{x_e,y_e\}\cup e$ of~$A'$ and is adjacent to no vertex of~$B'$.
Since~$S'$ is inclusion-minimal and contains no vertex of~$V_y$, for each vertex~$w_e\in S'\cap V_w$, all three vertices of~$\{x_e\}\cup e$ are contained in~$S'$, as otherwise, removing~$w_e$ from~$S'$ does not result in a worse partition than~$(A'\oplus S',B'\oplus S')$, that is, 
\begin{align*}
|E(A'\oplus (S'\setminus \{w_e\}), B'\oplus (S'\setminus \{w_e\}))|\geq |E(A'\oplus S', B'\oplus S')| > |E(A',B')|.
\end{align*}

Furthermore, $A'$ contains all vertices of~$V_x$ and each vertex~$x_e\in V_x$ is adjacent to the vertices~$w_e$ and~$v^*$ in~$B'$ and adjacent to no vertex in~$A'$.
Since~$S'$ is inclusion-minimal, for each vertex~$x_e\in S'\cap V_x$, both~$w_e$ and~$v^*$ are contained in~$S'$, as otherwise, removing~$x_e$ from~$S'$ does not result in a worse partition than~$(A'\oplus S',B'\oplus S')$, that is, 
\begin{align*}
|E(A'\oplus (S'\setminus \{x_e\}), B'\oplus (S'\setminus \{x_e\}))|\geq |E(A'\oplus S', B'\oplus S')|> |E(A',B')|.
\end{align*}
Note that the above statements imply that~$S'$ contains~$v^*$.
This is due to the facts that
\begin{enumerate}[label=\alph*)]
\item $S'$ is non-empty,
\item $S'$ contains only vertices of~$V\cup V_w\cup V_x\cup\{v^*\}$,
\item if~$S'$ contains a vertex of~$V$, then~$S'$ contains a vertex of~$V_w$,
\item if~$S'$ contains a vertex of~$V_w$, then~$S'$ contains a vertex of~$V_x$, and
\item if~$S'$ contains a vertex of~$V_x$, then~$S'$ contains the vertex~$v^*$.
\end{enumerate}

Recall that~$v^*$ is adjacent to the~$|E|$ vertices~$V_x$ in~$A'$ and to the~$|E|-2\cdot \binom{k}{2}+1$ vertices~$V_z$ in~$B'$.
Hence, since~$S'$ is inclusion-minimal and no vertex of~$V_z$ is contained in~$S'$, $S'$ contains at least~$\binom{k}{2}$ vertices of~$V_x$, as otherwise,
\begin{align*}
|E(A'\oplus (S'\setminus \{v^*\}), B'\oplus (S'\setminus \{v^*\}))| \geq |E(A'\oplus S', B'\oplus S')|> |E(A',B')|.
\end{align*}

Concluding, $S'$ contains~$v^*$ and for at least~$\binom{k}{2}$ edges~$e\in E$ the vertices~$x_e$, $w_e$, and the endpoints of~$e$.
Let~$S:= S'\cap V$.
Since~$S'$ has size at most~$k' = 2\cdot \binom{k}{2} + k + 1$, the above implies that~$S$ has size at most~$k$.
Since~$S'$ contains the endpoints of at least~$\binom{k}{2}$ edges~$e\in E$, we conclude that~$S$ is a clique of size~$k$ in~$G$.
Hence, $I$ is a yes-instance of~\CL.  
\end{proof}

This implies that even on instances where an optimal solution can be found in polynomial time, local optimality cannot be verified efficiently.
This property was was also shown for~\LVC.
Namely, \LVC was shown to be~\W1-hard with respect to the search radius even on bipartite graphs~\cite{GKO+12}.

\iflong
Next, we describe how to adapt the above reduction can to prove~\W1-hardness of \LGGC for each fixed~$c \geq 2$ when parameterized by~$k$. 

Consider the instance~$I:=(G,\omega, (A,B),k)$ of~\LMC that has been constructed in the proof of~\Cref{thm:maxcutWhard} and let~$c > 2$.
For every vertex~$v$ of~$G$, we add further degree-one neighbors. More precisely, for every color~$i \in [3,c]$, the vertex $v$ receives additional neighbors of color~$i$ such that~$v$ is~$(i,k)$-blocked. Let~$G'$ be the resulting graph.

Then, for any inclusion-minimal improving~$k$-flip~$\colo'$ for~$\colo$ of~$G'$, we have~$S := \Dflip(\colo,\colo')\subseteq A\cup B$,~$\colo'(a) = 2$ for each~$a\in A \cap S$, and~$\colo'(b) = 1$ for each~$b\in B \cap S$.
Hence,~$I$ is a yes-instance of~\LMC if and only if the instance~$I'$ is a yes-instance of~\LGGC. 
Since we only added degree-one vertices, the graph is still bipartite and~$2$-degenerate.
\else
The above reduction can  be  adapted to prove hardness of \LGGC for each fixed~$c \geq 2$. 
\fi
\begin{corollary}
For every~$c\geq 2$, \LGGC is~\W1-hard when parameterized by~$k$ on bipartite~$2$-degenerate graphs with unit weights.
\end{corollary}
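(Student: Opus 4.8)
The plan is to reduce the general-$c$ case directly to the $c=2$ case already settled in \Cref{thm:maxcutWhard}, reusing the very same hardness instance and neutralizing the additional colors $3,\dots,c$ via the blocking mechanism of \Cref{def:blocker} and \Cref{lem:solutionwithoutblocked}. For $c=2$ there is nothing to do, so assume $c>2$. I would start from the \LMC instance $(G,\omega,(A,B),k')$ produced in the proof of \Cref{thm:maxcutWhard}, read the partition $(A,B)$ as the $2$-coloring $\colo$, and form a \LGGC instance on an augmented graph $G'$ as follows: for every vertex $v$ of $G$ and every color $i\in[3,c]$, attach to $v$ a bundle of sufficiently many fresh pendant (degree-one) vertices all colored $i$ (e.g.\ $\deg_G(v)+2k'$ of them suffice, since in the original instance no neighbor of $v$ carries a color in $[3,c]$). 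This makes every original vertex $(i,k')$-blocked for each $i\in[3,c]$, while leaving the budget $k'$ and the unit weights unchanged.

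The heart of the argument is that no inclusion-minimal improving $k'$-flip can use any color in $[3,c]$. By \Cref{lem:solutionwithoutblocked}, since each original vertex is $(i,k')$-blocked for every $i\in[3,c]$, no inclusion-minimal improving $k'$-neighbor $\colo'$ of $\colo$ assigns an original vertex a color outside $\{1,2\}$. It then remains to argue that the pendants are never flipped: each pendant $t$ of color $i\in[3,c]$ has its unique neighbor $v$ in the original graph with $\colo(v)\in\{1,2\}$, so the edge $\{t,v\}$ is already properly colored under $\colo$; as $\colo'(v)\in\{1,2\}\neq i$, reverting $t$ back to color $i$ keeps this edge proper and cannot decrease $\omega(E(\cdot))$, contradicting inclusion-minimality. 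Hence every inclusion-minimal improving $k'$-flip of $\colo$ in $G'$ touches only original vertices and uses only colors $1$ and $2$, i.e.\ it is precisely an improving $k'$-flip for $(A,B)$ in the \LMC instance. Conversely, any improving $k'$-flip for $(A,B)$ extends verbatim to $G'$ by keeping all pendants fixed, since every pendant edge joins a color in $[3,c]$ to a color in $\{1,2\}$ and so contributes identically before and after the flip. Chaining this with \Cref{thm:maxcutWhard} yields that the \CL instance is a yes-instance iff the constructed \LGGC instance is.

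Finally I would verify that the augmentation preserves the required graph class. Bipartiteness is kept because attaching degree-one vertices cannot create an odd cycle (each pendant is placed on the side opposite its neighbor). Two-degeneracy is kept because the pendants can be peeled off first in a degeneracy ordering, after which the $2$-degenerate graph of \Cref{thm:maxcutWhard} remains; each pendant has only one (later) neighbor, so the forward-degree stays at most $2$ throughout. Weights remain unit, and since any improving $k'$-neighbor can be shrunk to an inclusion-minimal one, existence of an improving neighbor is preserved in both directions. This transfers the \W1-hardness of \CL parameterized by $k$ to \LGGC parameterized by $k$ for every fixed $c\ge 2$, on bipartite $2$-degenerate graphs with unit weights.

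I expect the only real obstacle to be confirming that the pendant gadgets do not disturb the tight edge-count balance underlying the original reduction. The observation that dissolves this worry is that every pendant edge is properly colored both under $\colo$ and under any relevant $\colo'$, so it contributes the same amount to both sides of each comparison and drops out entirely, leaving the delicate $\binom{k}{2}$-counting of \Cref{thm:maxcutWhard} intact.
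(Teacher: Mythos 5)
Your proposal is correct and follows essentially the same route as the paper: both attach degree-one neighbors of each color $i\in[3,c]$ to every original vertex so that it becomes $(i,k')$-blocked, invoke \Cref{lem:solutionwithoutblocked} to restrict inclusion-minimal improving flips to the original vertices and the colors $1,2$, and observe that the augmentation preserves bipartiteness, $2$-degeneracy, and unit weights. Your additional check that the pendants themselves never appear in an inclusion-minimal improving flip is a detail the paper leaves implicit, but it is the same argument in substance.
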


\subsection{Hardness for the permissive version of~\LGGC}
Next, we present a running-time lower bound for~\LGGC based on the ETH.
This lower-bound holds even for the permissive version of~\LGGC.

\begin{lemma}\label{permissiveMaxCut}
Even the permissive version of~\LMC does not admit an~\FPT-algorithm when parameterized by~$k$, unless~$\FPT = \W{1}$ and cannot be solved in $f(k) \cdot n^{o(k)}$~time for any computable function~$f$, unless the ETH fails.
More precisely, this hardness holds even if there is an optimal solution in the~$k$-flip neighborhood of the initial solution.
\end{lemma}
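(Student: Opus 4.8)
The plan is to give a fresh reduction rather than to reuse \Cref{thm:maxcutWhard}, because the permissive setting forces two new requirements. First, to be allowed to report ``no improvement'' we must make the given partition a \emph{global} maximum cut on no-instances, not merely $k$-flip-optimal; otherwise a permissive algorithm could legitimately output a better but far-away solution. Second, to obtain the tight $n^{o(k)}$ bound the flip radius must be \emph{linear} in the source parameter: the reduction of \Cref{thm:maxcutWhard} uses $k'=\Theta(k^2)$, which under the ETH only excludes running time $n^{o(\sqrt{k})}$. I would therefore reduce from \textsc{Multicolored Clique} with color classes $V_1,\dots,V_k$ in a graph~$H$, which is \W1-hard and, unless the ETH fails, not solvable in $f(k)\cdot n^{o(k)}$ time~\cite{C+15}.

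The core idea is to realize, as a weighted \MC{} instance, a quadratic objective over one selection variable $s_v\in\{0,1\}$ per vertex $v$ of~$H$, where $s_v=1$ encodes ``$v$ lies on the $B$-side''. Starting from the base partition in which every $H$-vertex lies in~$A$, flipping a set~$S$ yields cut value $\mathrm{const}+f(S)$ with
\begin{equation*}
f(S)=a\cdot|S|+\sum_{\{u,v\}\subseteq S} w_{uv},\qquad
w_{uv}=\begin{cases}+1 & u,v\text{ in distinct classes and }\{u,v\}\in E(H),\\ -M & \text{otherwise.}\end{cases}
\end{equation*}
The pairwise term is realized by putting a \MC{} edge of weight $-w_{uv}/2$ on every pair, and the linear term by two ``pole'' vertices $P_A\in A$, $P_B\in B$ joined by an edge of enormous weight $W_\infty$: since $W_\infty$ exceeds the total absolute weight of all other edges, every maximum cut must separate the poles (otherwise moving one pole improves the cut), so up to taking the global complement we may fix $P_A\in A$ and $P_B\in B$, and the maximum cut of the whole instance equals $\mathrm{const}+\max_S f(S)$. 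Edges from each~$v$ to $P_A$ or $P_B$ then tune the net linear coefficient to the chosen value~$a$.

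The weights are tuned so that $f$ ``sees'' exactly the multicolored cliques. Any~$S$ containing two vertices of one class, or a non-adjacent cross-class pair, incurs a $-M$ term and, for $M$ larger than every attainable positive contribution, satisfies $f(S)<0$. Every other~$S$ is a partial multicolored clique of some size $t\le k$, for which a short computation with $a=-(2k-3)/4$ gives $f(S)=t(2t-2k+1)/4$; this is negative for $t\le k-1$, zero only for $t=0$, and equals $k/4>0$ precisely when $S$ is a multicolored $k$-clique. Hence on a no-instance the base partition is the unique global maximum, while on a yes-instance the global maximum is attained by flipping exactly the~$k$ clique vertices, i.e.\ within the $k$-flip neighborhood. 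Running a permissive algorithm with search radius $k'=k$, it must return a better solution on yes-instances (one exists in the neighborhood) and report ``no improvement'' on no-instances (none exists at all), so it decides \textsc{Multicolored Clique}; this simultaneously yields the ``more precisely'' claim, as in both cases an optimal solution lies in the $k$-flip neighborhood. Since the blowup is linear and the instance has size $n^{\Oh(1)}$, an $f(k)\cdot n^{o(k)}$-time permissive algorithm would solve \textsc{Multicolored Clique} in $f(k)\cdot n^{o(k)}$ time, and an \FPT{} algorithm would place it in \FPT.

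I expect the main obstacle to be the global-optimality bookkeeping: proving that \emph{every} maximum cut of the constructed instance — ranging over all partitions, including those that move the poles or any auxiliary vertex — collapses to maximizing $f(S)$ over $S\subseteq V(H)$. This is where the magnitudes of~$M$ and~$W_\infty$ must be pinned down carefully and where one must rule out that some combination of distant flips beats the base on a no-instance. By contrast, the permissive decision argument and the verification that $f$ is positive exactly on multicolored $k$-cliques are routine once the weights are fixed.
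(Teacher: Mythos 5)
Your proposal is correct in substance, but it takes a genuinely different route from the paper. The paper reduces from \IS restricted to instances whose maximum independent set has size at most~$k$ (a restriction under which \W{1}-hardness and the $f(k)\cdot n^{o(k)}$ ETH bound are preserved): it keeps \emph{unit weights} throughout, anchors the partition with two independent sets $X,Y$ of size $n^3$ playing roughly the role of your poles, and uses a single apex vertex $v^*$ whose degree balance forces exactly a size-$k$ independent set to flip; the radius is $k'=k+1$ and the argument that the initial partition is globally optimal on no-instances is carried by the promise on the source instance. You instead reduce from \textsc{Multicolored Clique} and encode a quadratic objective with large positive and negative rational weights ($-M$ penalties and a $W_\infty$ pole edge), engineering $f$ so that only full multicolored $k$-cliques are profitable; this removes the need for a promise on the source problem and your arithmetic ($f(S)=t(2t-2k+1)/4$ for partial cliques, $M>\binom{n}{2}$, $W_\infty$ exceeding twice the remaining total absolute weight) checks out, as does the permissive decision argument. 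Both reductions have linear parameter blowup, so both give the tight ETH bound, and your observation that \Cref{thm:maxcutWhard} cannot be reused because of its quadratic blowup is exactly right. The one thing your construction buys less of: the lemma is invoked later (in \Cref{permissiveMaxCCut} and the bisection and \textsc{$k$-Flip Max Sat} corollaries) specifically for instances with $\omega(e)=1$ for every edge, and your weighted gadgets do not deliver that strengthening, so to slot your proof into the paper you would either need to restate the lemma with the weight restriction dropped from its downstream uses or de-weight your construction (e.g., by simulating weights with parallel paths or twin vertices), which is nontrivial for the huge $M$ and $W_\infty$ values.
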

\begin{proof}
We reduce from~\IS, where we are given an undirected graph~$G$ and~$k\in\mathds{N}$ and ask whether~$G$ contains an independent set of size~$k$. 
\IS is~\W1-hard when parameterized by the size~$k$ of the sought independent set even if the size of a largest independent set in the input graph is at most~$k$~\cite{DF13,C+15}.
Furthermore, even under these restrictions, \IS cannot be solves in $f(k) \cdot n^{o(k)}$~time for any computable function~$f$, unless the ETH fails~\cite{C+15}.

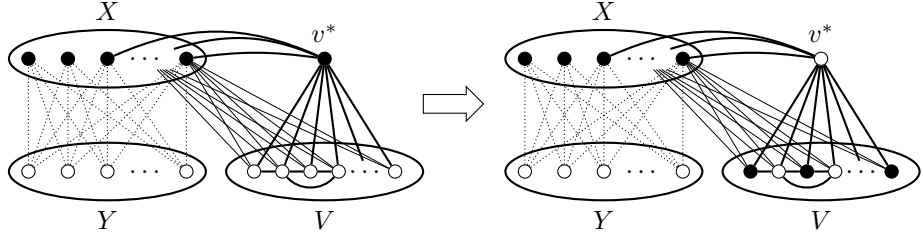
\begin{figure}

\begin{tikzpicture}[scale = .75, every fit/.style={ellipse,draw,inner sep=5pt,text width=1.5cm}
]

\tikzstyle{knoten}=[circle,fill=white,draw=black,minimum size=5pt,inner sep=0pt]
\tikzstyle{bl}=[fill=black]
\tikzstyle{localthick}=[thick]

\node[knoten,bl] (f1) at (0,0) {};
\node[knoten,bl] (f2) at ($(f1) + (.7,0)$) {};
\node[knoten,bl] (f3) at ($(f2) + (.7,0)$) {};
\node[] (f4) at ($(f3) + (.7,0)$) {$\dots$};
\node[knoten,bl] (f5) at ($(f4) + (.7,0)$) {};

\begin{scope}[yshift=-2cm]  
\node[knoten] (s1) at (0,0) {};
\node[knoten] (s2) at ($(s1) + (.7,0)$) {};
\node[knoten] (s3) at ($(s2) + (.7,0)$) {};
\node[] (s4) at ($(s3) + (.7,0)$) {$\dots$};
\node[knoten] (s5) at ($(s4) + (.7,0)$) {};
\end{scope}

\node [black,thick,fit=(f1) (f5),label=above:$X$] {};
\node [black,thick,fit=(s1) (s5),label=below:$Y$] {};


\foreach \i in {1,2,3,5}
    \foreach \j in {1,2,3,5}  
        \draw[densely dotted] (s\i) -- (f\j);

\begin{scope}[xshift=4cm]

\begin{scope}[yshift=-2cm]  
\node[knoten] (v1) at (0,0) {};
\node[knoten] (v2) at ($(v1) + (.5,0)$) {};
\node[knoten] (v3) at ($(v2) + (.5,0)$) {};
\node[knoten] (v4) at ($(v3) + (.5,0)$) {};
\node[]       (v5) at ($(v4) + (.5,0)$) {$\dots$};
\node[knoten] (v6) at ($(v5) + (.5,0)$) {};
\end{scope}

\node[knoten,bl,label=above:$v^*$] (vstar) at (1.25,0) {};

\foreach \j in {1,2,3,4,5,6}  
    \draw[localthick] (vstar) -- (v\j);

\foreach \j in {1,2,3,4,6}{  
    \draw[ultra thin] (f4) -- (v\j);
    \draw[ultra thin] (f5) -- (v\j);    
}

    \draw[localthick] (v1) -- (v2);    
    \draw[localthick] (v2) -- (v3);    
    \draw[localthick] (v3) -- (v4);    
    \draw[localthick] (v2) to [bend right=50] (v4);

\draw[localthick] (vstar) to [bend right=8] (f5);    
\draw[localthick] (vstar) to [bend right=20] (f4);    
\draw[localthick] (vstar) to [bend right=24] (f3);    
\end{scope}

\node [black,thick,fit=(v1) (v6),label=below:$V$] {};

    \node[single arrow, draw=black,
      minimum width = 8pt, single arrow head extend=3pt,
      minimum height=8mm] at ($(vstar) + (2.2,-.8)$) {};

\begin{scope}[xshift=8.8cm]

\node[knoten,bl] (f1) at (0,0) {};
\node[knoten,bl] (f2) at ($(f1) + (.7,0)$) {};
\node[knoten,bl] (f3) at ($(f2) + (.7,0)$) {};
\node[] (f4) at ($(f3) + (.7,0)$) {$\dots$};
\node[knoten,bl] (f5) at ($(f4) + (.7,0)$) {};

\begin{scope}[yshift=-2cm]  
\node[knoten] (s1) at (0,0) {};
\node[knoten] (s2) at ($(s1) + (.7,0)$) {};
\node[knoten] (s3) at ($(s2) + (.7,0)$) {};
\node[] (s4) at ($(s3) + (.7,0)$) {$\dots$};
\node[knoten] (s5) at ($(s4) + (.7,0)$) {};
\end{scope}

\node [black,thick,fit=(f1) (f5),label=above:$X$] {};
\node [black,thick,fit=(s1) (s5),label=below:$Y$] {};


\foreach \i in {1,2,3,5}
    \foreach \j in {1,2,3,5}  
        \draw[densely dotted] (s\i) -- (f\j);

\begin{scope}[xshift=4cm]

\begin{scope}[yshift=-2cm]  
\node[knoten,bl] (v1) at (0,0) {};
\node[knoten] (v2) at ($(v1) + (.5,0)$) {};
\node[knoten,bl] (v3) at ($(v2) + (.5,0)$) {};
\node[knoten] (v4) at ($(v3) + (.5,0)$) {};
\node[]       (v5) at ($(v4) + (.5,0)$) {$\dots$};
\node[knoten,bl] (v6) at ($(v5) + (.5,0)$) {};
\end{scope}

\node[knoten,label=above:$v^*$] (vstar) at (1.25,0) {};

\foreach \j in {1,2,3,4,5,6}  
    \draw[localthick] (vstar) -- (v\j);

\foreach \j in {1,2,3,4,6}{  
    \draw[ultra thin] (f4) -- (v\j);
    \draw[ultra thin] (f5) -- (v\j);    
}

    \draw[localthick] (v1) -- (v2);    
    \draw[localthick] (v2) -- (v3);    
    \draw[localthick] (v3) -- (v4);    
    \draw[localthick] (v2) to [bend right=50] (v4);

\draw[localthick] (vstar) to [bend right=8] (f5);    
\draw[localthick] (vstar) to [bend right=20] (f4);    
\draw[localthick] (vstar) to [bend right=24] (f3);    
\end{scope}

\node [black,thick,fit=(v1) (v6),label=below:$V$] {};

\end{scope}
\end{tikzpicture}
\caption{Two solution for the instance of~\LMC constructed in the proof of~\Cref{permissiveMaxCut}. 
In both solutions, the parts of the respective partitions are indicated by the color of the vertices.
The left partition shows the initial solution and the right partition shows an improving solution, if one exists.
The flip between these partitions is an independent set of size~$k$ in~$G$ together with the vertex~$v^*$.}
\label{figure max cut permissive}
\end{figure}

Let~$I:=(G=(V,E),k)$ be an instance of~\IS and let~$n:= |V|$ and~$m:= |E|$.
We construct an equivalent instance~$I':=(G'=(V',E'),\omega',A,B,k')$ of~\LMC with $\omega': E' \to \{1\}$ as follows.
We initialize~$G'$ as a copy of~$G$.
Next, we add two vertex sets~$X$ and~$Y$ of size~$n^3$ each to~$G'$.
Additionally, we add a vertex~$v^*$ to~$G'$. 
Next, we describe the edges incident with at least one newly added vertex.
We add edges to~$G'$ such that~$v^*$ is adjacent to each vertex of~$V$ and~$n-k+1$ arbitrary vertices of~$X$.
Moreover, we add edges to~$G'$ such that each vertex of~$X$ is adjacent to each vertex of~$Y$.
Finally, we add edges to~$G'$ such that each vertex~$v\in V$ is adjacent with exactly~$|N_G(v)|$ arbitrary vertices of~$X$ in~$G'$.
This completes the construction of~$G'$.
It remains to define the initial partition~$(A,B)$ of~$V'$ and the search radius~$k'$.
We set~$A := V \cup Y$, $B := X \cup \{v^*\}$, and~$k' := k + 1$.
This completes the construction of~$I'$.
Note that each vertex of~$V$ has exactly one neighbor more in~$B$ than in~$A$.
Moreover, $v^*$ has~$2k-1$ more neighbors in~$A$ than in~$B$.

Intuitively, the only way to improve over the partition~$(A,B)$ is to flip an independent set in~$G$ of size~$k$ from~$A$ to~$B$.
Then, $v^*$ has exactly one neighbor more in~$B$ than in~$A$ and flipping~$v^*$ from~$B$ to~$A$ improves over the initial partition~$(A,B)$ by exactly one edge.
See Figure~\ref{figure max cut permissive} for an illustration.
Next, we show that this reduction is correct.

$(\Rightarrow)$
Let~$S$ be an independent set of size~$k$ in~$G$.
We set~$A':= (A \setminus S) \cup \{v^*\}$ and~$B':= V'\setminus A' = (B \cup S) \setminus \{v^*\}$ and show that~$(A',B')$ improves over~$(A,B)$.
Note that~$E(A,B) = E(Y,X) \cup E(V,\{v^*\}) \cup E(V,X)$ which implies that~$$|E(A,B)| = |Y| \cdot |X| + |V| + \sum_{v\in V} |N_G(v)| = n^6 + n + 2m.$$
Moreover, note that
$$E(A',B') = E(Y,X) \cup E(\{v^*\},S) \cup E(\{v^*\},X) \cup E(V\setminus S,S) \cup E(V\setminus S,X).$$
Since~$S$ is an independent set in~$G$ and~$G'$, for each vertex~$v\in S$, $V\setminus S$ contains all neighbors of~$v$ in~$G$.
Consequently, $|E(V\setminus S,S)| = \sum_{v\in S} |N_G(v)|$.
Since~$S$ has size~$k$, the above implies that 
\begin{align*}
|E(A',B')|&= |Y| \cdot |X| + |S| + n-k+1 + \sum_{v\in S} |N_G(v)| + \sum_{v\in V\setminus S} |N_G(v)|\\
&= n^6 + n + 2m + 1.
\end{align*}
Hence, the partition~$(A',B')$ improves over the partition~$(A,B)$ which implies that~$I'$ is a yes-instance of~\LMC.

$(\Leftarrow)$
Let~$(A',B')$ be an optimal partition of~$G'$ and suppose that~$(A',B')$ improves over~$(A,B)$.
We show that~$I$ is a yes-instance of~\IS.
Due to the first direction, this then implies that~$I'$ is a yes-instance of~\LMC which further implies that the initial partition~$(A,B)$ is an optimal partition for~$G'$ if and only if~$(A,B)$ is~$k'$-flip optimal.
To show that~$I$ is a yes-instance of~\IS, we first analyze the structure of the optimal partition~$(A',B')$ for~$G'$.

First, we show that all vertices of~$X$ are on the opposite part of the partition~$(A',B')$ than all vertices of~$Y$.
\begin{claim}\label{claim max cut permissive}
$A'$ contains all vertices of~$Y$ and~$B'$ contains all vertices of~$X$, or $A'$ contains all vertices of~$X$ and~$B'$ contains all vertices of~$Y$.
\end{claim}
\begin{claimproof}
We show that if neither of these statements holds, then~$E(A',B')$ contains less edges than~$E(A,B)$.
This would then contradict the fact that~$(A',B')$ is an optimal partition.
We distinguish two cases.

If all vertices of~$X\cup Y$ are on the same part of the partition~$(A',B')$, then~$E(A',B')$ contains at most~$|N_{G'}(V)| + |N_{G'}(v^*)| < n^3$ edges.
Hence, $E(A',B')$ contains strictly less edges than~$E(A,B)$ which contradicts the fact that~$(A',B')$ is an optimal partition.
Otherwise, if not all vertices of~$X\cup Y$ are on the same part of the partition~$(A',B')$ and not all vertices of~$X$ are on the opposite part of the partition than all vertices of~$Y$, then both~$A'$ and~$B'$ contain at least one vertex of~$X$ each, or both~$A'$ and~$B'$ contain at least one vertex of~$Y$ each.
In both cases, at least~$\min(|X|,|Y|) = n^3$ edges of~$E(X,Y)$ are not contained in~$E(A',B')$.
Hence, $E(A',B')$ has size at most~$|E| - n^3$.
Again, since~$E$ contains at most~$|N_{G'}(V)| + |N_{G'}(v^*)| < n^3$ edges outside of~$E(X,Y)$, this implies that~$E(A',B')$ contains strictly less edges than~$E(A,B)$.
This contradicts the fact that~$(A',B')$ is an optimal partition.
Consequently, the statement holds.
\end{claimproof}\\

By Claim~\ref{claim max cut permissive}, we may assume without loss of generality that~$A'$ contains all vertices of~$Y$ and~$B'$ contains all vertices of~$X$.
Next, we show that~$v^*$ is contained in~$A'$.
Assume towards a contradiction that~$v^*$ is contained in~$B'$.
Hence, each vertex~$v\in V$ has~$|N_G(v)|+1$ neighbors in~$B'$.
By construction, each vertex~$v\in V$ has exactly~$2\cdot |N_G(v)|+1$ neighbors in~$G'$.
Hence, each vertex of~$V$ has more neighbors in~$B'$ than in~$A'$.
Consequently, $A'$ contains all vertices of~$V$, since~$(A',B')$ is an optimal partition for~$G'$.
This implies that~$A' := Y \cup V = A$ and~$B' := X \cup \{v^*\} = B$ which contradicts the assumption that~$(A',B')$ improves over~$(A,B)$.
Consequently, $v^*$ is contained in~$A'$ together with all vertices of~$Y$.
It remains to determine the partition of the vertices of~$V$ into~$A'$ and~$B'$.
 
Let~$S := B' \cap V$.
We show that~$S$ is an independent set of size~$k$ in~$G$.
First, assume towards a contradiction that~$S$ is not an independent set in~$G$.
Then, there are two adjacent vertices~$u$ and~$w$ of~$V$ in~$B'$.
Hence, $u$ has at least~$|N_G(u)|+1$ neighbors in~$B'$, since~$u$ is adjacent to~$|N_G(u)|$ vertices of~$X$.
Since the degree of~$u$ in~$G'$ is~$2\cdot |N_G(u)|+1$, flipping vertex~$u$ from~$B'$ to~$A'$ would result in an improving solution.
This contradicts the fact that~$(A',B')$ is an optimal partition of~$G'$.
Hence, $S$ is an independent set in~$G$.
By assumption, the size of the largest independent set in~$G$ is at most~$k$.
Hence, to show that~$S$ is an independent set of size~$k$, it suffices to show that~$S$ has size at least~$k$.
To this end, we analyze the number of edges of~$E(A',B')$.
Recall that~$A' := (A\setminus S) \cup \{v^*\}$ and~$B' := (B\cup S)\setminus \{v^*\}$.
Hence, analogously to the first direction of the correctness proof, $E(A',B') = E(Y,X) \cup E(\{v^*\},S) \cup E(\{v^*\},X) \cup E(V\setminus S,S) \cup E(V\setminus S,X)$.
Since~$S$ is an independent set, this implies that 
\begin{align*}
|E(A',B')| & = |Y| \cdot |X| + |S| + n-k+1 + \sum_{v\in S} |N_G(v)| + \sum_{v\in V\setminus S} |N_G(v)|\\
&= n^6 + n -k + 1 + 2m + |S|.
\end{align*}
Since we assumed that the partition~$(A',B')$ improves over~$(A,B)$ and~$|E(A,B)| = n^6 + n + 2m$, this implies that~$S$ has size at least~$k$.
Consequently, $S$ is an independent set of size~$k$ in~$G$, which implies that~$I$ is a yes-instance of~\IS.

This also implies that, if~$(A,B)$ is not an optimal partition for~$G'$, then there is an optimal partition for~$G'$ with flip-distance exactly~$k'$ from~$(A,B)$.
Hence, the reduction is correct. 
Recall that~\IS is~\W1-hard when parameterized by~$k$ and cannot be solved in $f(k)\cdot n^{o(k)}$~time for any computable function~$f$, unless the ETH fails.
Since~$|V'| \in \Oh(n^3)$ and~$k' \in \Oh(k)$, this implies that the permissive version of~\LMC (i)~does not admit an~\FPT-algorithm when parameterized by~$k'$, unless~$\FPT = \W{1}$ and (ii)~cannot be solved in $f(k')\cdot |V'|^{o(k')}$~time for any computable function~$f$, unless the ETH fails.
\end{proof}

These intractability results can be transferred to each larger constant value of~$c$.

\begin{theorem}\label{permissiveMaxCCut}
For every~$c \geq 2$, even the permissive version of~\LGGC does not admit an~\FPT-algorithm when parameterized by~$k$, unless~$\FPT = \W{1}$ and cannot be solved in $f(k) \cdot n^{o(k)}$~time for any computable function~$f$, unless the ETH fails.
This holds even on graphs with unit weights.
\end{theorem}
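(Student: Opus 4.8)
The plan is to reduce from the permissive version of \LMC, whose hardness is established in \Cref{permissiveMaxCut}, by augmenting its instances with the blocking gadget so that the colors~$3,\dots,c$ become unusable. Concretely, starting from the \LMC instance~$(G',\omega',A,B,k')$ produced in the proof of \Cref{permissiveMaxCut}, I would let~$\colo$ be the $2$-coloring corresponding to~$(A,B)$ and, for every vertex~$v$ of~$G'$ and every color~$i\in[3,c]$, attach~$\deg_{G'}(v)+2k'$ fresh degree-one neighbors precolored~$i$, extending~$\colo$ by giving each new vertex its designated color. Let~$G''$ be the resulting graph. Then every~$v\in V(G')$ is~$(i,k')$-blocked for each~$i\ge 3$, since its number of color-$i$ neighbors exceeds its number of color-$\colo(v)$ neighbors by at least~$2k'-1$. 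As~$c$ is constant and~$|V(G')|,|E(G')|$ are polynomial in~$n$, we have~$|V(G'')|=n^{\Oh(1)}$ and~$k'=\Oh(k)$, so the lower bound will transfer without loss.

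The heart of the argument is the equivalence that~$\colo$ has an improving~$k'$-neighbor in~$G''$ (as a $c$-coloring) if and only if~$(A,B)$ has an improving~$k'$-neighbor in~$G'$ (as a $2$-coloring). For the easy direction, any improving $2$-neighbor of~$(A,B)$ extends to a $c$-coloring of~$G''$ by keeping the new vertices fixed; since every gadget edge joins a color-$1$-or-$2$ vertex to a color-$\ge 3$ vertex, all gadget edges stay properly colored under both colorings and contribute a constant, so the improvement is preserved. For the converse I would first record the observation that whenever \emph{any} improving~$k'$-neighbor exists, an inclusion-minimal one exists as well (take one whose flip set has minimum cardinality). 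Applying \Cref{lem:solutionwithoutblocked} to such an inclusion-minimal~$\colo'$ yields~$\colo'(v)\in\{1,2\}$ for every~$v\in V(G')$, and a short degree-one argument shows that~$\colo'$ moves no gadget vertex: moving such a vertex~$w$ either leaves its unique edge properly colored (so~$w$ could be dropped from the flip, contradicting minimality) or makes that edge improper (so restoring~$w$'s color strictly improves, again contradicting minimality). Hence~$\colo'$ restricts to a $2$-coloring of~$G'$ whose improvement equals that of~$\colo'$ on~$G''$, giving an improving $2$-neighbor of~$(A,B)$.

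With the equivalence in hand, a permissive \LGGC algorithm on~$G''$ decides the underlying \IS instance: by \Cref{permissiveMaxCut} that instance is a yes-instance if and only if~$(A,B)$ has an improving~$k'$-neighbor in~$G'$, which by the equivalence holds if and only if~$\colo$ has an improving~$k'$-neighbor in~$G''$. A permissive algorithm either returns such an improving coloring (yes) or certifies that none exists in the~$k'$-neighborhood (no), so its output directly answers \IS. Since~$|V(G'')|=n^{\Oh(1)}$ and~$k'=\Oh(k)$, the claimed consequences follow from the~\W1-hardness and the~$f(k)\cdot n^{o(k)}$ ETH lower bound for \IS quoted in \Cref{permissiveMaxCut}; all edges carry unit weight throughout.

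The main obstacle I anticipate is exactly the converse direction of the equivalence in the \emph{permissive} setting: unlike the strict version, a permissive algorithm may return an arbitrary improving coloring that need not be inclusion-minimal and could a priori assign colors~$\ge 3$. The fix is the minimum-cardinality observation, which guarantees an inclusion-minimal improving neighbor whenever any improving neighbor exists and thereby licenses the use of \Cref{lem:solutionwithoutblocked}. I would state and justify this observation carefully and verify that the degree-one gadget vertices never belong to an inclusion-minimal flip, so that the restriction to~$G'$ is genuinely a $2$-coloring.
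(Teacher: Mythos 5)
There is a genuine gap in the converse/soundness direction, and it is precisely at the point you flagged as the ``main obstacle''. Your pendant gadget, combined with \Cref{lem:solutionwithoutblocked} and inclusion-minimality, only controls improving colorings \emph{inside the $k'$-neighborhood}; it establishes the equivalence ``$\colo$ has an improving $k'$-neighbor in $G''$ iff $(A,B)$ has one in $G'$'', which is exactly what the paper needs for the \emph{strict} version (and indeed the paper uses this same degree-one gadget for the strict-version corollary after \Cref{thm:maxcutWhard}). But the permissive version allows the algorithm to return \emph{any} globally better coloring, with no bound on its flip distance from $\colo$, so your reduction must additionally guarantee that $\colo$ is \emph{globally} optimal in $G''$ whenever $(A,B)$ is optimal in $G'$. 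The pendant gadget does not give this: a degree-one pendant can always be recolored to avoid its neighbor's color, so the gadget edges are fully satisfiable under every coloring of $V(G')$ and impose no constraint on global optima. Concretely, take a no-instance, so that $(A,B)=(V\cup Y, X\cup\{v^*\})$ is an optimal $2$-partition of the graph $G'$ from \Cref{permissiveMaxCut}. Recolor any vertex $u\in V$ with $N_G(u)\neq\emptyset$ to color $3$ and recolor its color-$3$ pendants to color $1$: all pendant edges and all edges from $u$ to $X\cup\{v^*\}$ remain properly colored, and the $|N_G(u)|$ edges from $u$ into $V$ become properly colored, a strict global improvement. A permissive algorithm may legitimately return this coloring, and your reduction would then answer ``yes'' on a no-instance. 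The ``minimum-cardinality inclusion-minimal neighbor'' observation does not repair this, because the returned coloring need not lie in any $k'$-neighborhood at all.

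The paper avoids this by using a gadget that constrains \emph{optimal} colorings rather than inclusion-minimal local ones: for each color $i\in[1,c]$ it adds an independent set $X_i$ of size $n^2$, makes $(X_1,\dots,X_c)$ a complete $c$-partite graph, and joins every vertex of $X_i$ for $i\in[3,c]$ to every vertex of $V$. Any coloring failing to properly color all these new edges misses at least $n^2$ edges and is therefore not optimal; hence every optimal coloring assigns the colors to the $X_i$ via a permutation and is forced to color $V$ with only two colors. Combined with the promise from \Cref{permissiveMaxCut} (an optimal partition lies within flip distance $k$ of $(A,B)$), this yields the global statement needed for the permissive reduction. If you want to keep a blocking-style construction, you would have to make the penalty for a $V$-vertex taking a color $i\geq 3$ unavoidable under \emph{every} coloring of the gadget, which degree-one pendants cannot do.
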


\begin{proof}
Let~$I:=(G=(V,E),\omega,A,B,k)$ be an instance of~\LMC with~$\omega(e) = 1$ for each edge~$e\in E$, such that there is an optimal partition~$(A',B')$ for~$G$ with flip-distance at most~$k$ with~$(A,B)$ and let~$n:= |V|$ and~$m:= |E|$.
Due to~\Cref{permissiveMaxCut}, even under these restrictions, \LMC does not admit an~\FPT-algorithm when parameterized by~$k$, unless $\FPT = \W1$ and cannot be solved in $f(k) \cdot n^{o(k)}$~time for any computable function~$f$, unless the ETH fails.

Fix a constant~$c > 2$.
We describe how to obtain in polynomial time an equivalent instance~$I':=(G':=(V',E'),c,\omega', \colo,k)$ of~\LGGC.
We obtain the graph~$G'$ by adding for each~$i\in [1,c]$ an independent set~$X_i$ of size~$n^2$ to~$G$ and adding edges such that each vertex of~$X_i$ is adjacent with each vertex of~$\{v\in X_j\mid j\in [1,c]\setminus \{i\}\}$.
Additionally, for each~$i\in [3,c]$, we add edges between each vertex of~$X_i$ and each vertex of~$V$.
This completes the construction of~$G'$.
Again, each edge receives weight 1 with respect to the weight function~$\omega'$.
Finally, we define the~$c$-coloring~$\colo$ of~$G'$.
For each~$i\in[1,c]$, we set~$\colo(v) := i$ for each vertex~$v\in X_i$.
Additionally, for each vertex~$v\in A$, we set~$\colo(v) := 1$ and for each vertex~$w\in B$, we set~$\colo(w) := 2$.
This completes the construction of~$I'$.

Note that~$E'(\colo)$ contains all edges of~$E'\setminus E$ and thus misses less than~$n^2$ edges of~$G'$ in total.
Intuitively, this ensures that only vertices of~$V$ may flip their color and only to the colors~1 or~2, since in each other~$c$-coloring, at least~$n^2$ edges are missing.
In other words, the large independent sets~$X_i$ ensure that to improve over~$\colo$, one can only flip vertices of~$V$ from color~1 to~2 or vice versa.
This is then improving if and only if the corresponding flip on the~\LMC-instance~$I$ is improving.

Next, we show the correctness of the reduction.

$(\Rightarrow)$
Let~$(A',B')$ be an optimal partition of~$G$ that improves over~$(A,B)$ and let~$S$ be the flip between~$(A,B)$ and~$(A',B')$.
By assumption, we know that~$S$ has size at most~$k$.
We define a~$c$-coloring~$\colo^*$ for~$G'$ as follows:
The colorings~$\colo$ and~$\colo^*$ agree on all vertices of~$V'\setminus S$, for each vertex~$v\in A\cap S$, we set~$\colo^*(v) := 2$, and for each vertex~$w\in B\cap S$, we set~$\colo^*(w) := 1$.
Note that~$\Dflip(\colo,\colo^*) = S$ which implies that~$\colo^*$ and~$\colo$ have flip-distance at most~$k$.
Moreover, note that~$A' = \{v\in V\mid \colo^*(v) = 1\}$ and~$B' = \{v\in V\mid \colo^*(v) = 2\}$.
It remains to show that~$\colo^*$ improves over~$\colo$.
To this end, note that both~$E'(\colo)$ and~$E'(\colo^*)$ contain all edges of~$E'\setminus E$.
Hence, $\colo^*$ improves over~$\colo$ if and only if~$|E'(\colo^*)\cap E| > |E'(\colo)\cap E|$.
Note that~$E'(\colo^*)\cap E = E(A',B')$ and~$E'(\colo)\cap E = E(A,B)$.
Hence, the assumption that~$(A',B')$ is a better partition for~$G$ than~$(A,B)$ implies that~$\colo^*$ improves over~$\colo$.
Consequently, $I'$ is a yes-instance of~\LGGC.

$(\Leftarrow)$
Let~$\colo^*$ be an optimal~$c$-coloring for~$G'$ and assume that~$\colo^*$ improves over~$\colo$.
To show that there is a better partition for~$G$ than~$(A,B)$, we first prove that each optimal~$c$-coloring~$\colo^*$ for~$G'$ contains all edges of~$E'\setminus E$.
\begin{claim}
It holds that $E'(\colo^*)$ contains all edges of~$E'\setminus E$. 
\end{claim}
\begin{claimproof}
Assume towards a contradiction that~$E'(\colo^*)$ does not contain all edges of~$E'\setminus E$.
Since~$E'(\colo^*)$ does not contain all edges of~$E'\setminus E$, there is an~$i\in [1,c]$ and a vertex~$x\in X_i$, such that at least one edge incident with~$x$ is not contained in~$E'(\colo^*)$.
We distinguish two cases.

\textbf{Case 1:} There is a vertex~$y\in X_i$, such that each edge incident with~$y$ is contained in~$E'(\colo^*)$\textbf{.}
Consider the~$c$-coloring~$\colo'$ for~$G'$ obtained from~$\colo^*$ by flipping the color of vertex~$x$ to~$\colo^*(y)$.
Since~$x$ and~$y$ have the exact same neighborhood by definition of~$G'$ and are not adjacent, each edge incident with~$x$ is contained in~$E'(\colo')$.
Consequently, $\colo'$ is a better~$c$-coloring for~$G'$ than~$\colo^*$.
This contradicts the fact that~$\colo^*$ is an optimal~$c$-coloring for~$G'$.

\textbf{Case 2:} Each vertex of~$X_i$ is incident with at least one edge that is not contained in~$E'(\colo^*)$\textbf{.} 
Since~$X_i$ is an independent set in~$G'$, this directly implies that~$E'(\colo^*)$ misses at least~$|X_i| = n^2$ edges of~$E'$.
Hence, $E(\colo^*)$ contains less edges than~$E'(\colo)$ and is thus not an optimal~$c$-coloring for~$G'$, a contradiction.
\end{claimproof}\\

By the above, we know that~$\colo^*$ contains all edges of~$E'\setminus E$.
Since~$G' - V$ is a complete~$c$-partite graph with~$c$-partition~$(X_1, \dots, X_c)$, there is a bijection~$\pi\colon [1,c]\to [1,c]$, such that for each~$i\in [1,c]$, each vertex of~$X_i$ receives color~$\pi(i)$ under~$\colo^*$.
Moreover, since for each~$j\in [3,c]$, each vertex~$v\in V$ is adjacent with each vertex of~$X_j$, each vertex of~$V$ receives either color~$\pi(1)$ or color~$\pi(2)$ under~$\colo^*$.
For simplicity, we assume in the following that~$\pi$ is the identity function, that is, for each~$i\in [1,c]$, each vertex of~$X_i$ receives color~$i$ under~$\colo^*$ and each vertex of~$V$ receives either color~1 or color~2 under~$\colo^*$.
Let~$A' := \{v\in V \mid  \colo^*(v) = 1\}$ and let~$B':=\{v\in V \mid  \colo^*(v) = 2\}$.
Note that~$|E'(\colo^*)| = |E'\setminus E| + |E(A',B')|$ and that~$|E'(\colo)| = |E'\setminus E| + |E(A,B)|$.
Hence, $\colo^*$ is a better~$c$-coloring for~$G'$ than~$\colo$ if and only if~$(A', B')$ is a better partition of~$G$ than~$(A,B)$.
Since~$\colo^*$ improves over~$\colo$, this implies that~$(A,B)$ is not an optimal partition for~$G$. 
By assumption there is an optimal partition for~$G$ having flip-distance at most~$k$ with~$(A,B)$.
Hence, $I$ is a yes-instance of~\LMC.
By the first direction, this further implies that~$I'$ is a yes-instance of~\LGGC if~$\colo$ is not an optimal~$c$-coloring for~$G'$.

Note that this implies that there is an optimal~$c$-coloring for~$G'$ having flip-distance at most~$k$ with~$\colo$.
Hence, the reduction is also correct for the permissive version of~\LGGC.
Recall that~\LMC does not admit an~\FPT-algorithm when parameterized by~$k$, unless $\FPT = \W1$ and cannot be solved in $f(k) \cdot n^{o(k)}$~time for any computable function~$f$, unless the ETH fails.
Since~$|V'|\in n^{\Oh(1)}$, this implies that even the permissive version of~\LGGC does not admit an~\FPT-algorithm when parameterized by~$k$, unless $\FPT = \W1$ and cannot be solved in $f(k) \cdot |V'|^{o(k)}$~time for any computable function~$f$, unless the ETH fails.
\end{proof}

\subsection{Hardness for related partition problems}
Based on~\Cref{permissiveMaxCut}, we are also able to show hardness for a previously considered local search version of~\textsc{Min Bisection} and~\textsc{Max Bisection}~\cite{FFL+12}.
In both these problems, the input is again an undirected graph~$G$ and the goal is to find a balanced partition~$(X,Y)$ of the vertex set of~$G$ that minimizes (maximizes) the edges in~$E(X,Y)$.
Here, a partition~$(X,Y)$ is~\emph{balanced} if the size of~$X$ and the size of~$Y$ differ by at most one.
Due to the close relation to~\LMC, the proposed local neighborhood for these problems is also the~$k$-flip-neighborhood.
The corresponding local search problems in which we ask for a better balanced partition of flip-distance at most~$k$ are denoted by~\textsc{LS Min Bisection} and~\textsc{LS Max Bisection}, respectively.
It was shown that both these local search problems can be solved in $2^{\Oh(k)}\cdot n^{\Oh(1)}$~time on restricted graph classes~\cite{FFL+12} but~\W1-hardness on general graphs was not shown so far.
 
\begin{corollary}
\textsc{LS Min Bisection} and~\textsc{LS Max Bisection} are~\W1-hard when parameterized by~$k$ and cannot be solved in $f(k) \cdot n^{o(k)}$~time for any computable function~$f$, unless the~ETH fails.
This running time lower-bound holds even for the permissive version of both problems and both permissive versions do not admit~\FPT-algorithms when parameterized by~$k$, unless $\FPT = \W1$.
\end{corollary}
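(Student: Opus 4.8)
The plan is to reduce the permissive version of~\LMC, whose hardness is established in~\Cref{permissiveMaxCut}, to the permissive version of~\textsc{LS Max Bisection}, and then to obtain~\textsc{LS Min Bisection} by complementing the constructed graph. I would use~\Cref{permissiveMaxCut} as a black box, together with one extra observation that follows from its promise: since a no-instance admits an optimal partition within flip distance~$k$, a~$k$-optimal initial partition is in fact \emph{globally} optimal (the optimal partition within distance~$k$ cannot beat a~$k$-optimal one, so it has the same value). Because in the instances produced below an improvement exists if and only if one already exists within the prescribed radius, the reduction will establish hardness of the strict and the permissive versions simultaneously.

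First I would make the max-cut instance balanced. Given~$(G,\omega,A,B,k)$ with unit weights and~$|A|\geq |B|$, set~$D:=|A|-|B|$ and add~$k$ isolated vertices to the~$A$-side and~$k+D$ isolated vertices to the~$B$-side, yielding a graph~$H$ in which both parts have size~$|A|+k$. As isolated vertices carry no edges, every partition of~$H$ has the same number of edges between its two parts as the partition it induces on~$V(G)$. The target instance is~$(H,\mathcal{A},\mathcal{B},k'')$ with~$k'':=2k$, where we ask for a better \emph{balanced} partition at flip distance at most~$k''$.

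For the forward direction, if~$(A,B)$ is not~$k$-optimal, then by the promise there is an improving partition that moves a set of~$a$ vertices out of~$A$ and a set of~$b$ vertices out of~$B$ with~$a+b\leq k$. Carrying out the same move in~$H$ shifts the balance by~$2|a-b|$, which I repair by moving~$|a-b|\leq k$ padding vertices from the larger side to the smaller one; this leaves the cut unchanged and produces a balanced partition at flip distance~$a+b+|a-b|=2\max(a,b)\leq 2k=k''$ with strictly larger cut, and since each side retains at least~$k$ isolated vertices enough padding is always available. For the backward direction, in a no-instance~$(A,B)$ is globally optimal, so every partition of~$V(G)$, and hence every balanced partition of~$H$, has cut value at most that of the initial balanced partition; thus the latter is optimal among balanced partitions and has no improving balanced neighbor at all. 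Consequently the answers coincide, and since~$k''=2k$ and~$|V(H)|=|V(G)|+2k+D\in\Oh(|V(G)|)$, both the~\W1-hardness and the~$f(k)\cdot n^{o(k)}$ lower bound carry over from~\Cref{permissiveMaxCut}.

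Finally, I would obtain~\textsc{LS Min Bisection} by passing to the complement~$\overline{H}$ with the same partition and the same~$k''$. For every balanced partition~$(X,Y)$ of a graph on~$n''$ vertices one has~$|E_{H}(X,Y)|+|E_{\overline{H}}(X,Y)|=|X|\cdot|Y|$, which equals the constant~$(n''/2)^2$ because~$n''$ is even for~$H$. Hence a balanced partition maximizes~$|E_{H}(X,Y)|$ exactly when it minimizes~$|E_{\overline{H}}(X,Y)|$, and since the~$k$-flip neighborhood depends only on the vertex sets it is identical in~$H$ and~$\overline{H}$; this turns the max-bisection instance into an equivalent min-bisection instance with identical parameter and vertex count. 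The step that needs the most care is the balance bookkeeping: padding must be placed on \emph{both} sides so that any improving max-cut move of flip distance at most~$k$ can be rebalanced within~$k$ additional flips. This is precisely what keeps the parameter blow-up at the harmless factor~$k''=2k$ and, because isolated vertices never affect the cut, guarantees that the padding cannot create a spurious improving balanced partition.
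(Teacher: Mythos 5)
Your proposal is correct and follows essentially the same route as the paper: pad both sides with enough isolated vertices so that any improving $k$-flip of the \textsc{Max Cut} instance can be rebalanced within $k$ additional flips (giving $k'=2k$), use the promise from \Cref{permissiveMaxCut} that a no-instance is globally optimal for the backward direction, and pass to the complement graph for \textsc{LS Min Bisection}. The only cosmetic difference is the exact amount of padding ($2k+D$ isolated vertices versus the paper's $n+2k$), which does not affect the argument.
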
 
\begin{proof}
First, we show the statement for~\textsc{LS Max Bisection}.
Afterwards, we discuss how to obtain the similar intractability result for~\textsc{LS Min Bisection}.
Let~$I:=(G=(V,E),\omega,A,B,k)$ be an instance of~\LMC with~$\omega(e) = 1$ for each edge~$e\in E$, such that there is an optimal partition~$(A',B')$ for~$G$ with flip-distance at most~$k$ with~$(A,B)$ and let~$n:= |V|$.
Due to~\Cref{permissiveMaxCut}, even under these restrictions~\LMC is~\W1-hard when parameterized by~$k$ and cannot be solved in $f(k) \cdot n^{o(k)}$~time for any computable function~$f$, unless the ETH fails.

We obtain an equivalent instance~$I':=(G',X,Y,k')$ of~\textsc{LS Max Bisection}, by simply adding a large independent set to~$G$.
That is, we obtain the graph~$G'= (V',E')$ by adding a set~$Z$ of~$n+2k$ isolated vertices to~$G$, setting~$X:= A \cup Z_A$ and~$Y := B \cup (Z \setminus Z_A)$ for some arbitrary vertex set~$Z_A\subseteq Z$ of size~$|B|+k$, and setting~$k':= 2k$.
Note that~$G$ and~$G'$ have the exact same edge set and that~$X$ and~$Y$ have the same size and contain at least~$k$ vertices of~$Z$ each.
Intuitively, this ensures that we can perform an improving~$k$-flip on the vertices of~$V$ and afterwards end back at a balanced partition by flipping at most~$k$ additional vertices of~$Z$ to the smaller part of the resulting potentially not balanced partition.

Note that for each partition~$(X',Y')$ of~$G'$, $E'(X',Y') = E(X\cap V,Y\cap V)$.
This directly implies that~$(X,Y)$ is an optimal balanced partition for~$G'$ if~$(X\cap V,Y\cap V) = (A,B)$ is an optimal partition for~$G$.
Hence, $I'$ is a no-instance of~\textsc{LS Max Bisection} if~$I$ is a no-instance of~\LMC, since by assumption, $(A,B)$ is an optimal partition for~$G$ if and only if~$I$ is a no-instance of~\LMC.
It thus remains to show that~$I'$ is a yes-instance of~\textsc{LS Max Bisection} if~$I$ is a yes-instance of~\LMC.
By the above, this then implies that~$I'$ is a no-instance of~\textsc{LS Max Bisection} if and only if~$(X,Y)$ is an optimal balanced partition for~$G'$.

Assume that~$I$ is a yes-instance of~\LMC.
This implies that~$(A,B)$ is not an optimal partition of~$G$.
Let~$(A',B')$ be an optimal partition of~$G$.
By assumption, we can assume that~$(A',B')$ has flip-distance at most~$k$ with~$(A,B)$.
Let~$\widehat{X} := A' \cup Z_A$ and~$\widehat{Y} := B' \cup (Z \setminus Z_A)$.
Note that~$(\widehat{X}, \widehat{Y})$ has flip-distance at most~$k$ with~$(X,Y)$ and is a better partition for~$G'$ than~$(X,Y)$.
Still, $(\widehat{X}, \widehat{Y})$ might not be a balanced partition.
But since~$(\widehat{X}, \widehat{Y})$ has flip-distance at most~$k$ with~$(X,Y)$, the size of~$\widehat{X}$ and the size of~$\widehat{Y}$ differ by at most~$k$.
Hence, we can obtain a balances partition~$(X',Y')$ for~$G'$ by flipping at most~$k$ vertices from~$Z$ from the larger part of the partition to the smaller part.
This is possible, since by construction both~$X$ and~$Y$ contain at least~$k$ vertices of~$Z$.
Since~$(X',Y')$ is obtained from~$(\widehat{X}, \widehat{Y})$ by flipping only isolated vertices, $(X',Y')$ is also a better partition for~$G'$ than~$(X,Y)$ and has flip-distance at most~$2k = k'$ with~$(X,Y)$.
Consequently, $I'$ is a yes-instance of~\textsc{LS Max Bisection} if~$I$ is a yes-instance of~\LMC.

Hence, $I'$ is a no-instance of~\textsc{LS Max Bisection} if and only if~$(X,Y)$ is an optimal balanced partition for~$G'$.
This implies that the reduction also works for the permissive version of~\textsc{LS Max Bisection}.
Recall that~\LMC is~\W1-hard when parameterized by~$k$ and cannot be solved in $f(k) \cdot n^{o(k)}$~time for any computable function~$f$, unless the ETH fails.
Since~$|V'|\in n^{\Oh(1)}$ and~$k' \in \Oh(k)$, this implies that (i)~the strict version of~\textsc{LS Max Bisection} is~\W1-hard when parameterized by~$k'$, (ii)~the permissive version of~\textsc{LS Max Bisection} does not admit an~\FPT-algorithm when parameterized by~$k'$, unless $\FPT = \W1$, and (iii)~both versions of~\textsc{LS Max Bisection} cannot be solved in $f(k) \cdot |V'|^{o(k')}$~time for any computable function~$f$, unless the ETH fails.

The reduction to~\textsc{LS Min Bisection} works analogously by not considering the graph~$G'$ as the input graph, but the complement graph~$G'':= (V',E'')$ of~$G'$.
Consequently, for each balanced partitions~$(X',Y')$ of~$G''$, $|E''(X',Y')| = |V'|^2/4- |E'(X',Y')|$.
In other words, $(X',Y')$ is a better partition for~$G''$ than~$(X,Y)$ if and only if~$(X',Y')$ is a better partition for~$G'$ than~$(X,Y)$.
Hence, the intractability results also hold for the strict and permissive versions of~\textsc{LS Min Bisection}.
\end{proof}

Additionally, based on the close relation of~\textsc{Max 2-SAT} and~\MC, we can also transfer new hardness results to the strict and permissive version of local search for~\textsc{Max SAT} with respect to the~$k$-flip-neighborhood.
This problem was considered by Szeider~\cite{Szei11} under the name of~\textsc{$k$-Flip Max Sat}.
Here, the input is a boolean formula~$F$ in CNF, an assignment~$\tau$ of the variables of~$F$, and an integer~$k$, and we ask for an assignment~$\tau'$ of the variables of~$F$ for which~$\dflip(\tau,\tau')\leq k$ and that satisfies more clauses of~$F$ than~$\tau$.
Szeider~\cite{Szei11} showed that (i)~the strict version of~\textsc{$k$-Flip Max Sat} is~\W1-hard when parameterized by~$k$ even on formulas where each clause has size two and (ii)~the permissive version of~\textsc{$k$-Flip Max Sat} does not admit an \FPT-algorithm when parameterized by~$k$, unless~$\FPT = \W1$, even when each clause has size at most three and the formula is either Horn or anti-Horn\footnote{Here, a formula is Horn (anti-Horn), if each clause contains at most one positive (negative) literal.}.

\begin{theorem}\label{flip sat implications}
Even on formulas~$F$ where each clause has size two and contains exactly one positive and one negative literal, both the strict and the permissive versions of~\textsc{$k$-Flip Max Sat} cannot be solved in $f(k) \cdot |F|^{o(k)}$~time for any computable function~$f$, unless the ETH fails, the strict version of~\textsc{$k$-Flip Max Sat} is~\W1-hard when parameterized by~$k$, and the permissive version of~\textsc{$k$-Flip Max Sat}  does not admit an~\FPT-algorithm when parameterized by~$k$, unless $\FPT = \W1$.
\end{theorem}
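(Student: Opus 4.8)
The plan is to reduce from \LMC with unit weights, exploiting that a $2$-clause with exactly one positive and one negative literal, say $(\overline{x_u}\vee x_v)$, is an implication that is falsified precisely when $x_u$ is true and $x_v$ is false. First I would record the key correspondence: for a graph $H=(V,E)$, introduce a variable $x_v$ for every vertex and, for every edge $\{u,v\}$, the two clauses $(\overline{x_u}\vee x_v)$ and $(x_u\vee\overline{x_v})$, both of which have exactly one positive and one negative literal. For an assignment~$\tau$ with side set $A=\tau^{-1}(1)$, exactly one of these two clauses is falsified when $u$ and $v$ lie on different sides and none is falsified otherwise, so the number of satisfied clauses equals $2|E|-|E(A,V\setminus A)|$. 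Hence a $k$-flip increases the number of satisfied clauses if and only if it decreases the size of the cut: \textsc{$k$-Flip Max Sat} on such formulas is exactly local search for \emph{minimising} the cut, with $k$ and the instance size preserved up to a constant factor.

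The second step handles the fact that this is the \emph{opposite} optimisation direction to \LMC. On a \emph{bipartite} graph the two directions coincide after relabelling: if $H$ has parts $L$ and $R$ and we flip the side of every vertex of $R$ in the initial partition, obtaining $(A^{*},B^{*})$, then every cut edge becomes uncut and vice versa, so $\mathrm{cut}(A^{*},B^{*})=|E|-\mathrm{cut}(A,B)$; moreover a vertex set $S$ acts identically on both partitions. Thus $S$ is a cut-increasing flip for $(A,B)$ if and only if $S$ is a cut-\emph{decreasing} flip for $(A^{*},B^{*})$. Combining this side-swap with the equality-encoding above turns any bipartite unit-weight \LMC instance into an equivalent \textsc{$k$-Flip Max Sat} instance of the required clause shape with the same $k$, and feeding in the instances from \Cref{thm:maxcutWhard} yields \W1-hardness of the strict version parameterised by $k$.

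For the tight ETH bound and the permissive version, this route is not enough, since \Cref{thm:maxcutWhard} reduces from \CL with a quadratic blow-up ($k'=\Theta(k^{2})$) and provides neither a permissive statement nor the $n^{o(k)}$ lower bound. I would therefore aim to start from \Cref{permissiveMaxCut}, whose instances come from \IS with $k'=k+1$ and already contain an optimal partition within the $k$-flip neighbourhood, which are precisely the two ingredients needed for the permissive hardness and for the ETH bound. The plan is again to apply the edge-wise equality-encoding so that falsified clauses count cut edges, and to verify that the ``optimal-solution-within-distance-$k$'' guarantee survives the translation, so that both the ``no \FPT\ unless $\FPT=\W1$'' and the ``no $f(k)\cdot n^{o(k)}$ under the ETH'' conclusions transfer with $k$ preserved and $|F|$ polynomial in~$n$.

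The main obstacle is exactly this mismatch of directions combined with the need for \emph{signable} source instances: the side-swap that converts cut-increasing into cut-decreasing flips only works on bipartite graphs, whereas the instances of \Cref{permissiveMaxCut} embed a copy of an arbitrary graph and hence are not bipartite. Resolving this is the crux, and I see two routes: either re-run the \IS-based construction so that an improving move must \emph{decrease} the cut, mirroring every gadget of \Cref{permissiveMaxCut} with its sign reversed while keeping the independent-set test and the anchor vertex $v^{*}$ intact on the minimisation side; or first make the instance bipartite, for instance by subdividing the edges of the embedded graph and recomputing the blocking degrees, so that the clean side-swap applies. The sign bookkeeping of these gadgets is elementary but delicate; once it is in place, the remaining verification, namely that one falsified clause corresponds to exactly one cut edge, is routine.
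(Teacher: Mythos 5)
Your core gadget is the same as the paper's: reduce from unit-weight \LMC{} by replacing each edge~$\{u,v\}$ with the two mixed-sign clauses~$\{u,\neg v\}$ and~$\{\neg u, v\}$. Where you diverge is in the bookkeeping, and your arithmetic is the correct one: these two clauses together assert~$u \Leftrightarrow v$, so both are satisfied exactly when~$u$ and~$v$ receive the \emph{same} truth value, the number of satisfied clauses is~$2|E|-|E(\tau)|$, and the formula rewards \emph{decreasing} the cut. The paper's proof instead asserts that both clauses are satisfied iff the edge is properly colored and concludes the count is~$|E|+|E(\tau)|$; a direct case check shows this is backwards, so the effort you invest in reversing the optimisation direction is not wasted but addresses a real defect in the naive reduction. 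Your side-swap observation --- on a bipartite graph, complementing the initial partition on one part exchanges cut and uncut edges and commutes with every flip set --- is correct and, combined with \Cref{thm:maxcutWhard} (whose instances are bipartite with unit weights), does yield \W1-hardness of the strict version parameterized by~$k$.

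The genuine gap is the remainder of the statement. The permissive hardness and the~$f(k)\cdot|F|^{o(k)}$ ETH bound must be sourced from \Cref{permissiveMaxCut}, and you correctly observe both that \Cref{thm:maxcutWhard} cannot supply them (quadratic parameter blow-up from \CL, no permissive guarantee) and that the instances of \Cref{permissiveMaxCut} are not bipartite, so your side-swap does not apply there. At that point you only name two candidate repairs --- re-deriving the \IS-based construction for the minimisation direction, or bipartising the embedded graph --- without executing either. The first requires re-verifying every gadget of \Cref{permissiveMaxCut} with reversed signs, including the property that an optimal solution lies within the~$k$-flip neighbourhood of the initial one; the second changes the local-search structure itself, since subdivision vertices can be flipped and the blocking-degree arguments must be redone. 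Neither is routine, so as written the proposal establishes only the strict \W1-hardness claim and leaves the ETH lower bound and the permissive statement unproven.
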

\begin{proof}
We present a reduction from~\LGGC[2], for which the desired intractability results hold even for the permissive version due to~\Cref{permissiveMaxCut}.
Let~$I:=(G=(V,E),\omega,\colo,k)$ be an instance of~\LMC with~$\omega(e) = 1$ for each edge~$e\in E$.
We define a formula~$F$ as follows:
The variables of~$F$ are exactly the vertices of~$V$ and for each edge~$\{u,v\}\in E$, $F$ contains the clauses~$\{u, \neg v\}$ and~$\{\neg u, v\}$.

Let~$\tau:V \to \{1,2\}$ be a~$2$-coloring of~$V$.
We interpret~$\tau$ as an assignment for~$F$, where color~$1$ ($2$) represents the truth value ``true'' (``false''). 
Note that by construction, for each edge~$\{u,v\}\in E$, $\tau$ satisfies at least one of the clauses~$\{u, \neg v\}$ and~$\{\neg u, v\}$.
Moreover, $\tau$ satisfies both clauses~$\{u, \neg v\}$ and~$\{\neg u, v\}$ if and only if the edge~$\{u,v\}$ is properly colored under~$\tau$.
This implies that~$\tau$ satisfies~$|E| + |E(\tau)|$ clauses of~$F$.
Consequently, an assignment~$\tau'$ of~$F$ satisfies more clauses of~$F$ that the~$2$-coloring~$\colo$ if and only if~$|E(\tau')| > |E(\colo)|$.
Hence, $I$ is a yes-instance of~\LGGC[2] if and only if~$(F,\colo,k)$ is a yes-instance of~\textsc{$k$-Flip Max Sat}.
\end{proof}

Hence, in comparison to the hardness results presented by Szeider~\cite{Szei11}, \Cref{flip sat implications} provides a tight ETH lower bound as well as hardness for formulas that are 2-Sat, Horn and anti-Horn simultaneously.

\section{Algorithms}
In this section, we complement the running time lower bound of~\Cref{permissiveMaxCCut} by presenting an algorithm for~\LGGC that runs in $\Delta^{\Oh(k)} \cdot c \cdot n$~time, where~$\Delta$ denotes the maximum degree of the input-graph. 
Our algorithm for \LGGClong{} follows a simple framework: Generate a collection of candidate sets~$S$ that may improve the coloring if the vertices in~$S$ flip their colors. 
For each such candidate set~$S$, we only know that the colors of the vertices of~$S$ change, but we do not yet know which new color the vertices receive. 
To answer this question, that is, to find whether there is any coloring of~$S$ that leads to an improving coloring, we present an algorithm based on dynamic programming.

We first describe the subroutine that we use to find a best coloring for a given candidate set~$S$ of vertices to flip.

\begin{theorem}\label{dp}
Let~$G=(V,E)$ be a graph, let~$\omega:E\to \mathds{Q}$ be an edge-weight function, let~$\colo$ be a~$c$-coloring of~$G$, and let~$S\subseteq V$ be a set of size at most~$k$.
One can compute in $\Oh(3^k\cdot c\cdot k^2+k\cdot \Delta(G))$~time a~$c$-coloring~$\colo'$ of~$G$ such that~$\Dflip(\colo,\colo')\subseteq S$ and~$\omega(E(\colo'))$ is maximal among all such colorings.
\end{theorem}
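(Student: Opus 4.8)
The plan is to reduce the problem to choosing an optimal coloring $\phi\colon S\to[1,c]$ of the at most $k$ vertices in $S$, since any admissible~$\colo'$ must agree with $\colo$ on $V\setminus S$. First I would partition the edges into those with no endpoint in $S$, those with exactly one endpoint in $S$, and those inside $E(S)$. The first group contributes a constant to $\omega(E(\colo'))$ that is independent of $\phi$, so it can be dropped. For the second group I would precompute, for every $v\in S$ and every color $i$, the value $w_{\mathrm{out}}(v,i)$, the total weight of edges from $v$ to $V\setminus S$ whose other endpoint does not have color~$i$ under $\colo$. Writing $W_{\mathrm{ext}}(v)$ for the total weight of edges from $v$ to $V\setminus S$, we have $w_{\mathrm{out}}(v,i)=W_{\mathrm{ext}}(v)-\sum_{u\in N(v)\setminus S,\ \colo(u)=i}\omega(\{u,v\})$, so a single scan of the neighborhoods of all vertices of $S$ computes all the relevant values (those for colors appearing on external neighbors, the rest equalling $W_{\mathrm{ext}}(v)$) in $\Oh(k\cdot\Delta)$ time. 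For the third group, letting $W_{\mathrm{int}}:=\omega(E(S))$, the weight of properly colored internal edges equals $W_{\mathrm{int}}$ minus the weight of internal edges that are monochromatic under $\phi$. Hence maximizing $\omega(E(\colo'))$ is equivalent to maximizing $\sum_{v\in S}w_{\mathrm{out}}(v,\phi(v))-\sum_{\{u,v\}\in E(S),\ \phi(u)=\phi(v)}\omega(\{u,v\})$ over all $\phi$.

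The main idea is to compute this maximum by a dynamic program that assigns the color classes one color at a time, thereby avoiding the naive $c^k$ enumeration. I would process the colors $1,\dots,c$ and maintain a table indexed by subsets $T\subseteq S$: let $D_j[T]$ be the maximum of $\sum_{v\in T}w_{\mathrm{out}}(v,\phi(v))-\sum_{\{u,v\}\in E(T),\ \phi(u)=\phi(v)}\omega(\{u,v\})$ over all colorings $\phi$ of $T$ using only colors from $[1,j]$. The crucial observation enabling a clean transition is that if a set $A$ receives the fresh color $j$ while $T\setminus A$ uses colors in $[1,j-1]$, then no edge between $A$ and $T\setminus A$ is monochromatic, so the only newly created monochromatic internal edges are those inside $A$. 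This gives
\[
D_j[T]=\max_{A\subseteq T}\Big(D_{j-1}[T\setminus A]+\sum_{v\in A}w_{\mathrm{out}}(v,j)-\omega(E(A))\Big),
\]
with $D_0[\emptyset]=0$ and $D_0[T]=-\infty$ for $T\neq\emptyset$; the choice $A=\emptyset$ lets a color be skipped. Since every coloring $\phi$ corresponds bijectively to the disjoint family $(\phi^{-1}(1)\cap S,\dots,\phi^{-1}(c)\cap S)$, the value $D_c[S]$ is exactly the desired maximum, and the optimal~$\colo'$ is recovered by standard backtracking through the tables (assigning color $j$ to the set $A$ chosen at layer $j$) while keeping $\colo$ on $V\setminus S$.

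For the running time I would precompute $\omega(E(A))$ for all $A\subseteq S$ in $\Oh(2^k\cdot k)$ time by adding one vertex at a time, and store the $w_{\mathrm{out}}$ values in a table. The transitions for a fixed layer $j$ range over all pairs $(T,A)$ with $A\subseteq T\subseteq S$, and the number of such pairs is $\sum_{T\subseteq S}2^{|T|}=3^{|S|}\le 3^{k}$, which is the source of the $3^k$ factor. Each transition can be evaluated in $\Oh(k^2)$ time to form $\sum_{v\in A}w_{\mathrm{out}}(v,j)$ and combine the table entries, yielding $\Oh(3^k\cdot k^2)$ work per layer and $\Oh(3^k\cdot c\cdot k^2)$ over all $c$ layers; together with the $\Oh(k\cdot\Delta)$ preprocessing this matches the claimed bound.

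The step I expect to be the main obstacle is setting up the program so that each monochromatic internal edge is counted exactly once and never split or double-counted across layers. The observation that coloring a fresh class $A$ creates monochromatic edges only inside $A$, and never between $A$ and the already-colored part, is precisely what makes the per-layer transition local and correct; verifying this, together with the bijection between colorings $\phi$ and layered class choices, is the heart of the correctness argument.
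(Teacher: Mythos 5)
Your proposal is correct and follows essentially the same approach as the paper: the same $\theta_v^i$-style precomputation of external edge weights, and the same subset dynamic program over color classes with a $\max_{A\subseteq T}$ transition giving the $3^k$ factor. The only difference is complementary bookkeeping of the internal edges (you subtract the monochromatic edges inside the newly colored class, whereas the paper adds the properly colored edges between the new class and the already-colored part), which yields the same optimum.
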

\begin{proof}
We use  dynamic programming.
Initially, we compute for each vertex~$v\in S$ and each color~$i\in[1,c]$ the weight~$\theta_v^i$ of edges between~$v$ and vertices of~$V\setminus S$ that do not receive color~$i$ under~$\colo$, that is, $\theta_v^i := \omega(\{\{v,w\}\in E\mid w\in N(v)\setminus S, \colo(w)\neq i\})$.
Moreover, we compute the weight~$\omega_{S}$ of all properly colored edges of~$E(S,N[S])$ as~$\omega_{S} :=\omega(\{\{u,v\}\in E(S,N[S])\mid \colo(u) \neq \colo(v)\})$.
This can be done in $\Oh(c\cdot k + k\cdot \Delta(G))$~time.

The table~$T$ has entries of type~$T[S',c']$ for each vertex set~$S'\subseteq S$ and each color~$c' \in [1, c]$. 
Each entry~$T[S',c']$ stores the maximum total weights of properly colored edges with at least one endpoint in~$S'$ and no endpoint in~$S \setminus S'$ such that the following holds:
\begin{enumerate}\setlength{\itemsep}{0em}
\item[1.] the vertices in~$S'$ have some color in~$[1,c']$, and
\item[2.] every vertex~$v \in V \setminus S$ has color~$\colo(v)$. 
\end{enumerate}
 
 We start to fill the dynamic programming table by setting
 $T[S',1] := \sum_{v\in S'}\theta_v^1$  for each vertex set~$S'\subseteq S$.
 
 For each vertex set~$S'\subseteq S$ and each color~$c'\in[2,c]$, we set
\begin{align*}
T[S',c'] := &\max_{S''\subseteq S'} T[S'\setminus S'', c'-1] + \omega(E(S'', S'\setminus S'')) + \sum_{v\in S''}\theta_v^{c'}.
\end{align*} 
 
Intuitively, to find the best way to assign colors of~$[1,c']$ to the vertices of~$S'$, we search for the best vertex set~$S''\subseteq S'$, assign color~$c'$ to all vertices of~$S''$, and find the best way to assign the colors of~$[1,c'-1]$ to the vertices of~$S'\setminus S''$.
The maximal improvement~$\omega(E(\colo'))-\omega(E(\colo))$ for any~$c$-coloring~$\colo'$ with~$\Dflip(\colo,\colo')\subseteq S$ can then be found by evaluating~$T[S,c] - \omega_S$: this term corresponds to the maximum total weight of properly colored edges we get when distributing the vertices of~$S$ among all color classes minus the original weights when every vertex of~$S$ sticks with its color under~$\colo$.
The corresponding~$c$-coloring can be found via traceback.
 
The formal correctness proof is straightforward and thus omitted.
Hence, it remains to show the running time.
The dynamic programming table~$T$ has~$2^k \cdot c$ entries. 
Each of these entries can be computed in $\Oh(2^{|S'|} \cdot k^2)$~time. 
Consequently, all entries can be computed in $\Oh(\sum_{i= 0}^k \binom{k}{i}\cdot 2^i \cdot c\cdot k^2) = \Oh(3^k \cdot c\cdot k^2)$~time in total.
Hence, the total running time is $\Oh(3^k \cdot c\cdot k^2 + k\cdot \Delta(G))$.
\end{proof}

For~\LMC, if we enforce that each vertex of~$S$ changes its color, the situation is even simpler: 
When given a set~$S\subseteq V$ of~$k$ vertices that must flip their colors, the best possible improvement can be computed in~$\Oh(k\cdot\Delta(G))$~time, since every vertex of~$S$ must replace its color with the unique other color.


Recall that the idea of our algorithms for~\LMC and~\LGGC is to iterate over possible candidate sets of vertices that may flip their colors. 
With the next lemma we show that it suffices to consider those vertex sets that are connected in the input graph.

\begin{lemma}\label{lem connected}
Let~$I:=(G=(V,E), c,\omega, \colo,k)$ be an instance of~\LGGC. 
Then, for every inclusion-minimal improving~$k$-flip~$\colo'$ for~$\colo$, the vertex set~$\Dflip(\colo,\colo')$ is connected in~$G$.
\end{lemma}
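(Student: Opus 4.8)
The plan is to argue by contradiction, leveraging the inclusion-minimality of $\colo'$. Write $D := \Dflip(\colo,\colo')$ and suppose that $G[D]$ is \emph{not} connected. Then I can split $D = D_1 \sqcup D_2$ into two nonempty vertex sets with $E_G(D_1,D_2) = \emptyset$, for instance by taking $D_1$ to be one connected component of $G[D]$ and setting $D_2 := D \setminus D_1$. The goal is to show that one of these two parts already induces an improving $k$-flip whose flip set is a strict subset of $D$, which contradicts inclusion-minimality.

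For $i\in\{1,2\}$, let $\colo_i$ be the coloring that agrees with $\colo'$ on $D_i$ and with $\colo$ on $V\setminus D_i$, so that $\Dflip(\colo,\colo_i)=D_i$. The heart of the argument is the additivity identity
\[
\omega(E(\colo'))-\omega(E(\colo)) = \bigl(\omega(E(\colo_1))-\omega(E(\colo))\bigr) + \bigl(\omega(E(\colo_2))-\omega(E(\colo))\bigr).
\]
To prove it I would partition the edges into $E_0$, the edges with no endpoint in $D$, and, for $i\in\{1,2\}$, the set $E_i$ of edges having at least one endpoint in $D_i$. Since $E_G(D_1,D_2)=\emptyset$, every edge incident to $D$ has all of its $D$-endpoints within a single part, so $E_1$ and $E_2$ are disjoint and jointly exhaust the edges touching $D$, while $E_0$ contributes zero to the left-hand side because $\colo$ and $\colo'$ agree outside $D$. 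As $\colo_1$ differs from $\colo$ only on $D_1$, the quantity $\omega(E(\colo_1))-\omega(E(\colo))$ is determined entirely by the edges of $E_1$; and on those edges $\colo_1$ assigns exactly the same colors as $\colo'$ (both give the $D_1$-endpoints their $\colo'$-color and leave any endpoint in $V\setminus D$ untouched). Hence $\omega(E(\colo_1))-\omega(E(\colo))$ equals the $E_1$-contribution to $\omega(E(\colo'))-\omega(E(\colo))$, and symmetrically $\omega(E(\colo_2))-\omega(E(\colo))$ equals the $E_2$-contribution, giving the identity.

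Given the identity, the left-hand side is positive because $\colo'$ is improving, so at least one summand is positive; say $\omega(E(\colo_1)) > \omega(E(\colo))$. Then $\colo_1$ is an improving $k$-neighbor of $\colo$, since $\dflip(\colo,\colo_1)=|D_1|\le|D|\le k$, yet $\Dflip(\colo,\colo_1)=D_1\subsetneq D=\Dflip(\colo,\colo')$ because $D_2\neq\emptyset$. This contradicts the assumption that $\colo'$ is an inclusion-minimal improving $k$-flip for $\colo$, and therefore $\Dflip(\colo,\colo')$ must be connected in $G$.

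The main obstacle is the bookkeeping behind the additivity identity: one must verify that the absence of $D_1$--$D_2$ edges really does force every edge incident to $D$ into exactly one of $E_1$ or $E_2$, and that on each such block the paired colorings ($\colo_i$ with $\colo'$) agree on \emph{all} endpoints, including those lying in $V\setminus D$, where $\colo$ and $\colo'$ coincide by the definition of $D$. Once this case analysis on edge endpoints is set up, the remaining steps are a routine regrouping of weight contributions, and inclusion-minimality yields the contradiction immediately.
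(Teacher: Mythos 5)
Your proof is correct and follows essentially the same approach as the paper's: both decompose the improvement $\omega(E(\colo'))-\omega(E(\colo))$ additively over parts of $\Dflip(\colo,\colo')$ with no crossing edges (the paper sums over all connected components, you split off a single component versus the rest) and conclude that some part alone yields an improving flip with a strictly smaller flip set, contradicting inclusion-minimality. The bookkeeping you flag as the main obstacle is handled correctly, since every edge touching $\Dflip(\colo,\colo')$ has all of its flipped endpoints in a single part.
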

\begin{proof}
Let~$\colo'$ be an inclusion-minimal improving~$k$-flip for~$\colo$.
Let~$S' := \Dflip(\colo,\colo')$ be the vertices~$\colo$ and~$\colo'$ do not agree on and let~$\mathcal{C}$ denote the connected components in~$G[S']$.
We show that if there are at least two connected components in~$\mathcal{C}$, then there is an improving~$k$-neighbor~$\widetilde{\colo}$ of~$\colo$ with~$\Dflip(\colo, \widetilde{\colo}) \subsetneq \Dflip(\colo,\colo')$.
For each connected component~$C\in\mathcal{C}$, let~$E^+_C := (E(C, V) \cap E(\colo')) \setminus E(\colo)$ denote the set of properly colored edges in~$E(\colo')\setminus E(\colo)$ that have at least one endpoint in~$C$ and let~$E^-_C := (E(C, V) \cap E(\colo)) \setminus E(\colo')$ denote the set of properly colored edges in~$E(\colo)\setminus E(\colo')$ that have at least one endpoint in~$C$.
Note that~$E(\colo')\setminus E(\colo) = \sum_{C \in\mathcal{C}} E^+_C$ and that~$E(\colo)\setminus E(\colo') = \sum_{C \in\mathcal{C}} E^-_C$.
Hence, the improvement of~$\colo'$ over~$\colo$ is $$\omega(E(\colo')) - \omega(E(\colo)) = \sum_{C \in\mathcal{C}} \omega(E^+_C) - \sum_{C \in\mathcal{C}} \omega(E^-_C) = \sum_{C \in\mathcal{C}} (\omega(E^+_C) - \omega(E^-_C))$$
Since~$\colo'$ improves over~$\colo$, this implies that there is at least one connected component~$S\in\mathcal{C}$ with~$\omega(E^+_S) - \omega(E^-_S) > 0$.
Let~$\widetilde{\colo}$ be the~$c$-coloring of~$G$ that agrees with~$\colo$ on all vertices of~$V\setminus S$ and agrees with~$\colo'$ on all vertices of~$S$.
Hence, $\widetilde{\colo}$ is an improving~$k$-neighbor of~$\colo$ with~$\Dflip(\colo,\widetilde{\colo}) \subsetneq \Dflip(\colo,\colo')$. 
\end{proof}


We next combine \Cref{dp} and \Cref{lem connected} to obtain our algorithm for~\LGGC.
\begin{theorem}
\LGGC can be solved in $\Oh((3\cdot e)^k \cdot (\Delta(G)-1)^{k+1} \cdot c \cdot k^3 \cdot n)$~time.
\end{theorem}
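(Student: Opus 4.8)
The plan is to realize exactly the framework announced before the theorem: generate a family of candidate vertex sets that is guaranteed to contain $\Dflip(\colo,\colo')$ for \emph{some} inclusion-minimal improving $k$-flip $\colo'$ whenever one exists, and invoke the dynamic program of \Cref{dp} on each candidate. By \Cref{lem connected}, for every inclusion-minimal improving $k$-flip $\colo'$ the set $\Dflip(\colo,\colo')$ is connected in $G$ and has size at most $k$. Hence I would let the candidate family be exactly the collection of all connected vertex sets $S\subseteq V$ with $|S|\le k$. For each such $S$, \Cref{dp} returns a coloring $\colo_S$ with $\Dflip(\colo,\colo_S)\subseteq S$ of maximum weight $\omega(E(\colo_S))$; the algorithm returns the first $\colo_S$ with $\omega(E(\colo_S))>\omega(E(\colo))$ that it encounters, and otherwise reports that $\colo$ is $k$-optimal.

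Correctness splits into soundness and completeness. Soundness is immediate: any returned $\colo_S$ satisfies $\dflip(\colo,\colo_S)\le|S|\le k$ and $\omega(E(\colo_S))>\omega(E(\colo))$, so it is a genuine improving $k$-neighbor. For completeness, suppose an improving $k$-flip exists; then there is also an inclusion-minimal one $\colo'$, and $S^\star:=\Dflip(\colo,\colo')$ is connected with $|S^\star|\le k$ by \Cref{lem connected}, so $S^\star$ belongs to the enumerated family. Since $\colo'$ itself is a coloring with $\Dflip(\colo,\colo')\subseteq S^\star$, the coloring $\colo_{S^\star}$ produced by \Cref{dp} is at least as good: $\omega(E(\colo_{S^\star}))\ge\omega(E(\colo'))>\omega(E(\colo))$. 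Thus the algorithm finds an improvement exactly when one exists, and in fact returns the improving flip, solving the search version.

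For the running time the central quantity is the number of connected vertex sets of size at most $k$ together with the cost of enumerating them. Here I would invoke the standard bound that, in a graph of maximum degree $\Delta$, the number of connected vertex sets of size at most $k$ containing a fixed vertex is $\Oh((e(\Delta-1))^{k})$; this is the extremal count of rooted subtrees in the infinite $\Delta$-regular tree (a Fuss--Catalan estimate), and such sets can be generated with polynomial delay by a branching procedure that, from a current connected set, either adds a boundary neighbor or permanently discards it. Rooting each connected set at a distinguished element (e.g.\ its minimum-index vertex) so that it is output exactly once, the whole family has size $\Oh((e(\Delta-1))^{k}\cdot n)$ and is enumerated with $\Oh(k\cdot\Delta)$ overhead per set. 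Processing each candidate with \Cref{dp} costs $\Oh(3^k\cdot c\cdot k^2+k\cdot\Delta)$. The total is therefore $\Oh\big((e(\Delta-1))^{k}\cdot n\cdot(3^k\cdot c\cdot k^2+k\cdot\Delta)\big)$; using $\Delta\le 2(\Delta-1)$ for $\Delta\ge2$, the first summand is $\Oh((3e)^k(\Delta-1)^{k}\,c\,k^2\,n)$ and the additive $k\cdot\Delta$ term contributes $\Oh(e^k(\Delta-1)^{k+1}k\,n)$, both of which are absorbed into the claimed bound $\Oh((3e)^k\cdot(\Delta-1)^{k+1}\cdot c\cdot k^3\cdot n)$ since $(\Delta-1)\ge1$, $c\ge1$, and $k\le k^3$.

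The main obstacle is the running-time analysis rather than the correctness argument: one must establish the $(e(\Delta-1))^{k}$-type bound on the number of connected subgraphs and, more importantly, convert it into an enumeration routine whose total time matches this count up to the stated polynomial factors while producing each connected set exactly once. The remaining subtlety is bookkeeping: verifying that the additive $k\cdot\Delta$ term of \Cref{dp} (the cost of initializing the edge-weight contributions $\theta^i_v$ and $\omega_S$) is genuinely dominated within the final bound, which is precisely why the $(\Delta-1)^{k+1}$ and $k^3$ factors, rather than $(\Delta-1)^{k}$ and $k^2$, appear in the statement.
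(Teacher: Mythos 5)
Your proposal matches the paper's proof essentially verbatim: both reduce to enumerating all connected vertex sets of size at most $k$ (justified by \Cref{lem connected}), apply the dynamic program of \Cref{dp} to each candidate, and combine the $\Oh(e^k\cdot(\Delta(G)-1)^k\cdot k\cdot n)$ enumeration bound with the $\Oh(3^k\cdot c\cdot k^2+k\cdot\Delta(G))$ per-set cost; the paper simply cites the enumeration result rather than sketching the Fuss--Catalan argument as you do. Your running-time bookkeeping, including the absorption of the additive $k\cdot\Delta(G)$ term into the $(\Delta(G)-1)^{k+1}$ factor, is correct.
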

\begin{proof}
Let~$I=(G,c,\omega, \colo,k)$ be an instance of \LGGC.
By~\Cref{lem connected},~$I$ is a yes-instance of~\LGGC if and only if~$\colo$ has an improving~$k$-neighbor~$\colo'$ where~$S:=\Dflip(\colo,\colo')$ is connected.
Since we can enumerate all connected vertex sets~$S$ of size at most~$k$ in~$G$ in $\Oh(e^k\cdot (\Delta(G)-1)^k \cdot k \cdot n)$~time~\cite{KS21,KS15} and we can compute for each such set~$S$ a~$c$-coloring~$\colo'$ with~$\Dflip(\colo,\colo')\subseteq S$ that maximizes~$\omega(E(\colo'))$ in $\Oh(3^k\cdot c\cdot k^2 + \Delta(G) \cdot k)$~time due to~\Cref{dp}, we obtain the stated running time.  
\end{proof}
Alternatively, one can also find for each given candidate set~$S$ of size at most~$k$ a best~$c$-colorings~$\colo'$ with~$D(\colo,\colo') = S$ by enumerating all such colorings.
Note that for each candidate set~$S$ of size at most~$k$, this can be done in $\Oh((c-1)^k \cdot \Delta(G)\cdot k)$~time since each vertex~$v$ of~$S$ has to change its color to one of the colors of~$[1,c]\setminus \{\colo(v)\}$.
This implies the following even better running time for~\LMC and~\LGGC[3].

\begin{corollary}
\LMC can be solved in $\Oh(e^k \cdot (\Delta(G)-1)^{k+1} \cdot k^2 \cdot n)$~time and \LGGC[3] can be solved in~$\Oh((2\cdot e)^k \cdot (\Delta(G)-1)^{k+1} \cdot k^2 \cdot n)$~time.
\end{corollary}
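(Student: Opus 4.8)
The plan is to reuse the two-phase framework already set up for the general theorem, but to swap out the dynamic program of \Cref{dp} for a direct enumeration of colorings, which is the faster subroutine precisely when $c\in\{2,3\}$. As before, \Cref{lem connected} lets me restrict attention to connected candidate sets: if $\colo$ has any improving $k$-neighbor, then it has an inclusion-minimal improving $k$-flip $\colo'$ whose flip set $S:=\Dflip(\colo,\colo')$ is connected in $G$ and satisfies $|S|\le k$. I would therefore iterate over all connected vertex sets $S$ of size at most $k$; by \cite{KS21,KS15} these can be enumerated in $\Oh(e^k\cdot(\Delta(G)-1)^k\cdot k\cdot n)$ time, and since listing one set of size up to $k$ already costs $\Oh(k)$ time, there are at most $\Oh(e^k\cdot(\Delta(G)-1)^k\cdot n)$ of them.

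For each candidate set $S$, instead of calling \Cref{dp}, I would enumerate every $c$-coloring $\colo'$ with $\Dflip(\colo,\colo')=S$, that is, every coloring in which each vertex of $S$ is recolored to one of the $c-1$ colors different from its current color, and keep the best one. There are at most $(c-1)^{|S|}\le(c-1)^k$ such colorings, and the improvement $\omega(E(\colo'))-\omega(E(\colo))$ of each can be read off in $\Oh(k\cdot\Delta(G))$ time, since only the at most $k\cdot\Delta(G)$ edges incident with $S$ change their status. Correctness hinges on observing that the inclusion-minimal flip supplied above has $\Dflip(\colo,\colo')=S$ exactly, so that every vertex of $S$ genuinely changes color; hence enumerating only the colorings that recolor all of $S$ (rather than all colorings with $\Dflip(\colo,\colo')\subseteq S$) still encounters an improving coloring whenever one exists, and returning it solves the strict search version.

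It remains to read off the two running times. For $c=2$ the alternative color of each vertex is unique, so $(c-1)^k=1$ and a single coloring per set suffices; for $c=3$ each vertex has two alternatives, so $(c-1)^k=2^k$. Multiplying the number $\Oh(e^k(\Delta(G)-1)^k n)$ of connected sets by the per-set cost $\Oh((c-1)^k\cdot k\cdot\Delta(G))$, and using $\Delta(G)\le 2(\Delta(G)-1)$ for $\Delta(G)\ge 2$ to fold the extra $\Delta(G)$ into $(\Delta(G)-1)^{k+1}$, yields $\Oh(e^k(\Delta(G)-1)^{k+1}k^2 n)$ for \LMC and $\Oh((2e)^k(\Delta(G)-1)^{k+1}k^2 n)$ for \LGGC[3], where the $k^2$ is a safe over-estimate comfortably absorbing the enumeration and per-coloring overhead.

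The step I expect to need the most care is pinning down the correctness of the direct enumeration, namely arguing that it loses nothing to consider only colorings that flip all of $S$ rather than a subset: this is exactly where \Cref{lem connected} and inclusion-minimality must be invoked in tandem, since it is the minimality of the flip that forces $\Dflip(\colo,\colo')=S$. The conceptual payoff is the comparison of subroutines—direct enumeration costs $(c-1)^k$ per set versus the $3^k$ of \Cref{dp}, so it strictly wins exactly for $c\in\{2,3\}$ and merely ties at $c=4$—while the remaining degree and $k$ bookkeeping is routine.
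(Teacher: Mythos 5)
Your proposal is correct and follows essentially the same route as the paper: the paper likewise keeps the connected-set enumeration from \Cref{lem connected} and replaces the dynamic program of \Cref{dp} by directly enumerating all $(c-1)^{|S|}$ colorings with flip exactly $S$, evaluated in $\Oh(k\cdot\Delta(G))$ time each. Your additional care about why it suffices to consider colorings that recolor \emph{all} of $S$ (via inclusion-minimality) is a point the paper leaves implicit, but it is the same argument.
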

\iflong
\fi

\paragraph{Hill-Climbing Algorithm}
To obtain not only a single improvement of a given coloring but a $c$-coloring with a total weight of properly colored edges as high as possible, we introduce the following hill-climbing algorithm.

Given an initial coloring~$\colo$, we set the initial value of $k$ to 1. In each step, we use the above-mentioned algorithm for \LGGC to search for an improving coloring in the $k$-flip neighborhood of the current coloring.
Whenever the algorithm finds an improving~$k$-neighbor~$\colo'$ for the current coloring~$\colo$, the current coloring gets replaced by~$\colo'$ and~$k$ gets set back to one.
If the current coloring is~$k$-optimal, the value of~$k$ is incremented and the algorithm continues to search for an improvement in the new $ k $-flip neighborhood.
This is done until a given time limit is reached.

\iflonglong
\fi
\iflong
\paragraph{ILP Formulation.}
In our experiments, we also use the following ILP formulation for~\GGC.
For each vertex~$v\in V$ and each color~$i\in [1,c]$, we use a binary variable~$x_{v,i}$ which is equal to one if and only if~$\colo'(v) = i$.
We further use for each edge~$e\in E$ a binary variable~$y_e$ to indicate whether~$e$ is properly colored with respect to~$\colo'$.
Thus, for each edge~$\{u,v\}\in E$, the variable~$y_\{u,v\}$ is set to one if and only if for each color~$i\in [1,c]$, $x_{u,i} = 0$ or~$x_{v,i} = 0$.
This is ensured by the constraint~$x_{u,i} + x_{v,i} + y_{\{u,v\}}$.
\iflonglong
\begin{align}
\intertext{maximize $\sum_{e\in E} y_e\cdot \omega(e)$ subject to}
  \sum_{i\in[1,c]} x_{v,i} = 1 &&& \text{for each vertex~$v \in V$} \label{eq:exactOneColor}\\
  x_{u,i} + x_{v,i} + y_{\{u,v\}} \leq 2 &&& \text{for each edge~$\{u,v\}\in E$}\nonumber\\
   &&& \text{with $\omega(\{u,v\}) > 0$}\nonumber\\
   &&& \text{and each color~$i\in [1,c]$} \label{eq:propColoredEdges} \\
  x_{u,i} + \sum_{j\in [1,c]\setminus \{i\}} x_{v,j} - y_{\{u,v\}} \leq 1 &&& \text{for each edge~$\{u,v\}\in E$}\nonumber\\
   &&& \text{with $\omega(\{u,v\}) < 0$}\nonumber\\
   &&& \text{and each color~$i\in [1,c]$} \label{eq:propColoredEdgesNeg} \\
  x_{v,i} \in \{0,1\} &&& \text{for each vertex~$v\in V$}\nonumber\\
   &&& \text{and each color~$i\in[1,c]$}\nonumber\\
  y_e \in \{0,1\} &&& \text{for each edge~$e\in E$}\nonumber
\end{align}

\else 
\begin{align}
\intertext{maximize $\sum_{e\in E} y_e\cdot \omega(e)$ subject to}
  \sum_{i\in[1,c]} x_{v,i} = 1 &&& \text{for each $v \in V$} \nonumber
  \\
  x_{u,i} + x_{v,i} + y_{\{u,v\}} \leq 2 &&& \text{for each $\{u,v\}\in E$}\nonumber\\
   &&& \text{with $\omega(\{u,v\}) > 0$}\nonumber\\
   &&& \text{and each~$i\in [1,c]$}\nonumber 
   \\
  x_{u,i} + \sum_{j\in [1,c]\setminus \{i\}} x_{v,j} - y_{\{u,v\}} \leq 1 &&& \text{for each $\{u,v\}\in E$}\nonumber\\
   &&& \text{with $\omega(\{u,v\}) < 0$}\nonumber\\
   &&& \text{and each~$i\in [1,c]$}\nonumber\\  
   x_{v,i} \in \{0,1\} &&& \text{for each $v\in V$}\nonumber\\
   &&& \text{and each $i\in[1,c]$}\nonumber\\
  y_e \in \{0,1\} &&& \text{for each $e\in E$}\nonumber\label{eq:propColoredEdgesNeg} 
\end{align}

\fi
\fi

Note that by adding the additional constraint~$\sum_{v\in V} x_{v, \colo(v)} \geq |V|-k$, the ILP searches for a best~$c$-coloring of  the input graph having flip-distance at most~$k$ with some initial~$c$-coloring~$\colo$.
In other words, by adding this single constraint, the ILP solves~\LGGC instead of~\GGC.

\section{Speedup Strategies}

We now introduce several speedup strategies that we use in our implementation to avoid enumerating all candidate sets. First we describe how to speed up the algorithm for \LGGC.

\subsection{Upper Bounds}
To prevent the algorithm from enumerating all possible connected subsets of size at most~$k$, we use upper bounds to determine for any given connected subset~$S'$ of size smaller than~$k$, if~$S'$ can possibly be extended to a set~$S$ of size~$k$ such that there is an improving~$c$-coloring~$\colo'$ for~$G$ where~$S$ is exactly the set of vertices~$\colo$ and~$\colo'$ do not agree on.
If there is no such possibility, we prevent our algorithm from enumerating supersets of~$S'$. 
With the next definition we formalize this concept.

\begin{definition} 
Let~$I:=(G,c,\omega,\colo,k)$ be an instance of~\LGGC and let~$S'$ with~$|S'| < k$ be a subset of vertices of~$G$. 
A value~$b(I,S')$ is an \emph{upper bound} if for each~$c$-coloring~$\colo'$ of~$G$, with~$S' \subsetneq \Dflip(\colo,\colo')$ and~$\dflip(\colo,\colo') = k$, 
\begin{align*}
b(I,S') \geq \omega(E(\colo')).
\end{align*}
\end{definition}

In our implementation, we use upper bounds as follows: Given a set~$S'$ we compute the value~$b(I,S')$ and check if it is not larger than~$\omega(E(\colo))$ for the current coloring~$\colo$. If this is the case, we abort the enumeration of supersets of~$S'$, otherwise, we continue.

We introduce two upper bounds; one for~$c=2$ and one for~$c \geq 3$. 
To describe these upper bounds, we introduce the following notation: Given a vertex~$v$ and a color~$i$, we let~$\omega_v^i := \omega (\{\{v,w\} \mid w \in N(v) , \colo(w) \neq i\})$ denote the total weight of properly colored edges incident with~$v$ if we change the color of~$v$ to~$i$ in in the~$c$-coloring~$\colo$. 
Thus, the term~$\omega^i_v - \omega^{\colo(v)}_v$ describes the improvement obtained by changing only the color of~$v$ to~$i$. 
Furthermore, let~$\omega_{\max} := \max_{e \in E} |\omega(e)|$ denote the maximum absolute edge weight.

\paragraph{Upper Bound for~$c=2$\textbf{.}} 
Let~$I$ be an instance of~\LGGC with~$c=2$ and let~$S'$ be a vertex set of size less than~$k$. 
Since~$c=2$, we let~$\overline{\colo(v)}$ denote the unique color distinct from~$\colo(v)$ for each vertex~$v$. 
For a vertex set~$A\subseteq V$, let~$\colo_{A}$ denote the coloring where~$\colo_A(v) := \colo(v)$ for all~$v\notin A$ and~$\colo_A(v) = \overline{\colo(v)}$, otherwise.
Intuitively,~$\colo_A$ is the coloring resulting from~$\colo$ when exactly the vertices in~$A$ change their colors.
For each vertex~$v \in V \setminus S'$, we define~$\alpha_v := \omega^{\overline{\colo(v)}}_v-\omega^{\colo(v)}_v + \beta_v$, where
\begin{align*}
&\beta_v :=  \sum_{e \in E(v,S') \cap  E(\colo)} 2 \cdot \omega(e) - \sum_{e \in E(v,S') \setminus E(\colo)} 2 \cdot \omega(e).
\end{align*}
Intuitively,~$\alpha_v - \beta_v$ is an upper bound for the improvement obtained when we choose to change only the color of~$v$ to~$\overline{\colo(v)}$. 
The term~$\beta_v$ corresponds to the contribution of the edges between~$v$ and the vertices of~$S'$. 
In the definition of~$\beta_v$, we take into account the edges between $v$ and $S'$ that are falsely counted twice, once when extending~$\colo_{S'}$ with~$v$ and a second time in the
term $\omega^{\overline{\colo(v)}}_v-\omega^{\colo(v)}_v$.
Hence, $\alpha_v$ is the improvement over the coloring~$\colo_{S'}$ obtained by changing only the color of~$v$.
Let~$Y \subseteq V \setminus S'$ be the~$k-|S'|$~vertices from~$V \setminus S'$ with largest~$\alpha_v$-values. 
We define the upper bound by
\begin{align*}
b_{c=2}(I,S') := \omega (E(\colo_{S'})) + \underbrace{\sum_{v \in Y} \alpha_v }_{\text{(1)}} + \underbrace{2 \binom{k-|S'|}{2} \omega_{\max}}_{\text{(2)}}.
\end{align*}
Recall that the overall goal is to find a set~$X$ such that changing the colors of~$S' \cup X$ results in a better coloring. 
The summand~(1) corresponds to an overestimation of all weights of edges incident with exactly one vertex of~$X$ by fixing the falsely counted edges between~$X$ and~$S'$ due to the included~$\beta_v$ summands. 
The summand~(2) corresponds to an overestimation of the weight of properly colored edges with both endpoints in~$X$. 
We next show that~$b_{c=2}$ is in fact an upper bound.

\begin{proposition}
If~$c=2$, then $b_{c=2}(I,S')$ is an upper bound.
\end{proposition}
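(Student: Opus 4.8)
The plan is to fix an arbitrary competing coloring~$\colo'$ with $S'\subsetneq D:=\Dflip(\colo,\colo')$ and $\dflip(\colo,\colo')=k$, set $X:=D\setminus S'$, and establish $\omega(E(\colo'))\le b_{c=2}(I,S')$. Here $|X|=k-|S'|$ and $X\subseteq V\setminus S'$, and since $c=2$ every vertex of~$D$ merely switches to the opposite color, so $\colo'=\colo_{S'\cup X}$. I would measure everything against the base coloring~$\colo_{S'}$ and bound the improvement $\omega(E(\colo'))-\omega(E(\colo_{S'}))$, because $\omega(E(\colo_{S'}))$ is exactly the first summand of~$b_{c=2}$.

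First I would pin down the meaning of~$\alpha_v$: for $v\in V\setminus S'$ we have $\alpha_v=\omega(E(\colo_{S'\cup\{v\}}))-\omega(E(\colo_{S'}))$, i.e.\ the exact gain of additionally flipping~$v$ on top of~$\colo_{S'}$. This is the content of the paragraph preceding the proposition, and I would verify it via the $c=2$ identity that $\omega^{\overline{\colo(v)}}_v-\omega^{\colo(v)}_v$ equals the gain of flipping~$v$ in~$\colo$ itself, plus the correction~$\beta_v$ accounting for the $v$--$S'$ edges whose status is reversed once~$S'$ has flipped: each such edge is counted with the wrong sign in $\omega^{\overline{\colo(v)}}_v-\omega^{\colo(v)}_v$, and the factor~$2$ in~$\beta_v$ restores it.

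The crucial structural fact for $c=2$ is that flipping both endpoints of an edge leaves its proper/improper status unchanged. Hence, passing from~$\colo_{S'}$ to $\colo'=\colo_{S'\cup X}$ --- which flips exactly the vertices of~$X$ --- changes only the contribution of edges with exactly one endpoint in~$X$; edges inside~$X$ and edges disjoint from~$X$ are unaffected. I would then compare this true change to $\sum_{v\in X}\alpha_v$. The latter computes, for each~$v$, the gain of flipping~$v$ while pretending all other vertices keep their $\colo_{S'}$-colors, so it correctly accounts for the one-endpoint-in-$X$ edges, but for each edge $\{v,w\}\in E_G(X)$ it adds the spurious quantity $g_v(w)+g_w(v)$, where $g_v(w)$ denotes the change of edge~$\{v,w\}$ under flipping~$v$ alone. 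A short case check (whether $v,w$ have the same or different color in~$\colo_{S'}$) gives $g_v(w)+g_w(v)=\pm 2\omega(\{v,w\})$, whereas the true contribution of such an inside edge is~$0$. This yields the exact identity
\[
\omega(E(\colo'))-\omega(E(\colo_{S'}))=\sum_{v\in X}\alpha_v-\sum_{\{v,w\}\in E_G(X)}\bigl(g_v(w)+g_w(v)\bigr).
\]

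Finally I would bound both terms. Since~$X$ is a size-$(k-|S'|)$ subset of $V\setminus S'$ and~$Y$ collects the $k-|S'|$ largest $\alpha$-values in $V\setminus S'$, we get $\sum_{v\in X}\alpha_v\le\sum_{v\in Y}\alpha_v$, giving summand~(1). For the correction term, $\bigl|g_v(w)+g_w(v)\bigr|=2|\omega(\{v,w\})|\le 2\omega_{\max}$ and $|E_G(X)|\le\binom{k-|S'|}{2}$, so $-\sum_{\{v,w\}\in E_G(X)}(g_v(w)+g_w(v))\le 2\binom{k-|S'|}{2}\omega_{\max}$, which is summand~(2). Adding $\omega(E(\colo_{S'}))$ then gives $\omega(E(\colo'))\le b_{c=2}(I,S')$. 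The main obstacle is the bookkeeping of the middle step: getting the signs right and recognizing that the double-flip invariance of $c=2$ is precisely what makes the inside-$X$ edges drop out, so that the only slack in the estimate is the $\pm 2\omega(\{v,w\})$ overcount that summand~(2) compensates.
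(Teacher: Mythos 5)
Your proposal is correct and follows essentially the same route as the paper's proof: compare $\colo'=\colo_{S'\cup X}$ against the base coloring $\colo_{S'}$, observe that $\sum_{v\in X}\alpha_v$ exactly captures the contribution of edges with one endpoint in $X$ (the paper's summands (1)+(2)), and bound the overcount on edges inside $X$ by $2\binom{k-|S'|}{2}\omega_{\max}$ before replacing $X$ by $Y$. Your explicit case check that the inside-$X$ correction is $\pm2\omega(\{v,w\})$ is just a more detailed rendering of the paper's summand (3).
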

\begin{proof}
Let~$\colo'$ be a coloring with~$S' \subsetneq \Dflip(\colo,\colo')$ and~$\dflip(\colo,\colo') = k$, and let~$X := \Dflip(\colo,\colo') \setminus S'$.
We show that~$\omega(E(\colo')) \leq b_{c=2}(I,S')$. 
To this end, we consider the coloring~$\colo_{S'}$ that results from~$\colo$ when exactly the vertices in~$S'$ change their colors and analyze how~$\omega(E(\colo'))$ differs from~$\omega(E(\colo_{S'}))$.

\begin{align*}
\omega(E(\colo')) =~ &\omega(E(\colo_{S'})) + \underbrace{\sum_{v \in X} (\omega^{\overline{\colo(v)}}_v - \omega^{\colo(v)}_v)}_{\text{(1)}}\\
&+ \underbrace{\sum_{\substack{e \in E(X,S') \\ e \in E(\colo)}} 2 \cdot \omega(e) - \sum_{\substack{e \in E(X,S') \\ e \not \in E(\colo)}} 2 \cdot \omega(e)}_{\text{(2)}}\\
& + \underbrace{\sum_{\substack{e \in E(X) \\ e \in E(\colo)}} 2 \cdot \omega(e) - \sum_{\substack{e \in E(X) \\ e \not \in E(\colo)}} 2 \cdot \omega(e)}_{\text{(3)}}
\end{align*}
By adding~(1) to~$\omega(E(\colo_{S'}))$, we added the weight of all properly colored edges if only~$v$ changes its color for every vertex~$v \in X$. 
The difference between~$\omega(E(\colo'))$ and~$\omega(E(\colo_{S'})) + \text{(1)}$ then consists of all edge-weights that were falsely counted in~(1) since both endpoints were moved. 
To compensate this, the summand (2) and (3) need to be added. 
Summand (2) corresponds to falsely counted edges with one endpoint in~$X$ and one endpoint in~$S'$, while (3) corresponds to falsely counted edges with both endpoints in~$X$. 
Observe that every falsely counted edge weight was counted for both of its endpoints within~$\omega(E(\colo_{S'})) + \text{(1)}$. Thus, each edge weight in~(2) and~(3) needs to be multiplied by~$2$.

Note that~(3) is upper bounded by~$2 \cdot \binom{k-|S'|}{2} \cdot \omega_{\max}$. 
Furthermore, note that~$\text{(1)}+\text{(2)}= \sum_{v \in X} \alpha_v$. 
Recall that~$Y$ consists of the~$k-|S'|$~vertices from~$V \setminus S'$ with largest~$\alpha_v$-values.
Hence, $\text{(1)}+\text{(2)} \leq \sum_{v \in Y} \alpha_v$.
This implies that~$\omega(E(\colo')) \leq b_{c=2}(I,S')$.
Consequently, $b_{c=2}(I,S')$ is an upper bound.
\end{proof}

\paragraph{Upper Bound for~$c\geq 3$\textbf{.}} We next present an upper bound~$b_{c\geq 3}$ that works for the case where~$c \geq 3$. Recall that the upper bound~$b_{c=2}$ relies on computing~$\omega(E(\colo_{S'}))$, where~$\colo_{S'}$ is the coloring resulting from~$\colo$ when exactly the vertices in~$S'$ change their colors. 
This was possible since for~$c=2$, there is only one coloring for which the flip with~$\colo$ is exactly~$S'$. 
In case of~$c \geq 3$, each vertex in~$S'$ has~$c-1 \geq 2$ options to change its color. 
Our upper bound~$b_{c\geq 3}$ consequently contains a summand $b(S')$ that overestimates the edge weights when only the vertices in~$S'$ change their colors.

To specify~$b(S')$, we introduce the following notation: 
Given a vertex~$v \in S'$ and a color~$i$, we let~$$\theta^i_v:= \omega(\{\{v,w\} \mid w \in N(v) \setminus S' , \colo(w) \neq i\}).$$ 
Analogously to~$\omega^i_v$, the value~$\theta^i_v$ describes the weight of properly colored edges when changing the color of~$v$ to~$i$, but excludes all edges inside~$S'$. 
We define the term 
\begin{align*}
b(S'):=  &~ \omega(E(\colo)) + \binom{|S'|}{2} \cdot \omega_{\max} - \sum_{\substack{ e \in E(S') \\ e \in E(\colo)}} \omega(e) \\
& + \sum_{v \in S'} \left(\max_{i \neq \colo(v)} \theta^i_v - \theta^{\colo(v)}_v\right).
\end{align*} 
As mentioned above, for~$b_{c\geq 3}$ the summand~$b(S')$ replaces the summand~$\omega(E(\colo_{S'}))$ which was used for~$b_{c=2}$. Intuitively, the sum~$\sum_{v \in S'} (\max_{i \neq \colo(v)} \theta^i_v - \theta^{\colo(v)}_v)$ is an overestimation of the improvement for properly colored edges with exactly one endpoint in~$S'$, the term~$ \binom{|S'|}{2} \cdot \omega_{\max}$ overestimates the properly colored edges inside~$S'$, and the remaining terms overestimate the properly colored edges outside~$S'$.

Analogously to~$b_{c=2}$, for each vertex~$v \in V \setminus S'$, we define a value~$\alpha_v := \max_{i \neq \colo(v)} (\omega^i_v - \omega^{\colo(v)}_v) + \beta_v$ with
$$\beta_v:= \sum_{e \in E(v,S')} 2 \cdot |\omega(e)|.$$
Again, let~$Y \subseteq V \setminus S'$ be the~$k-|S'|$ vertices with biggest~$\alpha_v$-values of~$V \setminus S'$. 
We define the upper bound by
\begin{align*}
b_{c\geq 3}(I,S') := b(S') + \sum_{v \in Y} \alpha_v  + 2 \binom{k-|S'|}{2} \cdot \omega_{\max}
\end{align*}
and show that it is in fact an upper bound.

\begin{proposition}
If~$c\geq 3$, then $b_{c\geq 3}(I,S')$ is an upper bound.
\end{proposition}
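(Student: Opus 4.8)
The plan is to mirror the proof of the previous proposition for $b_{c=2}$, replacing the single coloring $\colo_{S'}$ (which for $c=2$ was forced) by the coloring that fixes the colors of $S'$ exactly as in $\colo'$, and absorbing the remaining freedom into the overestimate $b(S')$. Fix a coloring $\colo'$ with $S'\subsetneq\Dflip(\colo,\colo')$ and $\dflip(\colo,\colo')=k$, and set $X:=\Dflip(\colo,\colo')\setminus S'$; then $X\subseteq V\setminus S'$, $|X|=k-|S'|$, and $\colo'(v)\neq\colo(v)$ for every $v\in S'\cup X$. Let $\colo''$ be the coloring that agrees with $\colo'$ on $S'$ and with $\colo$ on $V\setminus S'$. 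I would prove the two inequalities $\omega(E(\colo''))\leq b(S')$ and $\omega(E(\colo'))-\omega(E(\colo''))\leq\sum_{v\in X}\alpha_v+2\binom{k-|S'|}{2}\omega_{\max}$ separately, and then combine them using that $Y$ realizes the maximum of $\sum_{v\in Z}\alpha_v$ over all size-$(k-|S'|)$ subsets $Z\subseteq V\setminus S'$, so that $\sum_{v\in X}\alpha_v\leq\sum_{v\in Y}\alpha_v$.

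For the first inequality I would compute $\omega(E(\colo''))$ by splitting the edges into those inside $S'$, those between $S'$ and $V\setminus S'$, and those disjoint from $S'$. Since vertices outside $S'$ keep their color, the last group contributes $\omega(E(V\setminus S')\cap E(\colo))$, the middle group contributes $\sum_{v\in S'}\theta_v^{\colo'(v)}$, and the first group contributes $\omega(E(S')\cap E(\colo'))$. Expanding $\omega(E(\colo))$ by the same partition and substituting into the definition of $b(S')$, the terms $\omega(E(S')\cap E(\colo))$ and $\sum_{v\in S'}\theta_v^{\colo(v)}$ cancel, leaving $b(S')=\omega(E(V\setminus S')\cap E(\colo))+\binom{|S'|}{2}\omega_{\max}+\sum_{v\in S'}\max_{i\neq\colo(v)}\theta_v^i$. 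The inequality then follows from $\omega(E(S')\cap E(\colo'))\leq\binom{|S'|}{2}\omega_{\max}$ and, since every $v\in S'$ flips, $\theta_v^{\colo'(v)}\leq\max_{i\neq\colo(v)}\theta_v^i$.

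For the second inequality I would pass from $\colo''$ to $\colo'$ by recoloring the vertices of $X$ and compare the true change against the sum of single-vertex estimates $\sum_{v\in X}(\omega_v^{\colo'(v)}-\omega_v^{\colo(v)})$. This sum accounts \emph{exactly} for the edges from $X$ to $V\setminus(S'\cup X)$, because those neighbors carry the same color under $\colo$, $\colo''$, and $\colo'$; it misestimates the edges from $X$ to $S'$ and the edges inside $X$, since there a single-vertex term uses a neighbor's $\colo$-color rather than its actual color, and interior $X$-edges are counted for both endpoints. The key step is a per-edge bound: writing each discrepancy as a signed combination of $0/1$ indicators, the error on each edge between $X$ and $S'$ and on each edge inside $X$ is at most $2|\omega(e)|$ in absolute value. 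Summing yields a correction of at most $\sum_{e\in E(X,S')}2|\omega(e)|=\sum_{v\in X}\beta_v$ for the $S'$-incident edges and at most $2\binom{k-|S'|}{2}\omega_{\max}$ for the interior edges. Finally I would use $\colo'(v)\neq\colo(v)$ to replace $\omega_v^{\colo'(v)}-\omega_v^{\colo(v)}$ by its maximum over $i\neq\colo(v)$, obtaining $\sum_{v\in X}(\omega_v^{\colo'(v)}-\omega_v^{\colo(v)})+\sum_{v\in X}\beta_v\leq\sum_{v\in X}\alpha_v$.

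The main obstacle is the bookkeeping in the second inequality: correctly identifying which edges are mishandled by the single-vertex estimates and verifying that, in the regime $c\geq 3$ where neither the old nor the new colors of the endpoints are predictable, each such discrepancy is genuinely bounded by $2|\omega(e)|$. This is precisely where the use of absolute values in $\beta_v$ (in contrast to the signed correction used for $c=2$) becomes essential, and it is the only place where the argument departs substantially from the proposition for $b_{c=2}$.
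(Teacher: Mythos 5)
Your proposal is correct and follows essentially the same route as the paper's proof: the intermediate coloring $\colo''$ is exactly the paper's $\pF{\colo}{\colo'}{S'}$, the bound $\omega(E(\colo''))\leq b(S')$ is established by the same decomposition (bounding the intra-$S'$ edges by $\binom{|S'|}{2}\omega_{\max}$ and using $\theta_v^{\colo'(v)}\leq\max_{i\neq\colo(v)}\theta_v^i$), and the passage from $\colo''$ to $\colo'$ uses the same single-vertex estimates with the same $2|\omega(e)|$ corrections for edges in $E(S',X)$ and $E(X)$, followed by $\sum_{v\in X}\alpha_v\leq\sum_{v\in Y}\alpha_v$. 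Your per-edge error analysis is in fact slightly more explicit than the paper's, but the argument is the same.
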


\iflong
\begin{proof}
Let~$\colo'$ be a coloring with~$S' \subsetneq \Dflip(\colo,\colo')$ and~$\dflip(\colo,\colo') = k$, and let~$X := \Dflip(\colo,\colo') \setminus S'$.
We show that~$\omega(E(\colo')) \leq b_{c \geq 3}(I,S')$. 
To this end, let~$\pF{\colo}{\colo'}{S'}$ denote the coloring that agrees with~$\colo$ on all vertices of~$V\setminus S'$ and that agrees with~$\colo'$ on all vertices of~$S'$.
To show~$\omega(E(\colo')) \leq b_{c \geq 3}(I,S')$ we analyze how~$\omega(E(\colo'))$ differs from~$\omega(E(\pF{\colo}{\colo'}{S'}))$.

\begin{align*}
\omega(E(\colo')) \leq~ &\omega(E(\pF{\colo}{\colo'}{S'})) + \underbrace{\sum_{v \in X} (\omega^{\colo'(v)}_v - \omega^{\colo(v)}_v)}_{\text{(1)}}\\
& + \underbrace{\sum_{\substack{e\in E(S',~X)}} 2 \cdot |\omega(e)|}_{\text{(2)}} + \underbrace{\sum_{\substack{e\in E(X)}} 2 \cdot |\omega(e)|}_{\text{(3)}}
\end{align*}

By adding~(1) to~$\omega(E(\pF{\colo}{\colo'}{S'}))$, we added the weight of all properly colored edges if only~$v$ changes its color for every vertex~$v \in X$. The difference between~$\omega(E(\colo'))$ and~$\omega(E(\pF{\colo}{\colo'}{S'})) + \text{(1)}$ then consists of all edge-weights that were falsely counted in~(1) since both endpoints were moved. 
To compensate this, the summand (2) and (3) were added. 
Summand (2) overestimates the weight of falsely counted edges with one endpoint in~$X$ and one endpoint in~$S'$, while (3) overestimates the weight of falsely counted edges with both endpoints in~$X$. Observe that every falsely counted edge weight may be counted for both of its endpoints within~$\omega(E(\colo|_{S'}))+\text{(1)}$. 
Thus, each edge weight in~(2) and~(3) needs to be multiplied by~$2$.

Note that~$\text{(1)}+\text{(2)} \leq \sum_{v \in X} \alpha_v \leq \sum_{v \in Y} \alpha_v$ and that~$\text{(3)} \leq 2 \binom{k-|S'|}{2} \cdot \omega_{\max}$. Therefore, it remains to show that~$\omega(E(\pF{\colo}{\colo'}{S'})) \leq b(S')$. 
To this end, note that~$\omega(E(\pF{\colo}{\colo'}{S'}))$ can be expressed by the sum of~$\omega(E(\colo))$, improvement of the weight of properly colored edges inside~$S'$ between~$\colo$ and~$\pF{\colo}{\colo'}{S'}$, and~$\sum_{v \in S'}  (\theta^{\colo'(v)}_v - \theta^{\colo(v)}_v)$:

\begin{align*}
\omega(E(\pF{\colo}{\colo'}{S'})) =~& \omega(E(\colo)) + \sum_{\substack{ e \in E(S') \\ e \in E(\pF{\colo}{\colo'}{S'})}} \omega(e) - \sum_{\substack{ e \in E(S') \\ e \in E(\colo)}} \omega(e) \\ 
&+ \sum_{v \in S'}  (\theta^{\colo'(v)}_v - \theta^{\colo(v)}_v).
\end{align*}

Since~$\binom{|S'|}{2} \cdot \omega_{\max}$ is at least as big as the sum of the weights of properly edges inside~$S'$ under~$\pF{\colo}{\colo'}{S'}$, we conclude~$\omega(E(\pF{\colo}{\colo'}{S'})) \leq b(S')$.
Hence,~$b_{c \geq 3}$ is an upper bound.
\end{proof}
\fi

\subsection{Prevention of Redundant Flips
}

We introduce further speed-up techniques that we used in our implementation of the hill-climbing algorithm. Roughly speaking, the idea behind these speed-up techniques is to exclude vertices that are not contained in an improving flip~$\Dflip(\colo,\colo')$ of any~$k$-neighbor~$\colo'$ of~$\colo$. To this end, we introduce for each considered value of~$k$ an \emph{auxiliary vertex set}~$V_k$ containing all remaining vertices that are potentially part of an improving flip of a~$k$-neighbor of~$\colo$.
 For each value of~$k$, the set~$V_k$ is initialized once with~$V$, when we search for the first time for an improving~$k$-neighbor.

It is easy to see that all vertices~$x$ that are~$(i,k)$-blocked for all~$i \neq \colo(x)$ can be removed from~$V_k$ if each edge of~$G$ has weight 1. 
This also holds for general instances when considering an extension of the definition of~$(i,k)$-blocked vertices for arbitrary weight functions.
Moreover, whenever our algorithm has verified that a vertex~$v$ is in no improving $k$-flip~$\Dflip(\colo,\colo')$, then we may remove~$v$ from~$V_k$.

Recall that we set the initial value of $k$ to one and increment~$k$ if the current coloring~$\colo$ is $k$-optimal. 
If at any time our algorithm replaces the current coloring~$\colo$ by a better coloring~$\colo'$, we set~$k$ back to one and continue by searching for an improving~$k$-neighbor of the new coloring~$\colo'$, where~$k$ again is incremented if necessary. 
Now, for each value of~$k'$ that was already considered for a previous coloring, we only consider the remaining vertices of~$V_{k'}$ together with vertices that have a small distance to the flip between $\colo'$ and the last previously encountered~$(k'-1)$-optimal coloring. 
This idea is formalized by the next lemma.


\begin{lemma} \label{lem:redundantFlips}
Let~$G=(V,E)$ be a graph, let~$\omega:E\to \mathds{Q}$ be an edge-weight function, and let~$k$ be an integer.
Moreover, let~$\colo$ and~$\colo'$ be~$(k-1)$-optimal~$c$-colorings of~$G$ and let~$v$ be a vertex within distance at least~$k+1$ to each vertex of~$\Dflip(\colo,\colo')$.
If there is no improving~$k$-neighbor~$\widehat{\colo}$ of~$\colo$ with~$v\in \Dflip(\colo,\widehat{\colo})$, then there is no improving~$k$-neighbor~$\widetilde{\colo}$ of~$\colo'$ with~$v\in \Dflip(\colo',\widetilde{\colo})$.
\end{lemma}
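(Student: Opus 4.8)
The plan is to prove the contrapositive: assuming there is an improving $k$-neighbor~$\widetilde{\colo}$ of~$\colo'$ with~$v\in S:=\Dflip(\colo',\widetilde{\colo})$, I would construct an improving $k$-neighbor~$\widehat{\colo}$ of~$\colo$ with~$v\in\Dflip(\colo,\widehat{\colo})$. The first step is to upgrade the unstructured neighbor~$\widetilde{\colo}$ to one whose flip is connected. Here I would exploit the $(k-1)$-optimality of~$\colo'$: since~$\colo'$ has no improving neighbor of flip distance at most~$k-1$, the improving flip~$S$ must have size exactly~$k$. Writing~$S$ as the disjoint union of the connected components~$C_1,\dots,C_m$ of~$G[S]$ and decomposing the improvement over these components exactly as in the proof of~\Cref{lem connected}, flipping a single component~$C_i$ in~$\colo'$ (i.e.\ keeping~$\colo'$ outside~$C_i$ and using~$\widetilde{\colo}$ on~$C_i$) yields an improvement equal to the contribution of~$C_i$ to the total improvement of~$\widetilde{\colo}$ over~$\colo'$. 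If~$m\ge 2$, then every~$C_i$ has size at most~$k-1$, so each such single-component flip has flip distance at most~$k-1$ and is therefore non-improving by $(k-1)$-optimality; but then the total improvement, being the sum of these non-positive contributions, would be non-positive, contradicting that~$\widetilde{\colo}$ improves over~$\colo'$. Hence~$m=1$ and~$S$ is connected of size exactly~$k$ with~$v\in S$.

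The second step is a locality observation. Since~$S$ is connected, contains~$v$, and has at most~$k$ vertices, every vertex~$u\in S$ is reachable from~$v$ by a path inside~$G[S]$ of length at most~$k-1$, so~$\operatorname{dist}(u,v)\le k-1$. As~$v$ has distance at least~$k+1$ to every vertex of~$D:=\Dflip(\colo,\colo')$, the triangle inequality gives~$\operatorname{dist}(u,d)\ge(k+1)-(k-1)=2$ for every~$u\in S$ and every~$d\in D$. Consequently~$N[S]\cap D=\emptyset$, which means precisely that~$\colo$ and~$\colo'$ agree on all of~$N[S]$.

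The final step is to transfer the flip from~$\colo'$ to~$\colo$. I would define~$\widehat{\colo}$ to agree with~$\widetilde{\colo}$ on~$S$ and with~$\colo$ everywhere else. Since~$\colo=\colo'$ on~$N[S]\supseteq S$ while~$\widetilde{\colo}\neq\colo'$ exactly on~$S$, for each~$u\in S$ we get~$\widehat{\colo}(u)=\widetilde{\colo}(u)\neq\colo'(u)=\colo(u)$, so~$\Dflip(\colo,\widehat{\colo})=S$; thus~$\widehat{\colo}$ is a $k$-neighbor of~$\colo$ containing~$v$. To see that it improves over~$\colo$, note that the only edges whose properly-colored status can differ between~$\colo$ and~$\widehat{\colo}$ are those incident with~$S$, and both endpoints of any such edge lie in~$N[S]$. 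On~$N[S]$ the colorings~$\colo$ and~$\colo'$ coincide, and~$\widehat{\colo}$ and~$\widetilde{\colo}$ coincide as well: on~$S$ by definition, and on~$N(S)\setminus S$ because such vertices lie outside~$S$, where~$\widehat{\colo}=\colo=\colo'=\widetilde{\colo}$. Hence, edge by edge, the change of properly-colored status between~$\colo$ and~$\widehat{\colo}$ equals that between~$\colo'$ and~$\widetilde{\colo}$, giving~$\omega(E(\widehat{\colo}))-\omega(E(\colo))=\omega(E(\widetilde{\colo}))-\omega(E(\colo'))>0$.

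I expect the main obstacle to be the first step. The hypothesis only supplies \emph{some} improving neighbor through~$v$, not an inclusion-minimal one, so connectedness of the flip cannot be invoked directly via~\Cref{lem connected} while retaining the vertex~$v$. The combination of $(k-1)$-optimality (forcing the flip to have size exactly~$k$) with the component-wise improvement decomposition is what rules out a disconnected flip, and it is precisely connectedness together with~$|S|=k$ that keeps all of~$S$ within distance~$k-1$ of~$v$ and thus makes the locality argument of the second step go through. Note that only the $(k-1)$-optimality of~$\colo'$ is used; the assumption on~$\colo$ is inherited from the application context.
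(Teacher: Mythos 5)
Your proposal is correct and follows essentially the same route as the paper's proof: argue by contraposition, use the $(k-1)$-optimality of~$\colo'$ to force the improving flip to have size exactly~$k$ and hence (via the component-wise improvement decomposition underlying \Cref{lem connected}) to be connected, conclude that all of~$N[S]$ lies outside~$\Dflip(\colo,\colo')$ by the distance assumption, and transplant the flip onto~$\colo$ with an edge-by-edge comparison of the two improvements. The only cosmetic difference is that you spell out the connectivity argument explicitly where the paper cites \Cref{lem connected}, and you phrase the final accounting pointwise rather than via the paper's edge set~$X$; the substance is identical.
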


\iflong
\begin{proof}
We prove the lemma by contraposition. Let~$\widetilde{\colo}$ be an improving~$k$-neighbor of~$\colo'$ with~$v \in \Dflip(\colo',\widetilde{\colo})$. We show that there is an improving~$k$-neighbor~$\widehat{\colo}$ of~$\colo$ with~$v \in \Dflip(\colo,\widehat{\colo})$.

The~$c$-coloring~$\widehat{\colo}$ agrees with~$\widetilde{\colo}$ on all vertices of~$\Dflip(\colo',\widetilde{\colo})$ and agrees with~$\colo$ on all other vertices of~$V$.
Hence, $\Dflip(\colo, \widehat{\colo})$ contains the vertex~$v$. 
Moreover, $\widehat{\colo}$ and~$\colo$ disagree on at most~$\dflip(\colo',\widetilde{\colo}) \leq k$ positions which implies that~$\widehat{\colo}$ is a~$k$-neighbor of~$\colo$. 
It remains to show that~$\widehat{\colo}$ improves over~$\colo$. 
To this end, we analyze the edge set~$X \subseteq E$ of all edges with at least one endpoint in~$\Dflip(\colo',\widetilde{\colo})$. 
Consider the following claim about properly colored edges.

\begin{claim} \label{Claim: Colored Edges for Lemma}
It holds that
\begin{enumerate}
\item[$a)$] $E(\widetilde{\colo}) \setminus X = E(\colo') \setminus X$ and $E(\widehat{\colo}) \setminus X = E(\colo) \setminus X$ ,
\item[$b)$]$ E(\widetilde{\colo}) \cap X = E(\widehat{\colo}) \cap X$ and $E(\colo) \cap X = E(\colo') \cap X$.
\end{enumerate}
\end{claim}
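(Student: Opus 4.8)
The plan is to set $D := \Dflip(\colo',\widetilde\colo)$ and to reduce all four equalities to a single structural fact: that $\colo$ and $\colo'$ assign the same color to every vertex of the closed neighborhood~$N[D]$. I would establish this fact first, since it is the only place where the distance hypothesis of the lemma is needed. Recall that $|D| = \dflip(\colo',\widetilde\colo) \le k$ and that we may assume $D$ is connected and contains~$v$ (by \Cref{lem connected}, as it suffices to rule out connected improving flips). Hence every vertex $u \in D$ is at distance at most $|D|-1 \le k-1$ from~$v$, and consequently every vertex $y \in N[D]$ is at distance at most~$k$ from~$v$: either $y \in D$, or $y$ is adjacent to some $u \in D$ and $\mathrm{dist}(v,y) \le \mathrm{dist}(v,u)+1 \le k$. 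Since every vertex of $\Dflip(\colo,\colo')$ has distance at least $k+1$ from~$v$, no such vertex lies in~$N[D]$; that is, $\colo$ and $\colo'$ agree on all of~$N[D]$.

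With this fact in hand, part~$a)$ is immediate and in fact does not even invoke it. Every edge $e \notin X$ has both endpoints outside~$D$, where $\widetilde\colo$ agrees with $\colo'$ by the definition of~$\widehat\colo$ and of a flip; hence $e$ is properly colored under $\widetilde\colo$ exactly when it is under~$\colo'$, giving $E(\widetilde\colo)\setminus X = E(\colo')\setminus X$. Symmetrically, $\widehat\colo$ agrees with~$\colo$ outside~$D$, which yields $E(\widehat\colo)\setminus X = E(\colo)\setminus X$.

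For part~$b)$ I would argue that $\widetilde\colo$ and $\widehat\colo$ coincide on the whole of $N[D]$, and likewise that $\colo$ and $\colo'$ coincide on $N[D]$. On~$D$, the colorings $\widetilde\colo$ and $\widehat\colo$ agree by the definition of~$\widehat\colo$; on $N[D]\setminus D$ we have $\widetilde\colo = \colo'$ and $\widehat\colo = \colo$ (both endpoints lying outside~$D$), and $\colo = \colo'$ there by the structural fact, so they agree as well. Because every edge of~$X$ has both endpoints in~$N[D]$, the colorings $\widetilde\colo$ and $\widehat\colo$ induce exactly the same properly colored edges inside~$X$, i.e.\ $E(\widetilde\colo)\cap X = E(\widehat\colo)\cap X$; the equality $E(\colo)\cap X = E(\colo')\cap X$ follows in the same way, directly from $\colo = \colo'$ on~$N[D]$.

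The main obstacle is the structural fact itself, as it is the only nontrivial step and it crucially relies on $D$ being \emph{connected} so that its vertices, and thus their neighbors, stay within distance~$k$ of~$v$. For an arbitrary flip set of size at most~$k$ this fails: a disconnected component could reach a vertex of $\Dflip(\colo,\colo')$, breaking the agreement of $\colo$ and $\colo'$ on~$N[D]$ and hence the claim. I would therefore make the reduction to connected~$D$ explicit via \Cref{lem connected} before stating the claim, so that the distance bound $\mathrm{dist}(v,u)\le k-1$ for $u\in D$ is justified.
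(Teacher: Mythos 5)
Your proof is correct and follows essentially the same route as the paper's: in both, the crux is that connectivity of $D=\Dflip(\colo',\widetilde{\colo})$ together with $v\in D$ places every endpoint of an edge of $X$ within distance $k$ of $v$, hence outside $\Dflip(\colo,\colo')$, so that $\colo$ and $\colo'$ (and consequently $\widehat{\colo}$ and $\widetilde{\colo}$) agree on those endpoints; your packaging of this as ``agreement on $N[D]$'' is only a cosmetic reformulation of the paper's edge-by-edge argument. The one point worth tightening is your parenthetical justification of the reduction to connected $D$: the paper derives it from the $(k-1)$-optimality of $\colo'$, which forces $|D|=k$ and hence inclusion-minimality of the improving flip, so that \Cref{lem connected} applies while guaranteeing that $v$ remains in the (connected) flip.
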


\begin{claimproof} \normalfont
$a)$ 
Let~$e$ be an edge of~$E \setminus X$. 
Note that both endpoints of~$e$ are elements of~$V \setminus \Dflip(\colo',\widetilde{\colo})$. 
Thus, the endpoints of~$e$ have distinct colors under~$\widetilde{\colo}$ if and only if they have distinct colors under~$\colo'$. 
Consequently, $E(\widetilde{\colo}) \setminus X = E(\colo') \setminus X$. 
Furthermore, by definition of~$\widehat{\colo}$, $\Dflip(\colo,\widehat{\colo})=\Dflip(\colo',\widetilde{\colo})$, which implies that~$E(\widehat{\colo}) \setminus X = E(\colo) \setminus X$.

$b)$ 
Since~$\colo'$ is~$(k-1)$-optimal and~$\widetilde{\colo}$ is an improving~$k$-neighbor of~$\colo'$, the set~$\Dflip(\colo',\widetilde{\colo})$ contains exactly~$k$ vertices. 
Thus, we may assume by Lemma~\ref{lem connected} that~$\Dflip(\colo',\widetilde{\colo})$ is connected. 
Consequently, each vertex of~$\Dflip(\colo',\widetilde{\colo})$ has distance at most~$k-1$ from~$v$, since~$v$ is contained in~$\Dflip(\colo',\widetilde{\colo})$.

Let~$e$ be an edge of~$X$. 
Since each vertex of~$\Dflip(\colo',\widetilde{\colo})$ has distance at most~$k-1$ from~$v$, both endpoints of~$e$ have distance at most~$k$ from~$v$. 
Together with the fact that~$v$ has distance at least~$k+1$ from~$\Dflip(\colo,\colo')$, this implies that no endpoint of~$e$ is contained in~$\Dflip(\colo,\colo')$. 
Hence, $\colo$ and~$\colo'$ agree on both endpoints of~$e$. 
Consequently, the edge~$e$ is properly colored under~$\colo$ if and only if~$e$ is properly colored under~$\colo'$.
This then implies that~$E(\colo) \cap X = E(\colo') \cap X$.

By the definition of~$\widehat{\colo}$ and the fact that~$\colo$ and~$\colo'$ agree on the endpoints of each edge of~$X$, $\widehat{\colo}$ and~$\widetilde{\colo}$ agree on the endpoints of each edge of~$X$.
This then implies that~$E(\widetilde{\colo}) \cap X = E(\widehat{\colo}) \cap X$.  $\hfill \diamond$
\end{claimproof}

We next use~\Cref{Claim: Colored Edges for Lemma} to show that~$\widehat{\colo}$ is an improving neighbor of~$\colo$. 
Since~$\widetilde{\colo}$ is an improving neighbor of~$\colo'$ we have~$\omega(E(\widetilde{\colo})) > \omega(E(\colo'))$, which implies
\begin{align*}
\omega(E(\widetilde{\colo}) \cap X) + \omega(E(\widetilde{\colo}) \setminus X) > \omega(E(\colo') \cap X) + \omega(E(\colo') \setminus X).
\end{align*}
Together with~\Cref{Claim: Colored Edges for Lemma}~$a)$, we then have
\begin{align*}
\omega(E(\widetilde{\colo}) \cap X) > \omega(E(\colo') \cap X).
\end{align*} 
Moreover, due to~\Cref{Claim: Colored Edges for Lemma}~$b)$, we have
\begin{align*}
\omega( E(\widehat{\colo}) \cap X) > \omega(E(\colo) \cap X).
\end{align*}
Finally, since~$E(\widehat{\colo}) \setminus X = E(\colo) \setminus X$ by~\Cref{Claim: Colored Edges for Lemma}~$a)$, we may add the weights of all edges in~$E(\widehat{\colo}) \setminus X$ to the left side of the inequality and the weight of all edges in~$E(\colo) \setminus X$ to the right side. 
We end up with the inequality~$\omega(E(\widehat{\colo})) > \omega(E(\colo))$ which implies that~$\widehat{\colo}$ is an improving~$k$-neighbor of~$\colo$.
\end{proof}
\fi
\iflong
We next describe how we exploit~\Cref{lem:redundantFlips} in our implementation: 
We start with a coloring~$\colo$ and search for improving~$k$-neighbors of~$\colo$ for increasing values of~$k$ starting with~$k=1$. 
Whenever we find an improving neighbor~$\colo'$ of~$\colo$ we continue by searching for an improving neighbor~$\colo''$ of~$\colo'$ starting with~$k=1$ again. 
\fi
\iflong We \else In our implementation, we \fi use~\Cref{lem:redundantFlips} as follows: if we want to find an improving~$k$-neighbor for a~$(k-1)$-optimal coloring~$\colo'$, we take the last previously encountered~$(k-1)$-optimal coloring~$\colo$ and add only the vertices to~$V_k$ that have distance at most~$k$ from~$\Dflip(\colo,\colo')$, instead of setting~$V_k$ back to~$V$.
This is correct since every vertex which is not in~$V_k$, is not part of any improving~$k$-flip of $\colo$ and therefore according to~\Cref{lem:redundantFlips}, the only vertices outside of~$V_k$ that can possibly be in an improving~$k$-flip of~$\colo'$ are those with distance at most~$k$ from~$\Dflip(\colo,\colo')$.



\iflonglong
\fi

Next, we provide a further technique to identify vertices that can be removed from~$V_k$. The idea behind this technique can be explained as follows: if a vertex can be excluded from~$V_k$, then all equivalent vertices can also be excluded, where equivalence is defined as follows.

\begin{definition}
Let~$G=(V,E)$ be a graph, let~$\omega:E\to \mathds{Q}$ be an edge-weight function.
Two vertices~$v$ and~$w$ of~$G$ are \emph{weighted twins} if $N(v)\setminus \{w\} = N(w)\setminus \{v\}$ and~$\omega(\{v,x\}) = \omega(\{w,x\})$ for each~$x\in N(v)\setminus \{w\}$.
\end{definition}


\begin{lemma}
Let~$G=(V,E)$ be a graph, let~$\omega:E\to \mathds{Q}$ be an edge-weight function, and let~$k$ be an integer.
Moreover, let~$\colo$ be a~$c$-coloring of~$G$ and let~$v$ and~$w$ be weighted twins in~$G$ with~$\colo(v) = \colo(w)$.
If there is no improving~$k$-neighbor~$\colo'$ of~$\colo$ with~$v\in \Dflip(\colo,\colo')$, then there is no improving~$k$-neighbor~$\widetilde{\colo}$ of~$\colo$ with~$w\in \Dflip(\colo,\widetilde{\colo})$.
\end{lemma}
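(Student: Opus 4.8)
The plan is to argue by contraposition and exploit the symmetry between the two twins. Concretely, I assume there is an improving~$k$-neighbor~$\widetilde{\colo}$ of~$\colo$ with~$w \in \Dflip(\colo,\widetilde{\colo})$ and construct from it an improving~$k$-neighbor~$\colo'$ of~$\colo$ with~$v \in \Dflip(\colo,\colo')$.

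The central object is the transposition~$\sigma$ of~$V$ that swaps~$v$ and~$w$ and fixes every other vertex. The first step is to verify that~$\sigma$ is a weight-preserving automorphism of~$G$. For an edge avoiding both~$v$ and~$w$ this is immediate; for an edge~$\{v,x\}$ with~$x \notin \{v,w\}$, the twin property~$N(v)\setminus\{w\} = N(w)\setminus\{v\}$ guarantees that~$\{w,x\}=\{\sigma(v),\sigma(x)\}$ is again an edge, and~$\omega(\{v,x\})=\omega(\{w,x\})$ guarantees that~$\sigma$ preserves its weight; a possible edge~$\{v,w\}$ is mapped to itself. I would then set~$\colo' := \widetilde{\colo}\circ\sigma$, that is,~$\colo'(v)=\widetilde{\colo}(w)$,~$\colo'(w)=\widetilde{\colo}(v)$, and~$\colo'(x)=\widetilde{\colo}(x)$ otherwise.

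The second step is to read off the three desired properties. Because~$\sigma$ is a weight-preserving automorphism,~$\{x,y\}\in E(\colo')$ if and only if~$\{\sigma(x),\sigma(y)\}\in E(\widetilde{\colo})$, so~$\sigma$ induces a weight-preserving bijection between~$E(\colo')$ and~$E(\widetilde{\colo})$; hence~$\omega(E(\colo'))=\omega(E(\widetilde{\colo}))>\omega(E(\colo))$ and~$\colo'$ is improving. For the flip set I would use that~$\colo=\colo\circ\sigma$ (this is precisely where the hypothesis~$\colo(v)=\colo(w)$ enters, since~$\sigma$ fixes all other vertices), which yields~$\Dflip(\colo,\colo')=\sigma(\Dflip(\colo,\widetilde{\colo}))$. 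In particular~$\dflip(\colo,\colo')=\dflip(\colo,\widetilde{\colo})\le k$, so~$\colo'$ is a~$k$-neighbor, and since~$w\in\Dflip(\colo,\widetilde{\colo})$ we obtain~$v=\sigma(w)\in\Dflip(\colo,\colo')$, as required.

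I do not expect a genuine obstacle here; the whole argument is a symmetry transfer via the twin automorphism. The only point requiring care is the bookkeeping for the flip set, in particular checking that the identity~$\Dflip(\colo,\colo')=\sigma(\Dflip(\colo,\widetilde{\colo}))$ holds at the two vertices~$v$ and~$w$: there the color under~$\widetilde{\colo}$ is compared against~$\colo(w)$ respectively~$\colo(v)$, and it is exactly the equality~$\colo(v)=\colo(w)$ that makes these comparisons match up. One should also note that the edge~$\{v,w\}$, if present, contributes identically to~$E(\colo')$ and~$E(\widetilde{\colo})$, since~$\colo'$ and~$\widetilde{\colo}$ merely exchange the colors of its two endpoints.
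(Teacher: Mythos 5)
Your proposal is correct and follows essentially the same route as the paper: both construct the coloring obtained from $\widetilde{\colo}$ by exchanging the colors of $v$ and $w$, use $\colo(v)=\colo(w)$ to see that the flip set transfers, and conclude that the weight of properly colored edges is unchanged. The only difference is presentational — you derive the weight equality from the fact that the twin transposition is a weight-preserving automorphism, whereas the paper verifies it by an explicit case analysis on the edge pairs $\{v,x\},\{w,x\}$ for $x\in N(v)\cap N(w)$.
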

\iflong
\begin{proof}
Assume towards a contradiction that there is an improving~$k$-neighbor~$\widetilde{\colo}$ of~$\colo$ with~$w\in \Dflip(\colo,\widetilde{\colo})$.
By assumption,~$v\notin \Dflip(\colo,\widetilde{\colo})$.
Let~$\colo'$ be the~$c$-coloring that agrees with~$\widetilde{\colo}$ on~$V\setminus \{v,w\}$ and where~$\colo'(v) := \widetilde{\colo}(w)$ and~$\colo'(w) := \widetilde{\colo}(v) = \colo(v)$.
Recall that~$\omega(\{v,x\}) = \omega(\{w,x\})$ for each~$x\in N(v) \cap N(w)$ and that~$N(v) \setminus \{w\} = N(w) \setminus \{v\}$. 
For each~$x\in N(v) \cap N(w)$, let~$E_x := \{\{v,x\}, \{w,x\}\}$.
Note that since~$\widetilde{\colo}(v) \neq \widetilde{\colo}(w)$, at least one edge of~$E_x$ is contained in~$E(\widetilde{\colo})$.
If both edges of~$E_x$ are contained in~$E(\widetilde{\colo})$, then~$\colo'(x)=\widetilde{\colo}(x) \not\in \{\colo'(v), \colo'(w)\}$ and thus both edges of~$E_x$ are contained in~$E(\colo')$.
If only one edge of~$E_x$ is contained in~$E(\widetilde{\colo})$, then~$E(\colo')$ contains exactly the other edge of~$E_x$ since~$\colo'(v) = \widetilde{\colo}(w)$,~$\colo'(w) = \widetilde{\colo}(v)$, and~$\colo'(x) = \widetilde{\colo}(x)$.
Since the weight of both edges of~$E_x$ are the same for each~$x\in N(v) \cap N(w)$, we have~$\omega(E(\colo')) = \omega(E(\widetilde{\colo}))$.
Hence,~$\colo'$ is an improving~$k$-neighbor of~$\colo$ with~$v\in \Dflip(\colo,\colo')$, a contradiction.
\end{proof}
\fi
Consequently, when our algorithm removes a vertex~$v$ from~$V_k$ for some~$k$ because no improving~$k$-neighbor~$\colo'$ of~$\colo$ contains~$v$, then it also removes all weighted twins of~$v$ with the same color as~$v$ from~$V_k$.

\begin{table}[t]
\caption{The graphs from the G-set for which LS or ILP found an improved coloring, or for which we verified that MOH colorings are optimal (for $c=3$).
MOH shows the value of the published solutions of~\protect\cite{MH17}, LS and ILP show the best solution of our hill-climbing algorithm and any of the two ILP-runs, respectively. The best coloring is bold.
Finally, UB shows the better upper bound computed during the two ILP-runs.
For empty entries, no improved coloring was found.
For bold UB entries, some found solution matches this upper bound,  verifying its optimality.
}
\centering
{\tiny
\tabMiniGGC
  
}
\label{tabmini}
\end{table}
\section{Implementation and Experimental Results}
Our hill-climbing algorithm (LS) is implemented in JAVA/Kotlin and uses the graph library JGraphT. 
To enumerate all connected candidate sets, we use a JAVA implementation of a polynomial-delay algorithm for enumerating all connected induced subgraphs of a given size~\cite{KS21}.
   
We used the graphs from the G-set benchmark\footnote{\url{https://web.stanford.edu/~yyye/yyye/Gset/}}, an established benchmark data set for~\GGC with~$c\in \{2,3,4\}$ (and thus also for \MC{})~\cite{BH13,FPRR02,MH17,S+15,Wang13,ZLA13}. The data set consists of 71 graphs with vertex-count between 800 and 20,000 and a density between 0.02\% and 6\%.

As starting solutions, we used the solutions computed by the MOH algorithm of~Ma~and~Hao~\cite{MH17} for each graph of the G-set and each~$c\in \{2,3,4\}$. For~$c=3$ and~$c=4$, these are the best known solutions for all graphs of the G-set. MOH is designed to quickly improve substantially on starting solutions but after a while progress stalls~(we provide more details on this below). In contrast, our approach makes steady progress but is not as fast as MOH concerning the initial improvements, as preliminary experiments showed.
Hence, we focus on evaluating the performance of LS as a post-processing for MOH by trying to improve their solutions quickly.

\begin{table}
 \caption{The number of instances where LS or ILP found improved solutions. Column `improvable' shows how many best known MOH colorings~\protect\cite{MH17}  \emph{might} be suboptimal (as they do not meet the ILP upper bounds). Columns LS and ILP show how many of these solutions where improved by the respective approaches.
 Columns~I$_1$,~I$_2$, and~I$_3$ show for how many instances the first improvement was found by LS within 10 seconds, between 10 and 60 seconds, and after more than 60 seconds, respectively.}
 \centering
 \improvementType
\label{tabImp}
\end{table}

\begin{table*}
\caption{The solutions of the best found~$c$-coloring for any of the G-set graphs for~$c=2$.
The column MOH shows the value of the published solutions of~Ma~and~Hao~\cite{MH17}. 
}
\centering
{\tiny
\tabTwo
}
\label{tab:2}
\end{table*}

\begin{table*}
\caption{The solutions of the best found~$c$-coloring for any of the G-set graphs for~$c=3$.}
\centering
{\fontsize{7}{7} \selectfont
\tabThree
}
\label{tab:3}
\end{table*}

\begin{table*}
\caption{The solutions of the best found~$c$-coloring for any of the G-set graphs for~$c=4$.}
\centering
{\fontsize{7}{7} \selectfont
\tabFour
}
\label{tab:4}
\vspace{10pt}
\end{table*}

\begin{table*}
\caption{For each value of~$k$, the number of instances for which an improving flip of size exactly~$k$ was found.}
\centering
\largestType
\label{tab:largestK}
\end{table*}

\iflong For one graph (g23) and each~$c\in \{2,3,4\}$, there is a large gap between the value of the published coloring and the stated value of the corresponding coloring\iflong{} (for example, for~$c=3$, the published coloring has a value of~$13\,275$ whereas it is stated that the coloring has a value of~$17\,168$)\fi.
To not exploit this gap in our evaluation, we only considered the remaining 70 graphs. \else We excluded the graph (g23) from our evaluation since there is a large gap between the value of the published colorings and the stated value of the corresponding colorings and we did not want to exploit this gap in our evaluation. \fi
\iflong These \else The remaining \fi 70 graphs are of two types: 34 graphs are unit graphs (where each edge has weight 1) and 36 graphs are signed graphs (where each edge has \iflong either \fi weight 1 or -1). For each of these graphs, we ran experiments for each~$c\in \{2,3,4\}$ with a time limit of 30 minutes and the published MOH solution as initial solution. 
In addition to LS, for each instance we ran standard ILP-formulations for \GGC (again for 30 minutes) using the Gurobi solver version 9.5, once without starting solution and once with the MOH solution as starting solution. 
Each run of an ILP provides both a best found solution and an upper bound on the maximum value of any~$c$-coloring for the given instance.
Each experiment was performed on a single thread of an Intel(R) Xeon(R) Silver 4116 CPU with 2.1 GHz, 24~CPUs and 128 GB RAM.

The ILP upper bounds verified the optimality of 22 MOH solutions. Thus, of the 210 instances, only 188 instances are interesting in the sense that LS or the ILP can find an improved solution. The upper bounds also verified the optimality of 8 further improved solutions found by LS or ILP.

In total, the ILP found better colorings than the MOH coloring for 43 of the 188 instances
.
In comparison, \iflong our hill-climbing algorithm \else LS \fi was able to improve on the MOH solutions for 69 instances of the 188 instances.
\Cref{tabmini} gives the results for~$c=3$, showing those instances where the MOH coloring was verified to be optimal by the ILP or where LS or the ILP found an improved coloring.
\iflong The full overview for~$c\in \{2,3,4\}$ is shown in~\Cref{tab:2,tab:3,tab:4}.\fi

Over all~$c\in \{2,3,4\}$, on 35 instances, both LS and the ILP found improved colorings compared to the MOH coloring. 
%
For~$c>2$, both approaches find new record colorings. More precisely, for 23 instances, only the ILP found a new record coloring; for 6 instances, both approaches found a new record coloring, and for 38 instances only LS found a new record coloring. Thus, LS finds improvements also for very hard instances on which MOH provided the best known solutions so far.


The MOH solutions were obtained within a time limit of 30, 120, and 240 minutes for small, medium, and large instances, respectively. 
Each such run was repeated at least 10 times.
The average time MOH took to find the best solution was 33\% of the respective time limit. 
Hence, on average, after MOH found their best solution, in the remaining time (at least 20 minutes), MOH did not find any better solution. 
For all instances where LS was able to improve on the MOH solution, the average time to find the first improving flip was~$15.17$ seconds.
 Table~\ref{tabImp} shows an overview on the number of improved instances and the time when LS found the first improvement. 
 It is also interesting to see for which value of~$k$ the first improvement was found (in other words, the smallest value~$k$ such that the MOH solutions are not~$k$-flip optimal). 
 Table~\ref{tab:smallestK} shows for how many instances which value of~$k$ was the smallest to obtain an improvement. 
 On average, this value of~$k$ was~$3.39$. 
 Hence, it is indeed helpful to consider larger values of~$k$ than the commonly used values of 1 or 2.

\begin{table*}
\caption{For each value of~$k$, the number of instances for which the first improving flip that was found had size exactly~$k$.}
\centering
\firstType
\label{tab:smallestK}
\end{table*}

 We summarize our main experimental findings as follows. 
 First, parameterized local search can be used successfully as a post-processing for state-of-the-art heuristics for \GGC, in many cases leading to new record solutions for~$c>2$ (see~\Cref{tab:3,tab:4}). 
 Second, the number of instances where an improvement was found is larger for LS than for the ILP approaches. 
 Third, to find improved solutions, it is frequently necessary to explore $k$-flip neighborhoods for larger values of~$k$ (see~\Cref{tab:largestK,tab:smallestK}). 
 Finally, this can be done within an acceptable amount of time by using our algorithm for \LGGClong and our speed-up strategies. 

\section{Conclusion}
 
In this work we analyzed~\LGGC from both a theoretical and practical point of view.
Form a negative point of view, we showed that both the strict and the permissive version of~\LGGC cannot be solved in $f(k) \cdot n^{o(k)}$~time for any computable function~$f$, unless the ETH fails.
From a positive point of view, we presented an algorithm that solves these problems in $\Delta^{\Oh(k)} \cdot n^{\Oh(1)}$~time.
Moreover, we implemented this algorithm and evaluated its performance as a post-processing for a state-of-the-art heuristic for \GGC.
Our experimental findings indicate that parameterized local search might be a promising technique in the design of local search algorithms and that its usefulness should be explored for further hard problems, in particular as post-processing for state-of-the-art heuristics to improve already good solutions.

\subparagraph{Open questions.}
Our results in this work leave several questions open and give raise to new research directions.
From a practical point of view, it would be interesting to consider a combined implementation of the MoH algorithm with our hill-climbing algorithm based on the~$k$-flip neighborhood.
In such a combined implementation, one could for example consider two different time limits~$t_1$ and~$t_2$: 
First, until time limit~$t_1$ is reached, we let the MoH-algorithm run.
Afterwards, we take the best found solution by MoH as starting solution for our hill-climbing algorithm which we then run until time limit~$t_2$ is reached.
It would be interesting to analyze the final solution quality with respect to these two time limits.
In other words, it would be interesting to analyze when switching from MoH to our hill-climbing algorithm is promising.
Such an approach could also be considered with respect to combinations of other state-of-the-art heuristics for~\GGC.
For example for~$c=2$, that is, for~\MC, one could consider analyzing the usefulness of our hill-climbing algorithm as a post-processing algorithm for algorithms like TS-UBQP~\cite{KHLWG13} or TSHEA~\cite{WWL15}.

Regarding the practical evaluation of parameterized local search algorithms, we believe that some of the techniques introduced in this chapter might find successful applications also for other problems.
In particular, the technique we introduced to prevent redundantly checking candidates (see~\Cref{lem:redundantFlips}) seems promising: 
We can ignore candidates that contain a vertex~$v$ for which no vertex of distance~$\Oh(k)$ has changed since the last time we verified that~$v$ is not contained in any improving candidate.
This technique was already successfully adapted in a local search solver for~\textsc{Weighted Vertex Cover}~\cite{Ull23} and we believe that it might find successful application for local search versions of problems like~\textsc{Max SAT}~\cite{Szei11} or \textsc{Hitting Set}.

From a theoretical point of view, one could ask for a smaller parameter in the basis of the worst-case running time.
For example, one may ask whether we can replace the maximum degree in the basis by the~$h$-index of the input graph, that is, if we can solve~\GGC in $h^{\Oh(k)} \cdot n^{\Oh(1)}$~time.
In the aim of developing an algorithm with such a running time, one could for example branch into all possible ways to flip up to~$k$ high-degree vertices.
For each such branch, one then needs to find a coloring that improves over the initial coloring that only flips low-degree vertices.
This would then necessitate solving a gap-version of~\GGC similar to the one introduced by Komusiewicz and Morawietz~\cite{KM22} for the local search version of~\textsc{Vertex Cover}.

Based on the~\W1-hardness results for~\textsc{LS Min Bisection} and~\textsc{LS Max Bisection} with respect to the search radius~$k$, one might also consider revisiting the parameterized complexity of these problems with respect to~$k$ plus some additional parameter~$\ell$.
So far, the only known algorithm for~\textsc{LS Min Bisection} and~\textsc{LS Max Bisection} run in FPT-time with respect to~$k$ on graphs on bounded local treewidth~\cite{FFL+12}. 
These algorithms imply that~\textsc{LS Min Bisection} and~\textsc{LS Max Bisection} can be solved in $2^{\Delta^{\Oh(k)}}\cdot n^{\Oh(1)}$~time, but the existence of algorithms for these problems that run in $\Delta^{\Oh(k)} \cdot n^{\Oh(1)}$~time are open.
Based on the restriction that each part of the partition is equally-sized in any solution of these problems, one might need to consider candidates to flip that are not connected in the input graph to achieve a better solution.

\section*{Acknowledgements}
Nils Morawietz was supported by the Deutsche Forschungsgemeinschaft (DFG), project OPERAH, KO~3669/5-1.

\section*{Contribution Statement}
The research for this article was done while all authors were members of Philipps-Universität Marburg, Department of Mathematics and Computer Science.

\bibliographystyle{plain}
\bibliography{my_bib}

\providecommand{\noopsort}[1]{}
\begin{thebibliography}{10}

\bibitem{ADR21}
David Arbour, Drew Dimmery, and Anup~B. Rao.
\newblock {Efficient Balanced Treatment Assignments for Experimentation}.
\newblock In {\em {Proceedings of the 24th International Conference on
  Artificial Intelligence and Statistics ({AISTATS}~'21)}}, volume {130} of
  {\em {Proceedings of Machine Learning Research}}, pages {3070--3078}.
  {{PMLR}}, {2021}.

\bibitem{BH13}
Una Benlic and Jin{-}Kao Hao.
\newblock {Breakout Local Search for the {M}ax-{C}ut problem}.
\newblock {\em {Engineering Applications of Artificial Intelligence}},
  {26}({3}):{1162--1173}, {2013}.

\bibitem{BK99}
Piotr Berman and Marek Karpinski.
\newblock {On Some Tighter Inapproximability Results (Extended Abstract)}.
\newblock In {\em {Proceedings of the 26th International Colloquium on
  Automata, Languages and Programming~(ICALP~'99)}}, volume 1644 of {\em
  {Lecture Notes in Computer Science}}, pages {200--209}. {Springer}, 1999.

\bibitem{BIJK19}
{\'{E}}douard Bonnet, Yoichi Iwata, Bart M.~P. Jansen, and Lukasz Kowalik.
\newblock {Fine-Grained Complexity of {$k$-OPT} in Bounded-Degree Graphs for
  Solving {TSP}}.
\newblock In {\em {Proceedings of the 27th Annual European Symposium on
  Algorithms ({ESA}~'19)}}, volume 144 of {\em {LIPIcs}}, pages {23:1--23:14}.
  {Schloss Dagstuhl - Leibniz-Zentrum f{\"{u}}r Informatik}, 2019.

\bibitem{CMA21}
Vaggos Chatziafratis, Mohammad Mahdian, and Sara Ahmadian.
\newblock {Maximizing Agreements for Ranking, Clustering and Hierarchical
  Clustering via {MAX-CUT}}.
\newblock In {\em {Proceedings of the 24th International Conference on
  Artificial Intelligence and Statistics ({AISTATS}~'21)}}, volume {130} of
  {\em {Proceedings of Machine Learning Research}}, pages {1657--1665}.
  {{PMLR}}, {2021}.

\bibitem{C+15}
Marek Cygan, Fedor~V. Fomin, Lukasz Kowalik, Daniel Lokshtanov, D{\'{a}}niel
  Marx, Marcin Pilipczuk, Michal Pilipczuk, and Saket Saurabh.
\newblock {\em {Parameterized Algorithms}}.
\newblock {Springer}, {2015}.

\bibitem{DGKW14}
Martin D{\"{o}}rnfelder, Jiong Guo, Christian Komusiewicz, and Mathias Weller.
\newblock {On the parameterized complexity of consensus clustering}.
\newblock {\em {Theoretical Computer Science}}, {542}:{71--82}, {2014}.

\bibitem{DF13}
Rodney~G. Downey and Michael~R. Fellows.
\newblock {\em {Fundamentals of Parameterized Complexity}}.
\newblock {Texts in Computer Science}. {Springer}, 2013.

\bibitem{FFL+12}
Michael~R. Fellows, Fedor~V. Fomin, Daniel Lokshtanov, Frances~A. Rosamond,
  Saket Saurabh, and Yngve Villanger.
\newblock {Local search: Is brute-force avoidable?}
\newblock {\em {Journal of Computer and System Sciences}},
  {78}({3}):{707--719}, {2012}.

\bibitem{FKP22}
Pedro Felzenszwalb, Caroline Klivans, and Alice Paul.
\newblock {Clustering with Semidefinite Programming and Fixed Point Iteration}.
\newblock {\em {Journal of Machine Learning Research}}, {23}({190}):{1--23},
  {2022}.

\bibitem{FPRR02}
Paola Festa, Panos~M. Pardalos, Mauricio G.~C. Resende, and Celso~C. Ribeiro.
\newblock {Randomized heuristics for the Max-Cut problem}.
\newblock {\em {Optimization Methods and Software}}, {17}({6}):{1033--1058},
  {2002}.

\bibitem{FJ97}
Alan~M. Frieze and Mark Jerrum.
\newblock {Improved Approximation Algorithms for {MAX} $k$-{CUT} and {MAX}
  {BISECTION}}.
\newblock {\em {Algorithmica}}, 18(1):{67--81}, 1997.

\bibitem{GJ79}
M.~R. Garey and David~S. Johnson.
\newblock {\em {Computers and Intractability: {A} Guide to the Theory of
  NP-Completeness}}.
\newblock W. H. Freeman, 1979.

\bibitem{GGJ+19}
Serge Gaspers, Joachim Gudmundsson, Mitchell Jones, Juli{\'{a}}n Mestre, and
  Stefan R{\"{u}}mmele.
\newblock {Turbocharging Treewidth Heuristics}.
\newblock {\em {Algorithmica}}, {81}({2}):{439--475}, {2019}.

\bibitem{GKO+12}
Serge Gaspers, Eun~Jung Kim, Sebastian Ordyniak, Saket Saurabh, and Stefan
  Szeider.
\newblock {Don't Be Strict in Local Search!}
\newblock In {\em {Proceedings of the 26th Annual AAAI Conference on Artificial
  Intelligence (AAAI~'12)}}. {{AAAI} Press}, {2012}.

\bibitem{GKK09}
Daya~Ram Gaur, Ramesh Krishnamurti, and Rajeev Kohli.
\newblock {Conflict Resolution in the Scheduling of Television Commercials}.
\newblock {\em {Operations Research}}, {57}({5}):{1098--1105}, {2009}.

\bibitem{GKM21}
Niels Gr{\"{u}}ttemeier, Christian Komusiewicz, and Nils Morawietz.
\newblock {Efficient {B}ayesian Network Structure Learning via Parameterized
  Local Search on Topological Orderings}.
\newblock In {\em {Proceedings of the 35th Annual AAAI Conference on Artificial
  Intelligence ({AAAI}~'21)}}, pages {12328--12335}. {{AAAI} Press}, 2021.

\bibitem{GHNS13}
Jiong Guo, Sepp Hartung, Rolf Niedermeier, and Ondrej Such{\'{y}}.
\newblock {The Parameterized Complexity of Local Search for {TSP}, More
  Refined}.
\newblock {\em {Algorithmica}}, 67(1):{89--110}, 2013.

\bibitem{GHK14}
Jiong Guo, Danny Hermelin, and Christian Komusiewicz.
\newblock {Local search for string problems: Brute-force is essentially
  optimal}.
\newblock {\em {Theoretical Computer Science}}, {525}:{30--41}, {2014}.

\bibitem{HN13}
Sepp Hartung and Rolf Niedermeier.
\newblock {Incremental list coloring of graphs, parameterized by conservation}.
\newblock {\em {Theoretical Computer Science}}, {494}:{86--98}, {2013}.

\bibitem{HLC17}
Weiran Huang, Liang Li, and Wei Chen.
\newblock {Partitioned Sampling of Public Opinions Based on Their Social
  Dynamics}.
\newblock In {\em {Proceedings of the 31st Annual {AAAI} Conference on
  Artificial Intelligence (AAAI~'17)}}, pages {24--30}. {{AAAI} Press}, {2017}.

\bibitem{JT11}
Tommy~R Jensen and Bjarne Toft.
\newblock {\em {Graph coloring problems}}, volume~{39}.
\newblock {John Wiley \& Sons}, {2011}.

\bibitem{KKLP97}
Viggo Kann, Sanjeev Khanna, Jens Lagergren, and Alessandro Panconesi.
\newblock {On the Hardness of Approximating {M}ax $k$-{C}ut and its Dual}.
\newblock {\em {Chicago Journal of Theoretical Computer Science}}, {1997}({2}),
  {1997}.

\bibitem{K72}
Richard~M. Karp.
\newblock {Reducibility among combinatorial problems}.
\newblock In {\em {Proceedings of a Symposium on the Complexity of Computer
  Computations}}, {The {IBM} Research Symposia Series}, pages {85--103}.
  {Plenum Press, New York}, {1972}.

\bibitem{KK17}
Maximilian Katzmann and Christian Komusiewicz.
\newblock {Systematic Exploration of Larger Local Search Neighborhoods for the
  Minimum Vertex Cover Problem}.
\newblock In {\em {Proceedings of the 31st Annual {AAAI} Conference on
  Artificial Intelligence (AAAI~'17)}}, pages {846--852}. {{AAAI} Press},
  {2017}.

\bibitem{KT06}
Jon~M. Kleinberg and {\'{E}}va Tardos.
\newblock {\em {Algorithm design}}.
\newblock {Addison-Wesley}, {2006}.

\bibitem{KHLWG13}
Gary~A. Kochenberger, Jin{-}Kao Hao, Zhipeng L{\"{u}}, Haibo Wang, and Fred~W.
  Glover.
\newblock {Solving large scale Max Cut problems via tabu search}.
\newblock {\em {Journal of Heuristics}}, 19(4):565--571, 2013.

\bibitem{KLMS23}
Christian Komusiewicz, Simone Linz, Nils Morawietz, and Jannik Schestag.
\newblock {On the Complexity of Parameterized Local Search for the Maximum
  Parsimony Problem}.
\newblock In {\em {Proceedings of the 34th Annual Symposium on Combinatorial
  Pattern Matching ({CPM}~'23)}}, volume {259} of {\em {LIPIcs}}, pages
  {18:1--18:18}. {Schloss Dagstuhl - Leibniz-Zentrum f{\"{u}}r Informatik},
  {2023}.

\bibitem{KM22}
Christian Komusiewicz and Nils Morawietz.
\newblock {Parameterized Local Search for Vertex Cover: When Only the Search
  Radius Is Crucial}.
\newblock In {\em {Proceedings of the 17th International Symposium on
  Parameterized and Exact Computation ({IPEC}~'22)}}, volume {249} of {\em
  {LIPIcs}}, pages {20:1--20:18}. {Schloss Dagstuhl - Leibniz-Zentrum f{\"{u}}r
  Informatik}, {2022}.

\bibitem{KS21}
Christian Komusiewicz and Frank Sommer.
\newblock {Enumerating connected induced subgraphs: Improved delay and
  experimental comparison}.
\newblock {\em {Discrete Applied Mathematics}}, 303:{262--282}, 2021.

\bibitem{KS15}
Christian Komusiewicz and Manuel Sorge.
\newblock {An algorithmic framework for fixed-cardinality optimization in
  sparse graphs applied to dense subgraph problems}.
\newblock {\em {Discrete Applied Mathematics}}, {193}:{145--161}, {2015}.

\bibitem{LTCH11}
Mark D.~M. Leiserson, Diana Tatar, Lenore~J. Cowen, and Benjamin~J. Hescott.
\newblock {Inferring Mechanisms of Compensation from {E-MAP} and {SGA} Data
  Using Local Search Algorithms for Max Cut}.
\newblock {\em {Journal of Computational Biology}}, {18}({11}):{1399--1409},
  {2011}.

\bibitem{MH17}
Fuda Ma and Jin{-}Kao Hao.
\newblock {A multiple search operator heuristic for the max-k-cut problem}.
\newblock {\em {Annals of Operations Research}}, {248}({1-2}):{365--403},
  {2017}.

\bibitem{Marx08}
D{\'{a}}niel Marx.
\newblock {Searching the $k$-change neighborhood for {TSP} is {W}[1]-hard}.
\newblock {\em {Operations Research Letters}}, {36}({1}):{31--36}, {2008}.

\bibitem{PY91}
Christos~H. Papadimitriou and Mihalis Yannakakis.
\newblock {Optimization, Approximation, and Complexity Classes}.
\newblock {\em {Journal of Computer and System Sciences}},
  {43}({3}):{425--440}, {1991}.

\bibitem{SY91}
Alejandro~A. Sch{\"{a}}ffer and Mihalis Yannakakis.
\newblock {Simple Local Search Problems That are Hard to Solve}.
\newblock {\em {{SIAM} Journal on Computing}}, {20}({1}):{56--87}, {1991}.

\bibitem{S+15}
VP~Shylo, F~Glover, and IV~Sergienko.
\newblock {Teams of global equilibrium search algorithms for solving the
  weighted maximum cut problem in parallel}.
\newblock {\em {Cybernetics and Systems Analysis}}, {51}({1}):{16--24}, {2015}.

\bibitem{SGDC08}
Anand~Prabhu Subramanian, Himanshu Gupta, Samir~R. Das, and Jing Cao.
\newblock {Minimum Interference Channel Assignment in Multiradio Wireless Mesh
  Networks}.
\newblock {\em {{IEEE} Transactions on Mobile Computing}},
  {7}({12}):{1459--1473}, {2008}.

\bibitem{Szei11}
Stefan Szeider.
\newblock {The parameterized complexity of $k$-flip local search for {SAT} and
  {MAX} {SAT}}.
\newblock {\em {Discrete Optimization}}, {8}({1}):{139--145}, {2011}.

\bibitem{Ull23}
Felix Ullmann.
\newblock {Engineering a Local Search Solver for Weighted Vertex Cover}.
\newblock Bachelor's thesis, Philipps-Universität Marburg, 2023.

\bibitem{VL03}
Tjark Vredeveld and Jan~Karel Lenstra.
\newblock {On local search for the generalized graph coloring problem}.
\newblock {\em {Operations Research Letters}}, 31(1):{28--34}, 2003.

\bibitem{SOA99}
Ji\v{r}\'{\i} \v{S}\'{\i}ma, Pekka Orponen, and Teemu Antti{-}Poika.
\newblock {Some Afterthoughts on {H}opfield Networks}.
\newblock In {\em {Proceedings of the 26th Conference on Current Trends in
  Theory and Practice of Informatics ({SOFSEM}~'99)}}, volume 1725 of {\em
  {Lecture Notes in Computer Science}}, pages {459--469}. {Springer}, 1999.

\bibitem{Wang06}
Jiahai Wang.
\newblock {An improved discrete {H}opfield neural network for {M}ax-{C}ut
  problems}.
\newblock {\em {Neurocomputing}}, {69}({13-15}):{1665--1669}, {2006}.

\bibitem{Wang13}
Yang Wang, Zhipeng L{\"u}, Fred Glover, and Jin-Kao Hao.
\newblock {Probabilistic {GRASP}-{T}abu search algorithms for the {UBQP}
  problem}.
\newblock {\em {Computers \& Operations Research}}, {40}({12}):{3100--3107},
  {2013}.

\bibitem{WWL15}
Qinghua Wu, Yang Wang, and Zhipeng L{\"{u}}.
\newblock {A tabu search based hybrid evolutionary algorithm for the max-cut
  problem}.
\newblock {\em {Applied Soft Computing}}, 34:827--837, 2015.

\bibitem{ZLA13}
Wenxing Zhu, Geng Lin, and M.~Montaz Ali.
\newblock {Max-\emph{k}-Cut by the Discrete Dynamic Convexized Method}.
\newblock {\em {{INFORMS} Journal on Computing}}, {25}({1}):{27--40}, {2013}.

\end{thebibliography}
\end{document}

\iflong

\newpage

\begin{table*}
\caption{The solutions of the best found~$c$-coloring for any of the G-set graphs for~$c=2$.
The column MOH shows the value of the published solutions of~Ma~and~Hao~\cite{MH17}. 
The column LS shows the best solution our hill-climbing algorithm found.
The column ILP shows the best solution of any of the two  ILP-runs.
Finally, column UB shows the best upper bound for any solution of the corresponding graph our ILP was able to find.
If there is no number in the LS or ILP column for an instance, then the corresponding algorithm was not able to find a better solution than the published solutions of~Ma~and~Hao~\cite{MH17}. 
If the UB entry is bold for some instance, then some found solution matches this upper bound, that is, the optimality of the best found solution was verified.}
\centering
{\tiny
\tabTwo
}
\label{tab:2}
\end{table*}

\begin{table*}
\caption{The solutions of the best found~$c$-coloring for any of the G-set graphs for~$c=3$.}
\centering
{\fontsize{7}{7} \selectfont
\tabThree
}
\label{tab:3}
\end{table*}

\begin{table*}
\caption{The solutions of the best found~$c$-coloring for any of the G-set graphs for~$c=4$.}
\centering
{\fontsize{7}{7} \selectfont
\tabFour
}
\label{tab:4}
\vspace{10pt}
\end{table*}

\begin{table*}
\caption{For each instance, the largest values of~$k$ for which an improving flip was found.}
\centering
\largestType
\label{tab:largestK}
\end{table*}

\begin{table*}
\caption{The values of~$k$ for which the first improving flip was found.}
\centering
\firstType
\label{tab:smallestK}
\end{table*}

\fi
\end{document}